\documentclass[12pt,letterpaper]{article}
\usepackage{amsmath,amssymb,amsthm}
\usepackage{graphicx}
\graphicspath{{figures/}}
\usepackage[authoryear,round,semicolon]{natbib}
\bibpunct{(}{)}{;}{a}{}{,}
\usepackage{url}
\usepackage{algorithm}
\usepackage{microtype}
\usepackage{algpseudocode}

\def\spacingset#1{\renewcommand{\baselinestretch}{#1}\small\normalsize}
\spacingset{1}

\theoremstyle{plain}
\newtheorem{theorem}{Theorem}
\newtheorem{lemma}{Lemma}

\newtheorem{proposition}{Proposition}
\theoremstyle{definition}
\newtheorem{definition}{Definition}

\theoremstyle{remark}

\newcommand{\E}{\mathbb{E}}

\begin{document}
\date{}

\bigskip\bigskip\bigskip
\begin{center}
{\LARGE\bf Information--Computation Inversion in Pseudo-Marginal MCMC}
\medskip

Zihan Xu\\
School of Mathematics and Statistics, Qingdao University\\
Qingdao, Shandong 266071, China\\
\url{haniizihanxu@gmail.com}
\end{center}
\medskip

\begin{abstract}
\normalsize
Observation refinement changes posterior uncertainty and the stochastic
likelihood calculation in pseudo-marginal MCMC. We study their joint
effect on finite-run posterior-functional risk. A retained-state bound
localizes computational error to discrepant high-weight states. Within a
common bootstrap construction, we derive an exact one-particle risk formula
and a sufficient reversal condition for every fixed particle count.
Finer observations can then reduce posterior uncertainty while increasing
total squared-error risk. For posterior events, we allocate auxiliary
computation by coupling cross-event proposals and refreshing same-event
proposals independently. A swap identity establishes invariance. A
continuation-risk identity describes within-event updates and joint
state--cost laws. For one paired transcription record, finer observations
give lower conditional total risk. Doubling the particle count raises
computational MSE at a fixed CPU budget. A separate
96-state gene-network comparison gives a 21.1\% reduction in event
mean-squared error at a prespecified 25-second budget. A common
functional-risk criterion connects observation refinement and auxiliary allocation.
\end{abstract}

\noindent{\it Keywords:} particle filter; posterior functional; finite-horizon risk; stochastic reaction network; auxiliary-variable coupling
\vfill
\newpage
\spacingset{1}

\section{Introduction}
\label{sec:introduction}

An observation changes both posterior uncertainty and the computation
used to estimate a posterior quantity. In latent stochastic models,
refining the data also changes the distribution of the stochastic
likelihood estimate. We study the resulting squared-error risk of a
finite-run posterior-functional estimate.

Stochastic transcription gives a concrete example. Population snapshots,
transcript capture, allele resolution, and new-RNA labeling reveal
different aspects of transcription kinetics
\citep{gomezschiavon2017,tiberi2018,tang2023,ramskold2024,gu2025}.
The sampling scheme and observation model also affect Fisher information
\citep{komorowski2011,foxmunsky2019}. Informative observations can make
forward particle simulation inefficient \citep{golightly2014mjp}.
The observation is useful for inference when its statistical information
can be recovered within the available computation.

Pseudo-marginal MCMC retains a nonnegative unbiased likelihood estimate
as part of an extended Markov state. The invariant parameter marginal is
the posterior \citep{andrieu2009pseudo,andrieu2010pmcmc}. Estimator noise
and particle effort affect efficiency \citep{doucet2015efficient,sherlock2015}.
Weight moments govern convergence regimes \citep{andrieu2015convergence},
and convex-order comparisons order acceptance and asymptotic variance
under a common proposal \citep{andrieu2015convex}. Correlation and blocking introduce dependence between successive
likelihood estimates
\citep{tran2017block,deligiannidis2018cpm,golightly2019cpmskm}.
Weak-Poincar\'e results connect weight tails to convergence and computation
\citep{andrieu2022comparison,andrieu2026wpi}; function-specific mixing
theory studies recovery of individual expectations \citep{rabinovich2020}.
Recent work also develops robust tuning, adaptive particle effort, and
estimator failure control
\citep{sherlock2024variance,abaoubida2025adaptive,sherlock2026frankenfilter}.

We study two connected questions: how observation refinement changes
finite-run risk, and how to allocate computation for a specified posterior
event. For the first question, we separate posterior uncertainty from
computational mean-squared error. A common-proposal identity connects fine-observation
resolution to the variance of latent importance weights. An exact bootstrap
construction then gives a finite-run risk reversal within one latent model,
using the same parameter proposal in both observation channels. For every
fixed particle count, a finite observation-resolution choice suffices.
A general acceptance bound localizes the retained-state obstruction to the
chosen functional. The construction specifies the initialization and
iteration horizon; a separate risk definition treats the joint law of
states and computation times.

For the second question, we construct an event-based auxiliary allocation.
Cross-event proposals use coupled inheritance; same-event proposals use
independent refreshment. A retained-residual tilt controls the parameter
proposal. Each route satisfies an auxiliary exchange identity, which
preserves the posterior target.
The common cross-event transition law preserves the next event value
in distribution. A finite-run identity then expresses the effect of
within-event updates through the mean and second moment of the remaining
event sum. Its two-step form identifies the crossing probability at the
next full state as the relevant quantity. A stopped continuation identity
compares the completed event averages at a fixed computational budget.
It includes initialization and the dependence of cost on the next state.
The resulting comparison concerns event error at a common computational
budget. Within a fixed observation, this is also the difference in total
inferential risk.

Paired transcription experiments evaluate a single posterior event
under allele-specific observations and their total-count coarsening.
Independent algorithmic repetitions separate posterior uncertainty from
finite-budget computational error. A gene-network experiment then compares
auxiliary allocations for a fixed posterior event on a 96-state bank.
Predator--prey diagnostics and additional observation contrasts are given
in the Supplementary Material.

\section{Posterior functionals and finite-run risk}
\label{sec:controlled-realizability}

\subsection{Posterior functionals and the pseudo-marginal state}
\label{subsec:posterior-functional-pm-state}

Let \(y\) denote the observed data, \(p(\theta)\) a prior density, and
\(L(\theta)=p(y\mid\theta)\) the likelihood. The posterior is
\[
\pi(\theta)
=
\frac{p(\theta)L(\theta)}
{\int p(\vartheta)L(\vartheta)\,d\vartheta}.
\]
The inferential target in this paper is a scalar posterior functional
\[
\Psi
=
\pi(\psi)
=
\int \psi(\theta)\pi(\theta)\,d\theta,
\]
for a measurable function \(\psi\). Posterior probabilities correspond
to indicator choices of \(\psi\).

Suppose that \(U\sim M_\theta\) generates a nonnegative unbiased
likelihood estimate,
\[
\widehat L(\theta,U)\ge0,
\qquad
\int
\widehat L(\theta,u)M_\theta(du)
=
L(\theta).
\]
Pseudo-marginal Metropolis--Hastings targets the extended distribution
\[
\bar\pi(d\theta,du)
\propto
p(\theta)
\widehat L(\theta,u)
M_\theta(du)\,d\theta,
\]
whose parameter marginal is \(\pi\)
\citep{andrieu2009pseudo,andrieu2010pmcmc}. The auxiliary variable is
part of the Markov state. A rejected proposal therefore retains the
current likelihood estimate as well as the current parameter.

When \(L(\theta)>0\), write
\[
W
=
\frac{\widehat L(\theta,U)}{L(\theta)}
\]
and let \(Q_\theta\) denote its law under \(M_\theta\). Then
\[
\int w\,Q_\theta(dw)=1,
\]
while the conditional law of the retained weight under the extended
target is \(wQ_\theta(dw)\). Section~\ref{sec:information-computation-inversion}
uses this retained law to study fixed-horizon functional risk.

\subsection{Fixed-horizon functional recovery}
\label{subsec:fixed-horizon-functional-recovery}

For a finite run, accuracy depends on the initial state and the number
of transitions. We measure it by the squared error of the reported
posterior-functional estimate.

\begin{definition}[Fixed-horizon functional risk]
\label{def:fixed-horizon-functional-risk}
Let \(K\) be a Markov kernel on an extended state space with parameter
component \(\Theta_t\), and let \(\nu\) be its initial distribution. For
\(B\ge1\), define
\[
\widehat\Psi_B
=
\frac1B
\sum_{t=1}^{B}
\psi(\Theta_t)
\]
and
\[
\mathcal R_B(K,\nu;\psi)
=
\mathbb E_{\nu,K}
\left[
(\widehat\Psi_B-\Psi)^2
\right].
\]
When the chain starts from its pseudo-marginal extended target and the
kernel is clear from context, we write \(\mathcal R_B(\psi)\).
\end{definition}

The transition horizon \(B\) is part of the inferential question.
Particle-filter work and wall-clock time depend on the transition and
are reported separately in the empirical comparisons.

The transcription study evaluates the posterior event probability by
finite-state likelihood integration. The gene-network study uses an
independent importance-integration reference and propagates its Monte
Carlo uncertainty. The experimental sections specify both references and
their numerical checks.

\subsection{Functional risk at a computational budget}
\label{subsec:budget-risk}

For an event $h$, set $\Psi=\pi(h)$ and write $h(X)=h(\theta)$.
Let $C_0$ be the charged initialization cost and $C_j\geq0$ the cost of
transition $j$. Suppose $T_n=C_0+\sum_{j=1}^n C_j\to\infty$ almost surely.
At budget $t$, define
\[
 N_K(t)=\max\bigl(\{0\}\cup\{n\geq1:T_n\leq t\}\bigr),\qquad
 \widehat\Psi_{K,t}=\frac1{N_K(t)}\sum_{j=1}^{N_K(t)}h(X_j)
\]
when $N_K(t)>0$, and set $\widehat\Psi_{K,t}=h(X_0)$ otherwise.
The budget risk is
\[
 \mathcal R^{\rm time}_t(K,\nu;h)
 =\E_{\nu,K}[\{\widehat\Psi_{K,t}-\Psi\}^2].
\]
Its expectation includes the joint law of the states and their costs.
Completion counts can depend on the path \citep{murray2021anytime}.
We evaluate this risk from completed prefixes at fixed checkpoints.
The Supplementary Material gives its joint state--cost recursion and
the timing convention used in the experiment.

\subsection{Observation information and total risk}
\label{subsec:observation-risk}

Let $F$ be a fine observation and $C=T(F)$ its coarsening. For
$H=\psi(\Theta)$ with $\E H^2<\infty$, write
$p_Y=\E(H\mid Y)$ and $v_Y=\operatorname{Var}(H\mid Y)$,
where $Y\in\{C,F\}$. A procedure returns $A_Y=a_Y(Y,U_Y)$.
Its randomization $U_Y$ is independent of $(\Theta,F)$; its specified
initialization, tuning, and stopping rule are included in $a_Y$.
Assume $\E A_Y^2<\infty$. Set
$e_Y(y)=\E\{(A_Y-p_Y(y))^2\mid Y=y\}$.
Conditional squared-error decomposition gives
\begin{equation}
 j_Y(y):=\E\{(A_Y-H)^2\mid Y=y\}=v_Y(y)+e_Y(y).
 \label{eq:obs-own-risk}
\end{equation}
The computational term includes initialization bias and sampling variance.
Under prior predictive averaging, write $V_Y=\E v_Y$ and
$J_Y=\E j_Y(Y)$. Then
\begin{equation}
 J_F-J_C=\E e_F(F)-\E e_C(C)-G,
 \qquad G=\E(p_F-p_C)^2\geq0.
 \label{eq:obs-integrated-risk}
\end{equation}
The tower property gives $G=\E v_C-\E v_F$.
Thus the statistical gain and the computational excess enter one loss.

For a fixed pair $(f,c)$, a common full-observation evaluation conditions
both procedures on $F=f$. Put $b_C(c)=\E(A_C\mid C=c)-p_C(c)$ and
$\delta=p_C(c)-p_F(f)$. The corresponding risk difference is
\begin{equation}
 r_F(f)-r_C(f)=e_F(f)-e_C(c)-\delta^2-2\delta b_C(c).
 \label{eq:obs-common-risk}
\end{equation}
The coarse procedure still receives $c$. Equations
\eqref{eq:obs-own-risk} and \eqref{eq:obs-common-risk} specify the two
conditional comparisons; their prior predictive averages agree.
Complete derivations are in the Supplementary Material.

At a fixed observation, algorithms with the same posterior share $v_Y$.
Their total-risk difference equals their computational-MSE difference.
This connects observation choice to the auxiliary-allocation criterion
in Section~\ref{sec:route-allocation}.

\subsection{A controlled transcription observation contrast}
\label{subsec:controlled-model}

We use a two-allele transcription model to change the observation map
while preserving the latent mechanism. Each allele follows a
two-state promoter model of stochastic transcription
\citep{tiberi2018}. The latent state records the promoter indicator and
transcript count of each allele. The two inferred coordinates are log
multipliers of the allele-2 activation and synthesis rates. Complete
reaction intensities, rates, initial conditions, and generating
perturbations are given in the Supplementary Material.

At observation time \(t_i\), each latent transcript is captured
independently with probability \(p_{\rm cap}\). Binomial thinning is a
standard model for incomplete transcript capture
\citep{tang2023}. The allele-specific record retains both thinned
counts; the total-count record retains their sum. Total count is
therefore a deterministic coarsening. Both regimes use 20 unit-spaced
observations and capture probability \(0.6\). Within each replicate,
they share the latent trajectory and allele-level capture draws.

Figure~\ref{fig:model-overview} summarizes the matched observation
construction.

\begin{figure}
\centering
\includegraphics[width=\textwidth]
{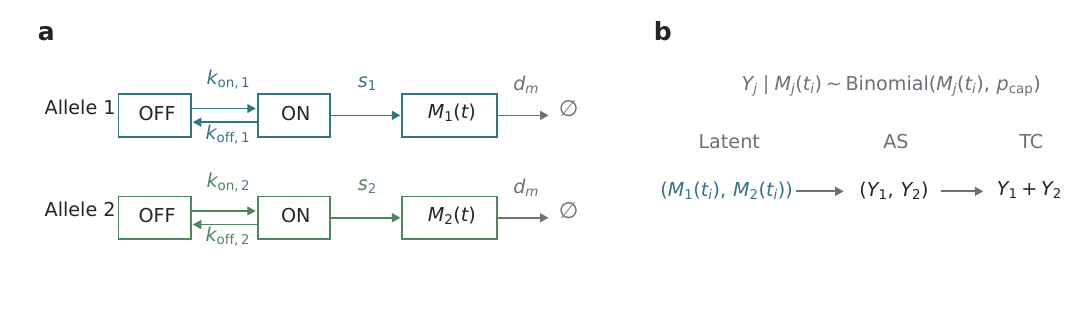}
\caption{Controlled transcription model and matched observation
contrast. Each allele switches between inactive and active promoter
states and produces transcripts while active. Allele-specific (AS)
observations retain the two captured counts, whereas total-count (TC)
observations retain only their sum. Within a matched replicate, the
latent trajectory, observation times, capture probability, and
allele-level capture draws are shared.}
\label{fig:model-overview}
\end{figure}

\subsection{Reference and stochastic likelihood layers}
\label{subsec:reference-pseudomarginal}

For the controlled transcription comparison, the two prior coordinates
are independent truncated Gaussians with mean zero, standard deviation
\(0.75\), and support \([-4,4]\). A deterministic finite-state likelihood
calculation supplies posterior-event reference integrals. Transitions that would increase a transcript count beyond the finite
state domain are blocked. The generator preserves probability mass. The
finite generator is propagated between observation times by matrix
exponentiation, and the observation likelihood is applied after each
propagation. The maximum probability carried on the truncation boundary
serves as a numerical truncation diagnostic. Finite-rate-matrix exponentiation also underlies the exact-observation
methods of \citet{sherlock2022mjp}.

The truncation schedule, generator checks, and boundary-mass diagnostics
for these finite-state reference evaluations are reported in the
Supplementary Material. They document the numerical reference used for
the posterior-event calculations and additional likelihood contrasts.

Pseudo-marginal computation uses a particle-filter estimate
\(\widehat L(\theta,U)\) of the observation likelihood.
Its distribution varies with parameter location and observation regime.
The next section describes how that variation enters finite-run risk.

\section{Observation refinement and finite-run error}
\label{sec:information-computation-inversion}

\subsection{A retained-state risk bound}
\label{subsec:functional-risk-obstruction}

Consider a broad class of fresh-estimator pseudo-marginal transitions.
Let \((\Theta,\mathcal B,\mu)\) be a \(\sigma\)-finite parameter space,
and let \(\pi\) be a probability density with respect to \(\mu\). Write
\[
\Theta_+
=
\{\theta\in\Theta:0<\pi(\theta)<\infty\}.
\]
For each \(\theta\), let \(Q_\theta\) be a measurable probability
kernel on \(\mathbb R_+\) satisfying
\[
\int w\,Q_\theta(dw)=1.
\]
Let \((\theta,w,\theta')\mapsto q_w(\theta,\theta')\) be jointly
measurable and, for every \((\theta,w)\), let
\(q_w(\theta,\cdot)\) be a probability density with respect to
\(\mu\). The pseudo-marginal extended target is
\[
\widetilde\pi(d\theta,dw)
=
\pi(\theta)wQ_\theta(dw)\,\mu(d\theta).
\]

From a retained state \((\theta,w)\), with
\(\theta\in\Theta_+\) and \(w>0\), propose
\[
\theta'
\sim
q_w(\theta,\theta')\,\mu(d\theta'),
\qquad
w'
\sim
Q_{\theta'}.
\]
The parameter proposal may depend on the retained weight. The proposed
weight is drawn afresh at \(\theta'\). Define the measurable
extended-valued function
\[
H(\theta)
=
\frac{1}{\pi(\theta)}
\int
\pi(\theta')
\left\{
\int
w'q_{w'}(\theta',\theta)Q_{\theta'}(dw')
\right\}
\mu(d\theta'),
\]
from \(\Theta_+\) to \([0,\infty]\).

\begin{theorem}[Retained-state finite-horizon functional obstruction]
\label{thm:functional-risk-obstruction}
Fix \(\theta\in\Theta_+\) and \(w>0\) with \(H(\theta)<\infty\), and
let \(A(\theta,w)\) be the total one-step acceptance probability from
\((\theta,w)\). Then
\[
A(\theta,w)
\le
\min
\left\{
1,\frac{H(\theta)}{w}
\right\}.
\]

Let \(\psi\in L^2(\pi)\),
\[
\Psi
=
\int
\psi(\theta)\pi(\theta)\,\mu(d\theta),
\qquad
\widehat\Psi_B
=
\frac1B
\sum_{t=1}^{B}\psi(\Theta_t),
\qquad
B\ge1.
\]
For the chain initialized at any state satisfying the preceding
conditions and \(|\psi(\theta)|<\infty\),
\begin{equation}
\label{eq:functional-risk-conditional}
\mathbb E_{(\theta,w)}
\left[
(\widehat\Psi_B-\Psi)^2
\right]
\ge
\{\psi(\theta)-\Psi\}^2
\left(
1-\frac{H(\theta)}{w}
\right)_+^B.
\end{equation}

If the chain starts from \(\widetilde\pi\), for \(\delta>0\) and
\(0\le C<\infty\), define
\[
D_{\delta,C}
=
\left\{
\theta\in\Theta_+:
|\psi(\theta)-\Psi|\ge\delta,\,
H(\theta)\le C
\right\}.
\]
Then, for every \(M>C\),
\begin{equation}
\label{eq:functional-risk-set}
\mathcal R_B(\psi)
\ge
\delta^2
\left(
1-\frac{C}{M}
\right)^B
\widetilde\pi
\left(
D_{\delta,C}\times[M,\infty)
\right).
\end{equation}
\end{theorem}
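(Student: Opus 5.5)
The plan has three steps: bound the one-step acceptance probability, iterate that bound along the event that the chain never moves, and then average the resulting pointwise inequality over the stationary law.

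\textbf{Step 1: the acceptance bound.} From a retained state $(\theta,w)$ with proposal $\theta'\sim q_w(\theta,\cdot)$ and $w'\sim Q_{\theta'}$, the Metropolis--Hastings acceptance probability is
\[
\alpha=\min\Bigl\{1,\frac{\pi(\theta')\,w'\,q_{w'}(\theta',\theta)}{\pi(\theta)\,w\,q_w(\theta,\theta')}\Bigr\}.
\]
Integrate $\alpha$ against $q_w(\theta,\theta')\,Q_{\theta'}(dw')\,\mu(d\theta')$ and apply $\min\{1,a\}\le a$ on the set where $q_w(\theta,\theta')>0$. The factor $q_w(\theta,\theta')$ then cancels, giving
\[
A(\theta,w)\le\frac{1}{\pi(\theta)\,w}\int\pi(\theta')\int w'\,q_{w'}(\theta',\theta)\,Q_{\theta'}(dw')\,\mu(d\theta')=\frac{H(\theta)}{w}.
\]
Points with $q_w(\theta,\theta')=0$ are never proposed, so they contribute nothing. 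Since also $A\le1$, the first claim follows. The only care needed is measurability and the convention for the ratio where densities vanish; I would dispose of this by noting that the integrand is nonnegative and Tonelli applies.

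\textbf{Step 2: the conditional bound \eqref{eq:functional-risk-conditional}.} Let $E$ be the event that the first $B$ proposals are all rejected. On $E$, the state stays at $(\theta,w)$, so $\Theta_t=\theta$ for $t=1,\dots,B$ and $\widehat\Psi_B=\psi(\theta)$. Because $E$ requires $B$ successive rejections from the same state, the Markov property gives
\[
P(E)=\{1-A(\theta,w)\}^B\ge\bigl(1-H(\theta)/w\bigr)_+^B.
\]
Restricting the expectation to $E$ yields \eqref{eq:functional-risk-conditional}. Finiteness of $\psi(\theta)$ makes the right-hand side well defined.

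\textbf{Step 3: the stationary bound \eqref{eq:functional-risk-set}.} Starting from $\widetilde\pi$, condition on the initial state and use $\mathcal R_B(\psi)=\int\mathbb E_{(\theta,w)}[(\widehat\Psi_B-\Psi)^2]\,\widetilde\pi(d\theta,dw)$. Then restrict the integral to $D_{\delta,C}\times[M,\infty)$. On this set:
\begin{itemize}
\item $\theta\in\Theta_+$, $w\ge M>0$, and $H(\theta)\le C<\infty$, so Step 2 applies;
\item $\{\psi(\theta)-\Psi\}^2\ge\delta^2$;
\item $1-H(\theta)/w\ge 1-C/M>0$, so the bound $\bigl(1-H(\theta)/w\bigr)_+^B\ge(1-C/M)^B$ holds.
\end{itemize}
The only technical point is that $|\psi(\theta)|<\infty$ is needed for Step 2. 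Since $\psi\in L^2(\pi)$, this holds for $\pi$-a.e. $\theta$, so the exceptional set is $\widetilde\pi$-null and does not affect the integral.

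\textbf{Main obstacle.} I expect Step 1 to be the only delicate point: making the cancellation of the proposal density rigorous when $q$ depends on $w$, and confirming that the reverse density $q_{w'}(\theta',\theta)$ is exactly what appears in $H$. Steps 2 and 3 are routine.
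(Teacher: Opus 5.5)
Your proposal is correct and follows the paper's proof essentially step for step: bound $1\wedge r\le r$ on the forward support, cancel the forward proposal density to obtain $H(\theta)/w$, use that a rejection retains the complete extended state so that the no-acceptance event has probability $\{1-A(\theta,w)\}^B$, and integrate the conditional bound over $\widetilde\pi$ restricted to $D_{\delta,C}\times[M,\infty)$. One point in Step 1 is worth stating explicitly: after cancellation the integral runs only over $\{q_w(\theta,\cdot)>0\}$, and enlarging it to all of $\Theta$ relies on nonnegativity of the reverse-inbound integrand, which is why no mutual-support assumption is needed.
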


\noindent\textit{Proof sketch.}\quad
Applying \(1\wedge r\le r\) to the integrated Metropolis--Hastings
acceptance probability leaves a factor \(1/w\). The forward proposal
cancels on its support, and the remaining reverse inbound term is
bounded by \(H(\theta)\). If no proposal is accepted, the complete
extended state is retained, so the first \(B\) functional evaluations
all equal \(\psi(\theta)\). The full support and measure calculation is
given in the Supplementary Material.

For fixed $\theta$, the acceptance bound decreases as the retained weight
increases. Unbounded weight laws can rule out geometric ergodicity
\citep{andrieu2009pseudo,andrieu2015convergence,
andrieu2022comparison,andrieu2026wpi}.
Theorem~\ref{thm:functional-risk-obstruction} gives the corresponding
fixed-horizon risk floor for a specified posterior functional. The lower
bound is driven by states that combine functional
discrepancy, high retained weight, and bounded reverse inbound mass.

\subsection{Observation resolution and latent importance weights}
\label{subsec:observation-weights}

Fix a parameter and a coarse outcome $c$. Let $g_f$ be the observation
probability for a compatible fine outcome $f$, and let $\mu$ be the
latent-state law. Then $g_c=\sum_{f:T(f)=c}g_f$. Write
$L_f=\int g_f\,d\mu$, $L_c=\sum_f L_f>0$, and $\omega_f=L_f/L_c$,
omitting zero-likelihood outcomes. Use one proposal probability $\xi$
with $g_c\mu\ll\xi$, and define
$r_f=d(g_f\mu)/(L_f\,d\xi)$ and $r_c=d(g_c\mu)/(L_c\,d\xi)$.
For shared iid draws $Z_i\sim\xi$, let
$W_Y=N^{-1}\sum_i r_Y(Z_i)$.
If $\sum_f\omega_f\int r_f^2\,d\xi<\infty$,
\begin{equation}
 W_c=\sum_f\omega_fW_f,\qquad
 \sum_f\omega_f\operatorname{Var}(W_f)-\operatorname{Var}(W_c)
 =\frac1N\sum_f\omega_f\int(r_f-r_c)^2\,d\xi.
 \label{eq:obs-weights}
\end{equation}
The remainder measures separation among latent posteriors relative to
the common proposal. Its importance-sampling interpretation follows the
weight-moment analysis of \citet{sanzalonso2021}.
Conditional Jensen also gives a convex-order comparison between $W_c$
and a predictive mixture of the $W_f$. A sequential filter may use
observation-dependent populations and resampling times. The shared-proposal
condition in \eqref{eq:obs-weights} specifies its scope.

\subsection{An exact bootstrap construction}
\label{subsec:observation-exact}

Consider a discrete observation $Y=T_Y(Z)$, with $Z\mid\theta\sim M_\theta$
and prior $\nu$. Both channels use $\theta'\sim\nu$ and the same number
of iid latent draws to estimate the likelihood by
\begin{equation}
 \widehat L_{Y,N}(\theta';y)
 =N^{-1}\sum_{i=1}^N\mathbf1\{T_Y(Z_i)=y\}.
 \label{eq:obs-bootstrap}
\end{equation}
Rejections retain the current estimate. For one particle, this is a
prior-proposal instance of the likelihood-free MCMC construction of
\citet{marjoram2003}. Let $m_Y(y)=\int L_Y(\theta;y)\nu(d\theta)>0$.

\begin{theorem}[Exact observation-dependent risk]
\label{thm:obs-refresh}
For $N=1$ and a positive retained estimate, the parameter kernel is
\[
 P_Y=(1-m_Y)I+m_Y\Pi_Y,\qquad \Pi_Y(\theta,\cdot)=\pi_Y(\cdot).
\]
Define
\[
 D_B(\lambda)=\frac{B+2\sum_{k=1}^{B-1}(B-k)\lambda^k}{B^2},\qquad
 T_B(\lambda)=\frac{\sum_{j=1}^B(2j-1)\lambda^j}{B^2}.
\]
At a specified initial parameter law $\eta$, put
$u_0=\eta\{(\psi-p_Y)^2\}<\infty$. For the average after $B$ transitions,
\begin{equation}
 e_Y(y;B)=v_YD_B(1-m_Y)+(u_0-v_Y)T_B(1-m_Y).
 \label{eq:obs-nonstationary}
\end{equation}
Under posterior initialization, $u_0=v_Y$ and
$j_Y(y;B)=v_Y\{1+D_B(1-m_Y)\}$.
\end{theorem}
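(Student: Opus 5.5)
We first identify the parameter kernel, then compute the squared error of the ergodic average from the resulting simple correlation structure.

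\emph{The kernel.} With $N=1$ the estimate is $\widehat L=\mathbf 1\{T_Y(Z)=y\}\in\{0,1\}$. A positive retained estimate therefore equals $1$. The proposal is $\theta'\sim\nu$, which is also the prior, so prior and proposal densities cancel in the pseudo-marginal Metropolis--Hastings ratio. The ratio then reduces to $\widehat L'/\widehat L=\mathbf 1\{T_Y(Z')=y\}$.

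A move is accepted exactly when a fresh prior draw $\theta'$ together with a fresh latent draw $Z'\sim M_{\theta'}$ reproduces $y$. This happens with probability $\int L_Y(\theta';y)\nu(d\theta')=m_Y$, independently of the current state. Conditional on acceptance, $\theta'$ has density proportional to $\nu(\theta')L_Y(\theta';y)$, which is $\pi_Y$. After acceptance the retained estimate is again $1$, so the description is self-reproducing. This gives $P_Y=(1-m_Y)I+m_Y\Pi_Y$ on the parameter component.

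\emph{Moments of the chain.} Put $f=\psi-p_Y$ and $\lambda=1-m_Y$. Because $\Pi_Y f=0$, we get $P_Y^k f=\lambda^k f$. This is the central structural fact.

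Let $\eta_t=\eta P_Y^t$. Then $\eta_t(f^2)=\lambda^t u_0+(1-\lambda^t)v_Y$, since $\pi_Y f^2=v_Y$. For $s<t$, the Markov property gives
\[
\E[f(\Theta_s)f(\Theta_t)]=\eta_s\bigl(fP_Y^{t-s}f\bigr)=\lambda^{t-s}\eta_s(f^2).
\]
Therefore
\[
B^2e_Y=\sum_{t=1}^B\eta_t(f^2)+2\sum_{1\le s<t\le B}\lambda^{t-s}\eta_s(f^2).
\]
Substituting $\eta_s(f^2)=v_Y+\lambda^s(u_0-v_Y)$ splits this into a $v_Y$ part and a $(u_0-v_Y)$ part.

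The $v_Y$ part is $B+2\sum_{k\ge1}(B-k)\lambda^k$, which equals $B^2D_B(\lambda)$.

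The $(u_0-v_Y)$ part is $\sum_t\lambda^t+2\sum_{s<t}\lambda^t$. Here we use $\lambda^{t-s}\lambda^s=\lambda^t$. For fixed $t$ there are $t-1$ indices $s<t$, so the sum equals $\sum_t(2t-1)\lambda^t=B^2T_B(\lambda)$.

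\emph{Posterior initialization.} When $\eta=\pi_Y$ we have $u_0=v_Y$, so the second term vanishes and $e_Y=v_YD_B(1-m_Y)$. Adding the posterior variance through \eqref{eq:obs-own-risk} gives $j_Y=v_Y\{1+D_B(1-m_Y)\}$.

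\emph{Main obstacle.} The delicate step is the kernel identification. One must verify that the retained estimate is indeed $1$, and that rejected proposals with $\widehat L'=0$ are always rejected. One must also check that the parameter process is Markov on its own despite the auxiliary variable. These points hold because acceptance is independent of the current $\theta$.

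Integrability is also needed for the $L^2$ manipulations. It follows from $u_0<\infty$ and $\psi\in L^2(\pi_Y)$, since $\eta_t(f^2)$ is then finite for every $t$.
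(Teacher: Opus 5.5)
Your proposal is correct and follows essentially the same route as the paper: you identify the kernel from the match probability $m_Y$ with accepted law $\pi_Y$, use $P_Y f=(1-m_Y)f$ to get $\E_\eta\{f(\Theta_s)f(\Theta_t)\}=v_Y\lambda^{t-s}+(u_0-v_Y)\lambda^t$, and sum. The $D_B$ term comes from the lags, and the $T_B$ term from the count of $2t-1$ ordered pairs with maximum index $t$. Adding $v_Y$ via the conditional risk decomposition then gives $j_Y$ under posterior initialization.
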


A matching proposal is accepted and has law $\pi_Y$; its probability
is $m_Y$. This proves the kernel identity. For $f=\psi-p_Y$ and $i\leq j$,
\[
 \E_\eta\{f(\Theta_i)f(\Theta_j)\}
 =v_Y(1-m_Y)^{j-i}+(u_0-v_Y)(1-m_Y)^j.
\]
Summing over the $B^2$ pairs proves \eqref{eq:obs-nonstationary}.
Refinement gives $m_F(f)\leq m_C(T(f))$, which increases retention
in this construction. The formula separates that effect from $v_Y$.

For an explicit comparison, let $\Theta\sim\operatorname{Bernoulli}(1/2)$,
$\Pr(S=\Theta\mid\Theta)=s$, $F=S$, $C=0$, and $\psi(\theta)=\theta$.
With posterior initialization, $s=3/5$, and $B=10$, the posterior
variances are $0.25$ and $0.24$. Coarse bootstrap total risk is $0.275$;
fine bootstrap risk is $0.302409375$. Using the exact fine likelihood
with the same prior proposal gives $0.2724480000$.
With the same prior parameter initialization in both channels,
the bootstrap risks are $0.275$ and $0.3030048828$.
The Supplementary Material gives a sharp binary threshold and a
construction with an informative coarse observation.

\begin{theorem}[Reversal at every fixed particle count]
\label{thm:obs-fixedN}
In the binary signal model, append an independent
$R\sim\operatorname{Uniform}\{1,\ldots,K\}$ and take $F=(S,R)$,
$C=0$. Fix $N\geq1$, $B\geq2$, and
$1/2<s<1/2+\sqrt{(B-1)/(8B)}$. Use \eqref{eq:obs-bootstrap} and
posterior extended-state initialization in both channels. Put
$v=s(1-s)$ and $a=(1+1/B)/(4v)-1\in(0,1)$.
Then
\begin{equation}
 J_F\geq v\left\{1+\left(1-\frac{N}{2K}\right)_+^B\right\},
 \qquad J_C=\frac{1+1/B}{4}.
 \label{eq:obs-fixedN-risk}
\end{equation}
Every integer $K>N/[2\{1-a^{1/B}\}]$ gives $J_F>J_C$ and $V_F<V_C$.
\end{theorem}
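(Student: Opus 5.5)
The plan is to compute the coarse channel exactly with Theorem~\ref{thm:obs-refresh}, and to bound the fine channel from below by the event that the chain never moves during the $B$ transitions. The bound on the acceptance probability is a uniform version of the retained-state argument of Theorem~\ref{thm:functional-risk-obstruction}.

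\emph{Coarse channel.} Since $C=0$ is constant, $L_C(\theta;0)=1$ and the bootstrap estimate \eqref{eq:obs-bootstrap} equals $1$ for every $N$. Every prior proposal is therefore accepted. Because the posterior given $C$ is the prior, the chain is iid from $\pi_C=\operatorname{Bernoulli}(1/2)$. Theorem~\ref{thm:obs-refresh} applies here for any $N$, because the estimator is deterministic. Taking $m_C=1$ gives $D_B(0)=1/B$, and hence $J_C=\tfrac14(1+1/B)$ with $v_C=1/4$.

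\emph{Fine channel, posterior variance.} Because $R$ is independent of $(\Theta,S)$, the posterior given $F=(S,R)$ puts mass $s$ or $1-s$ on $\Theta=1$. Hence $v_F(f)=s(1-s)=v$ for every $f$. Since $s>1/2$ we have $v<1/4$, so $V_F=v<1/4=V_C$.

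\emph{Fine channel, stalling bound.} Fix $f$ and start from the posterior extended target. Under this target the retained estimate is positive almost surely (it has law $w\,Q(dw)$), so $\widehat L\ge 1/N$.
\begin{itemize}
\item With a prior proposal the acceptance ratio is $\widehat L'/\widehat L$. Using $1\wedge r\le r$, the one-step acceptance probability from any retained state satisfies
\[
A\le N\,\E\widehat L' = N\,m_F(f)=\frac{N}{2K}.
\]
The equality holds because $\Pr(S=s_0)=1/2$ and $\Pr(R=r_0)=1/K$.
\item This bound does not depend on the state. By the Markov property, the probability of no acceptance in $B$ steps, conditionally on the initial state, is at least $(1-N/(2K))_+^B$.
\item On that event $\Theta_1=\dots=\Theta_B=\Theta_0$, so $A_F=\Theta_0$.
\item Therefore
\[
e_F(f)\ge \E\bigl[(\Theta_0-p_F)^2\bigr]\,\Bigl(1-\frac{N}{2K}\Bigr)_+^B = v\Bigl(1-\frac{N}{2K}\Bigr)_+^B .
\]
\end{itemize}
Using \eqref{eq:obs-own-risk} and averaging over $f$ gives \eqref{eq:obs-fixedN-risk}.

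\emph{Final comparison.} From \eqref{eq:obs-fixedN-risk}, $J_F>J_C$ holds whenever
\[
\Bigl(1-\frac{N}{2K}\Bigr)^B>\frac{1+1/B}{4v}-1=a .
\]
Rearranging, this is exactly $K>N/[2\{1-a^{1/B}\}]$. It remains to check $a\in(0,1)$:
\begin{itemize}
\item $a>0$ because $4v<1<1+1/B$.
\item $a<1$ is equivalent to $8v>1+1/B$. Since $v=\tfrac14-(s-\tfrac12)^2$, the hypothesis $(s-\tfrac12)^2<(B-1)/(8B)$ gives $v>(B+1)/(8B)$, which is exactly this inequality.
\end{itemize}

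The main point needing care is the state-uniform acceptance bound. It relies on the support of the extended target excluding $\widehat L=0$, and on the minimal positive bootstrap value being $1/N$. Everything else is a direct computation.
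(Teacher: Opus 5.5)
Your proposal is correct and follows essentially the paper's proof: the coarse chain is iid from the prior, the fine computational MSE is bounded below on the no-acceptance event using a state-uniform acceptance bound, and the conclusion comes from solving $(1-N/(2K))^B>a$. The only difference is in the acceptance bound: the paper bounds it by $\Pr(J'>0)=u_{N,K}$ and then applies the binomial union bound, whereas you use $1\wedge r\le r$, $\widehat L\ge 1/N$, and $\E\widehat L'=m_F(f)=1/(2K)$; both give the same $N/(2K)$.
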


An accepted fine proposal requires at least one match. Its probability
is at most $N/(2K)$, uniformly over retained states. The event of no
acceptance during $B$ transitions gives \eqref{eq:obs-fixedN-risk}.
For $s=3/5$ and $B=10$, $K=3N$ suffices. Here $N$ is fixed first and
the observation resolution $K$ is then chosen. Strict reversal persists
under a sufficiently small positive contamination of the observation
probabilities; the complete proof is in the Supplementary Material.

These constructions specify the observation and likelihood estimator
within a single latent model. The two-region retained-weight examples in the
Supplementary Material give a complementary description in terms of
posterior odds. The sequential transcription experiment below measures
the risk of its actual Gaussian-proposal particle algorithm at CPU budgets.

\subsection{Paired posterior-event recovery in transcription}
\label{subsec:paired-posterior-risk}

We use one paired record with 20 observations and capture probability
$0.6$. The target is $h(\theta)=\mathbf1\{\theta_1<\theta_2\}$, where
the coordinates are the log activation and synthesis multipliers.
Finite-state likelihood integration gives $p_C=0.487778887$ and
$p_F=0.791191610$, with posterior variances $0.249850644$ and
$0.165207446$. The reference uses a transcript cutoff of 64 and
24/28-order quadrature for the two channels. A 20-order prior-CDF
parameterization changes the probabilities by $1.61\times10^{-5}$ and
$1.40\times10^{-6}$. Cutoff comparisons, integration checks, and
reference sensitivity are given in the Supplementary Material.

The bootstrap particle filter uses systematic resampling when effective
sample size falls below $N/2$. Both channels use the same fixed Gaussian
parameter proposal and four equally weighted initial parameter values.
At the primary 120-second CPU budget, the estimate averages completed
post-transition event values, including rejections. Initialization and
transition costs are charged. A prefix with no completed transition
returns the initial event value. No burn-in is removed.

The baseline has 32 paired repetitions at each start, giving
128 pairs at $N=600$. The particle-count confirmation has 310 groups,
each containing both channels at $N=600$ and $N=1200$.
The four starts receive 40, 110, 70, and 90 groups.
This allocation was fixed using a separate 320-trajectory study;
the analysis weights remain $1/4$. The four cells run concurrently.
They share proposal innovations and acceptance uniforms by transition
index, with separate particle-filter streams. The Supplementary Material gives the allocation
blocks, CPU accounting, and stratified interval calculation.

\begin{table}[tb]
\centering
\caption{Fine-minus-coarse conditional total risk at 120 CPU seconds.
The own-observation column compares $j_F(f)$ and $j_C(c)$; the common
column conditions both losses on $F=f$. Common-$F$ results are supplementary evaluations. Intervals are approximate
95\% Monte Carlo intervals; auxiliary comparisons are unadjusted.}
\label{tab:paired-risk}
\begin{tabular}{llrr}
\hline
Study & Particles & Own observation & Common $F$\\
\hline
Baseline & 600 & $-0.100366$ & $-0.112460$\\
 & & $[-0.117732,-0.083000]$ & $[-0.137981,-0.086939]$\\
Confirmation & 600 & $-0.104097$ & $-0.092139$\\
 & & $[-0.117055,-0.091139]$ & $[-0.109558,-0.074721]$\\
Confirmation & 1200 & $-0.092484$ & $-0.074753$\\
 & & $[-0.109875,-0.075093]$ & $[-0.093069,-0.056437]$\\
\hline
\end{tabular}
\end{table}

The fine channel has smaller conditional total risk under both
evaluations (Table~\ref{tab:paired-risk}). Increasing the particle count
raises computational MSE from $0.079814$ to $0.115449$ in the coarse
channel and from $0.060360$ to $0.107608$ in the fine channel.
The increases are $0.035635$ and $0.047248$, with respective intervals
$[0.023458,0.047812]$ and $[0.032423,0.062074]$.
The primary interaction, fine minus coarse increase, is $0.011613$,
with interval $[-0.005405,0.028632]$. Its half-width is $0.017019$.
The four-start mean numbers of completed
transitions fall from $24.44$ to $11.36$ and from $23.69$ to $11.08$.

A separate fixed-parameter diagnostic uses 32 independent likelihood
estimates per channel, particle count, and start. Doubling $N$ reduces
the fresh-estimator relative-variance point estimates in all eight
comparisons. In the trajectory experiment, the larger particle count uses more work
per likelihood evaluation and leaves fewer transitions within the budget.
The Supplementary Material reports the shorter-budget prefixes, earlier fixed-step
comparisons, and the eight diagnostic point estimates. Replacing the quadrature reference changes the primary confirmation
interaction by $4.82\times10^{-7}$. A complete-block sensitivity interval
is $[-0.008245,0.031471]$.

At a fixed observation, the posterior-variance term is unchanged by
the computational method. The next section uses event-based auxiliary
updates to change the remaining computational-error term.

\section{Route-conditioned auxiliary allocation}
\label{sec:route-allocation}

\subsection{Event routes and residual tilt}
\label{subsec:route-design}

An event functional changes value only when a transition crosses
its partition. This gives a direct way to allocate auxiliary work.
We retain a coupled update on cross-event proposals and refresh
independently on same-event proposals. The comparison kernel uses
the coupled update on both routes. The two kernels share the same
parameter proposal and cross-event acceptance rule.

For an event $h$, write $h_0=h(\theta)$ and let
$\kappa_K(X)=\Pr_{X,K}\{h(\Theta_1)\ne h_0\}$ be the probability of
an accepted event crossing from $X=(\theta,u)$.

\begin{lemma}[Event crossing and auxiliary allocation]
\label{lem:event-crossing-risk}
For $\Psi=\pi(h)$ and any transition kernel $K$,
\begin{equation}
\label{eq:event-crossing-risk}
\mathcal R_1(K,\delta_X;h)
=(h_0-\Psi)^2+(1-2h_0)(1-2\Psi)\kappa_K(X).
\end{equation}
A larger $\kappa_K(X)$ strictly reduces this risk when $h_0=0$ and
$\Psi>1/2$, or $h_0=1$ and $\Psi<1/2$.
If two Metropolis--Hastings kernels share the parameter proposal,
cross-event auxiliary conditional law, and cross-event acceptance
rule at $X$, their one-step event distributions and conditional risks
agree. Their same-event auxiliary actions may differ.
For the fresh-estimator class of Theorem~\ref{thm:functional-risk-obstruction},
$\kappa_K(\theta,w)\le A(\theta,w)\le\min\{1,H(\theta)/w\}$.
\end{lemma}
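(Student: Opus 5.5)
The plan is to work directly from Definition~\ref{def:fixed-horizon-functional-risk} with $B=1$ and $\nu=\delta_X$. Then $\widehat\Psi_1=h(\Theta_1)$, and $h(\Theta_1)\in\{0,1\}$ equals $h_0$ unless an accepted crossing occurs, in which case it equals $1-h_0$. Hence $h(\Theta_1)$ is a Bernoulli-type variable taking the value $1-h_0$ with probability $\kappa_K(X)$. This gives
\[
\mathcal R_1(K,\delta_X;h)=(1-\kappa)(h_0-\Psi)^2+\kappa(1-h_0-\Psi)^2 .
\]
For $h_0\in\{0,1\}$, the identity $(1-h_0-\Psi)^2-(h_0-\Psi)^2=(1-2h_0)(1-2\Psi)$ can be checked case by case: with $h_0=0$ the difference is $1-2\Psi$, and with $h_0=1$ it is $-(1-2\Psi)$. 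This yields \eqref{eq:event-crossing-risk}. The strict monotonicity claim then reduces to reading off the sign of the coefficient $(1-2h_0)(1-2\Psi)$, which is negative exactly in the two stated cases.

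For the comparison of two Metropolis--Hastings kernels, I will decompose the one-step law at $X$ according to whether the proposed $\theta'$ satisfies $h(\theta')\ne h_0$. The event distribution of $h(\Theta_1)$ is determined by $\kappa_K(X)$. In turn, $\kappa_K(X)$ is the integral over cross-event proposals $\theta'$ of the cross-event auxiliary law of $u'$ times the acceptance probability. Rejections and accepted same-event moves both leave $h(\Theta_1)=h_0$, so the same-event auxiliary action does not enter. If the parameter proposal, the cross-event auxiliary conditional law, and the cross-event acceptance rule coincide, the two values of $\kappa$ agree. By the first part, both the event distributions and the risks then agree.

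For the final chain of inequalities, the first step $\kappa_K(\theta,w)\le A(\theta,w)$ holds because an accepted crossing is in particular an acceptance. The second step, $A(\theta,w)\le\min\{1,H(\theta)/w\}$, is exactly the acceptance bound of Theorem~\ref{thm:functional-risk-obstruction}, which applies under its hypotheses ($\theta\in\Theta_+$, $w>0$, $H(\theta)<\infty$; if $H(\theta)=\infty$ the bound is trivial). The only point needing care, and the main obstacle, is measurability and the precise definition of $\kappa_K$ as an integral over the proposal. I will write $\kappa_K(X)=\int\int \mathbf 1\{h(\theta')\ne h_0\}\,\alpha(X;\theta',u')\,Q(du'\mid X,\theta')\,q(\theta,d\theta')$. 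Everything else then follows by inspecting this integral.
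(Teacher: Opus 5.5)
Your proposal is correct and follows essentially the same argument as the paper: condition on the binary next event value to get a risk formula affine in $\kappa_K(X)$, and read off the sign of $(1-2h_0)(1-2\Psi)$. You then write $\kappa_K(X)$ as an integral over cross-event proposals of the cross-event auxiliary law times the acceptance probability, so same-event actions drop out, and bound $\kappa_K\le A$ because crossing requires acceptance before invoking the acceptance bound of Theorem~\ref{thm:functional-risk-obstruction}; your remark that the case $H(\theta)=\infty$ is trivial is a small point the paper leaves implicit.
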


The identity follows by conditioning on the two possible event values;
the full argument is in the Supplementary Material. It permits
auxiliary computation on same-event proposals to be changed while
preserving the one-step event law. Proposition~\ref{prop:allocation-risk} describes the effect of these
updates on subsequent event values and finite-run risk.

Theorem~\ref{thm:functional-risk-obstruction} bounds event crossing
for fresh-estimator kernels through their total acceptance probability.
Coupled inheritance changes the conditional estimator law.
The residual tilt below is a separate state-dependent proposal rule.

Let
\[
h:\Theta\longrightarrow\{0,1\}
\]
be the measurable event function for the reported posterior probability,
and let
\[
\Psi
=
\E_{\pi}\{h(\Theta)\}.
\]
When the proposal is generated on an ambient coordinate space larger
than the target support, we fix a measurable extension of \(h\) to that
space. Proposals outside \(\Theta\) are assigned zero target density and
are rejected before likelihood evaluation. The extension partitions
proposal mass before this support rejection.

The event function induces a pair route
\[
\rho(\theta,\theta')
=
\begin{cases}
\mathrm{cross},
&
h(\theta)\ne h(\theta'),
\\
\mathrm{same},
&
h(\theta)=h(\theta').
\end{cases}
\]
By construction,
\[
\rho(\theta,\theta')
=
\rho(\theta',\theta).
\]

Let \(X=(\theta,u)\) be the current extended pseudo-marginal state.
Choose a fixed centering function \(m:\Theta\to\mathbb R\), calibrated
independently of the transition outcomes to which the method will be
applied, and define the retained residual
\[
r(X)
=
\log\widehat L(\theta,u)-m(\theta),
\qquad
r_-(X)
=
\min\{r(X),0\}.
\]
Any fixed measurable \(m\) defines a valid state-dependent proposal. In
the gene-network experiment, a quadratic centering surface is used to
center the retained log likelihood.
The current state lies on the positive support of the extended target,
so its retained likelihood estimate is positive and the residual is
well defined.

Let \(q_0(\theta'\mid\theta)\) be a baseline parameter proposal on
the proposal space and write its cross-route mass as
\[
p_0(\theta)
=
\int
q_0(\vartheta\mid\theta)
\mathbf 1
\{
\rho(\theta,\vartheta)=\mathrm{cross}
\}
\,d\vartheta,
\]
where the integral is over that proposal space. If the proposal is
support-restricted, the proposal space equals \(\Theta\).
The following proposal tilts the route mass:
\begin{equation}
\label{eq:route-tilted-proposal}
q_R(\theta'\mid X)
=
\frac{
q_0(\theta'\mid\theta)
\exp
\left[
-r_-(X)
\mathbf 1
\{
\rho(\theta,\theta')=\mathrm{cross}
\}
\right]
}{
Z_R(X)
},
\end{equation}
where
\[
Z_R(X)
=
1-p_0(\theta)
+
p_0(\theta)e^{-r_-(X)}.
\]
A negative retained residual increases the probability assigned to
cross-route proposals. When \(r(X)\ge0\), the route probabilities agree
with the baseline proposal. Equation~\eqref{eq:route-tilted-proposal}
is an ordinary state-dependent Metropolis--Hastings proposal. The
reverse density is evaluated at the proposed extended state.

The parameter route also selects the auxiliary action. Same-route
proposals use an independent auxiliary draw,
\[
C^{\mathrm{refresh}}_{\theta,\theta'}(u,du')
=
M_{\theta'}(du'),
\]
where \(M_{\theta'}\) is the ordinary likelihood-estimator law.
Cross-route proposals use a coupled inherited draw,
\[
u'
\sim
C^{\mathrm{inherit}}_{\theta,\theta'}(u,\cdot).
\]
For exactness, the joint law
\(M_\theta(du)C^{\mathrm{inherit}}_{\theta,\theta'}(u,du')\)
must have second marginal \(M_{\theta'}\) and satisfy the swap
identity in Section~\ref{subsec:route-validity-local}. In the
stochastic-network implementation this joint coupling uses split
Markov-jump propagation and coupled particle ancestry
\citep{anderson2018split,jacob2016coupling}.

Refreshment \citep{maire2014refreshment}, correlated pseudo-marginal
updates \citep{deligiannidis2018cpm}, and observable-directed process
couplings \citep{arampatzis2014goal} provide the component operations.
Auxiliary information also enters randomized and asymmetric Metropolis
constructions \citep{nicholls2012randomized,andrieu2018asymmetric,
andrieu2021mhaar}. Here the event route assigns the auxiliary
action, and the retained residual changes cross-route proposal mass.

\subsection{The route-conditioned transition}
\label{subsec:route-transition}

Combining the two auxiliary actions gives the state-dependent allocation
\[
C^{\rho(\theta,\theta')}_{\theta,\theta'}
=
\begin{cases}
C^{\mathrm{inherit}}_{\theta,\theta'},
&
\rho(\theta,\theta')=\mathrm{cross},
\\
C^{\mathrm{refresh}}_{\theta,\theta'},
&
\rho(\theta,\theta')=\mathrm{same}.
\end{cases}
\]
Given \(X=(\theta,u)\), draw \(\theta'\) from \(q_R(\cdot\mid X)\)
and \(u'\) from the action selected by \(\rho(\theta,\theta')\). For
\(X'=(\theta',u')\), accept with probability \(1\wedge R_{\rm MH}\),
where
\[
R_{\rm MH}
=
\frac{p(\theta')\widehat L(\theta',u')q_R(\theta\mid X')}
{p(\theta)\widehat L(\theta,u)q_R(\theta'\mid X)}.
\]
The complete transition specification and particle-filter implementation
are given in the Supplementary Material.

\subsection{Validity of the extended-state transition}
\label{subsec:route-validity-local}

Let the auxiliary space be standard Borel. All probability kernels below
are measurable in their conditioning variables. In particular,
$C^s_{\theta,\theta'}(u,du')$ is jointly measurable in
$(\theta,\theta',u)$. The two route actions preserve the same extended
target if their auxiliary laws satisfy a common swap identity. We use the standard
product-measure formulation of Metropolis--Hastings reversibility
\citep{tierney1998}. Let
\[
\bar\pi(d\theta,du)
=
Z^{-1}
p(\theta)
\widehat L(\theta,u)
M_\theta(du)\,d\theta,
\]
where
\[
\widehat L(\theta,u)\ge0,
\qquad
\int
\widehat L(\theta,u)M_\theta(du)
=
L(\theta),
\qquad
0<Z<\infty.
\]
For \(s\in\{\mathrm{same},\mathrm{cross}\}\), suppose
\begin{equation}
\label{eq:route-aux-swap}
M_\theta(du)
C^s_{\theta,\theta'}(u,du')
=
M_{\theta'}(du')
C^s_{\theta',\theta}(u',du)
\end{equation}
as measures on the paired auxiliary space.

\begin{proposition}[Validity of route-conditioned auxiliary allocation]
\label{prop:route-allocation-validity}
Assume that the pair route is symmetric and
\eqref{eq:route-aux-swap} holds for each route. For every parameter pair
relevant to the proposal, suppose that, under the corresponding
auxiliary joint law, almost every \((u,u')\) with positive
extended-target densities satisfies
\[
q_R(\theta'\mid X)>0
\quad\Longleftrightarrow\quad
q_R(\theta\mid X')>0.
\]
Proposal mass outside the target support is assigned to the
self-transition. Then the
Metropolis--Hastings transition is reversible with respect to
\(\bar\pi\). Its parameter marginal is the posterior
\[
\pi(d\theta\mid y)
=
Z^{-1}
p(\theta)L(\theta)\,d\theta.
\]
\end{proposition}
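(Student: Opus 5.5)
We use the product-measure (Tierney) formulation of Metropolis--Hastings reversibility. Write the proposal kernel on the extended space as
\[
\Lambda(X,dX')=q_R(\theta'\mid X)\,d\theta'\,C^{\rho(\theta,\theta')}_{\theta,\theta'}(u,du').
\]
First we form the joint measure \(\nu(dX,dX')=\bar\pi(dX)\Lambda(X,dX')\) and its image \(\nu^{\top}\) under the swap \((X,X')\mapsto(X',X)\). The claim is that, on the set
\[
S=\{\bar\pi\text{-densities positive},\ q_R(\theta'\mid X)>0,\ q_R(\theta\mid X')>0\},
\]
\(\nu^{\top}\) is absolutely continuous with respect to \(\nu\) and
\[
\frac{d\nu^{\top}}{d\nu}=R_{\rm MH}.
\]
Once this holds, the standard argument applies. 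Accepting with probability \(1\wedge R_{\rm MH}\) gives a sub-kernel whose off-diagonal part satisfies
\[
\bar\pi(dX)\Lambda(X,dX')\{1\wedge R_{\rm MH}\}=\bar\pi(dX')\Lambda(X',dX)\{1\wedge R_{\rm MH}^{-1}\}.
\]
Indeed, \(\min(1,R)\,d\nu=\min(1,R^{-1})\,d\nu^{\top}\) on \(S\). The diagonal remainder, which collects rejections together with the proposal mass outside the support, is trivially symmetric. Reversibility follows, hence \(\bar\pi\)-invariance.

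The central computation uses the swap identity \eqref{eq:route-aux-swap}. Route symmetry \(\rho(\theta,\theta')=\rho(\theta',\theta)\) ensures that the same \(s\) is used in both directions. Then
\[
\begin{aligned}
\nu(d\theta,du,d\theta',du')&=Z^{-1}p(\theta)\widehat L(\theta,u)\,q_R(\theta'\mid X)\,d\theta\,d\theta'\;M_\theta(du)C^s_{\theta,\theta'}(u,du'),\\
\nu^{\top}(d\theta,du,d\theta',du')&=Z^{-1}p(\theta')\widehat L(\theta',u')\,q_R(\theta\mid X')\,d\theta\,d\theta'\;M_{\theta'}(du')C^s_{\theta',\theta}(u',du).
\end{aligned}
\]
By \eqref{eq:route-aux-swap} the two auxiliary measures coincide. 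Call this common measure \(\Gamma^s_{\theta,\theta'}(du,du')\).

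Consequently both \(\nu\) and \(\nu^{\top}\) have densities with respect to \(d\theta\,d\theta'\,\Gamma^{\rho(\theta,\theta')}_{\theta,\theta'}(du,du')\), and the ratio of those densities is exactly \(R_{\rm MH}\). The positivity hypothesis \(q_R(\theta'\mid X)>0\Leftrightarrow q_R(\theta\mid X')>0\), holding \(\Gamma\)-a.e. where the target densities are positive, shows that the supports of \(\nu\) and \(\nu^{\top}\) agree up to null sets. This makes the Radon--Nikodym derivative well defined and equal to \(R_{\rm MH}\) on \(S\), and equal to \(0\) or \(\infty\) off \(S\). There \(\nu\) or \(\nu^{\top}\) vanishes, so the acceptance convention assigns that mass to rejection.

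One must also check that \(q_R(\cdot\mid X)\) is a genuine density. The normalizer \(Z_R(X)\) integrates the tilt correctly using \(p_0\), and \(Z_R(X)\in(0,\infty)\) because \(e^{-r_-}\ge1\) is finite whenever \(\widehat L(\theta,u)>0\). We also need measurability of \(\Lambda\) as a kernel. This follows from the joint measurability of \(C^s\) and \(q_R\) together with the measurability of the route indicator. Proposals outside \(\Theta\) carry zero target density, so they are handled by the self-transition convention. They add only diagonal mass and do not affect symmetry.

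Finally, the parameter marginal is \(\int\bar\pi(d\theta,du)=Z^{-1}p(\theta)L(\theta)\,d\theta\) by unbiasedness. I expect the main obstacle to be the careful handling of supports and null sets. Specifically, one must verify that \(\nu^{\top}\ll\nu\) restricted to \(S\) uses only the stated a.e. positivity condition, and that the route-dependent mixture of two different auxiliary kernels still yields a single dominating measure. The route symmetry settles the latter, because \(\rho\) is a function of the unordered pair.
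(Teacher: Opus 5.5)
Your proposal is correct and follows essentially the same route as the paper's proof. It uses the Tierney product-measure formulation, route symmetry to select the same auxiliary action in both directions, and the swap identity to cancel the auxiliary measures so that $d\nu^{\top}/d\nu$ is the displayed Metropolis--Hastings ratio; it then uses the symmetric accepted flow $\min(f,g)$, assigns rejections and out-of-support mass to the diagonal, and invokes unbiasedness for the posterior marginal. One remark: you give explicit densities with respect to the common measure $d\theta\,d\theta'\,\Gamma^{\rho(\theta,\theta')}_{\theta,\theta'}$, whereas the paper uses $\xi=\nu+\nu^{\leftarrow}$, so your $\min(f,g)$ step already works without the mutual-positivity hypothesis, which the paper's argument does not need either.
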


\noindent\textit{Proof sketch.}\quad
Pair symmetry selects the same auxiliary action in both directions.
The swap identity cancels the auxiliary proposal measure from the
forward--reverse product-measure ratio. The remaining ratio is the
Metropolis--Hastings ratio above. The full product-measure proof,
including the coupled-particle disintegration argument, is given in
the Supplementary Material.

Independent refresh satisfies \eqref{eq:route-aux-swap} through the
symmetry of \(M_\theta\otimes M_{\theta'}\). For the inherited action,
the network coupling is defined through a swap-symmetric joint law of
two particle-filter outputs. Split reaction
propagation preserves the two correct Markov-jump marginals
\citep{anderson2018split}, and the coupled-resampling matrix preserves
the two particle-weight marginals and transposes when the two sides are
exchanged \citep{jacob2016coupling}. Disintegration of the resulting
joint law gives the conditional inherited kernel required by
\eqref{eq:route-aux-swap}. The Supplementary Material states the numerical
conventions used to implement this mathematical kernel.

\subsection{How within-event updates affect later risk}
\label{subsec:allocation-risk}

Let $P$ denote selective allocation and $Q$ the fully coupled comparator.
Write $h(X)=h(\theta)$ on the standard Borel extended state space $E$,
and set $E_j=\{X:h(X)=j\}$. Their common cross-event proposal and
acceptance laws give
\begin{equation}
 P(X,A)=Q(X,A),\qquad X\in E_j,\quad A\subseteq E_{1-j},
 \label{eq:allocation-common-cross}
\end{equation}
for every measurable $A$. Thus $D=P-Q$ is supported within the current
event class and $Dh=0$. The next event value has the same distribution
under both kernels. The full state reached within that class determines
the effect on subsequent event values.

To describe this effect, put $f=h-\Psi$ and define the continuation moments
\[
 A_{n,Q}(X)=\E_{X,Q}\sum_{j=1}^n f(X_j),\qquad
 C_{n,Q}(X)=\E_{X,Q}\left\{\sum_{j=1}^n f(X_j)\right\}^2,
\]
with $A_{0,Q}=C_{0,Q}=0$. For a bounded function $v$, write
$Dv(X)=\int v(Y)\{P(X,dY)-Q(X,dY)\}$.

\begin{proposition}[Event allocation and finite-run risk]
\label{prop:allocation-risk}
Under \eqref{eq:allocation-common-cross}, let
$S_t=\sum_{j=1}^t f(X_j)$, with $S_0=0$. For every integer $B\geq1$,
\begin{align}
 &\mathcal R_B(P,\delta_X;h)-\mathcal R_B(Q,\delta_X;h)\nonumber\\
 &\quad=\frac1{B^2}\sum_{t=0}^{B-1}\E_{X,P}\left[
 2\{S_t+f(X_t)\}D A_{B-t-1,Q}(X_t)
 +D C_{B-t-1,Q}(X_t)\right].
 \label{eq:allocation-risk-identity}
\end{align}
The identity also holds after averaging over any common initial law.
\end{proposition}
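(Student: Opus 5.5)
Since $\widehat\Psi_B-\Psi=S_B/B$, it suffices to show that $B^2$ times the right-hand side of \eqref{eq:allocation-risk-identity} equals $\E_{X,P}S_B^2-\E_{X,Q}S_B^2$. The plan is a hybrid (telescoping) argument: run $P$ for the first $t$ transitions and $Q$ for the remaining $B-t$, and interpolate between the all-$Q$ chain ($t=0$) and the all-$P$ chain ($t=B$).

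\emph{Continuation function.} For $0\le t\le B$, $x\in E$ and $s\in\mathbb R$, put
\[
g_t(x,s)=\E_{x,Q}\Bigl\{s+\sum_{j=1}^{B-t}f(X_j)\Bigr\}^2
=s^2+2s\,A_{B-t,Q}(x)+C_{B-t,Q}(x).
\]
Define the hybrid quantities $G_t=\E_{X,P}\,g_t(X_t,S_t)$. By the Markov property, $G_B=\E_{X,P}S_B^2$ and $G_0=\E_{X,Q}S_B^2$. Since $|f|\le1$, every $A_{n,Q}$ and $C_{n,Q}$ is bounded by $n$ and $n^2$. Hence all expectations are finite and $D$ may be applied to these functions.

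\emph{Key step.} The Markov property of $Q$ gives the one-step recursion
\[
g_t(x,s)=\int Q(x,dy)\,g_{t+1}\bigl(y,s+f(y)\bigr).
\]
Conditioning on $\mathcal F_t$ under $P$, we have
\[
G_{t+1}=\E_{X,P}\int P(X_t,dy)\,g_{t+1}\bigl(y,S_t+f(y)\bigr),
\]
while $G_t$ is the same expression with $Q$ in place of $P$. Therefore
\[
G_{t+1}-G_t=\E_{X,P}\,D\bigl[y\mapsto g_{t+1}(y,S_t+f(y))\bigr](X_t),
\]
where $S_t$ is held fixed inside $D$. With $m=B-t-1$, expand
\[
g_{t+1}(y,S_t+f(y))=\{S_t+f(y)\}^2+2\{S_t+f(y)\}A_{m,Q}(y)+C_{m,Q}(y).
\]

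\emph{Main obstacle.} The main point is to use \eqref{eq:allocation-common-cross} correctly. $P$ and $Q$ are probability kernels, so $D1=0$. By \eqref{eq:allocation-common-cross}, the signed measure $D(x,\cdot)$ restricted to $E_{1-h(x)}$ vanishes, so it is carried by $E_{h(x)}$. On that class, $f(y)=f(x)$. Two consequences follow:
\begin{itemize}
\item $D$ annihilates every function of $h(y)$ alone. In particular, $D[\{S_t+f\}^2](X_t)=0$.
\item For bounded $v$, $D[f\cdot v](x)=f(x)\,Dv(x)$. Hence $D[\{S_t+f\}A_{m,Q}](X_t)=\{S_t+f(X_t)\}\,DA_{m,Q}(X_t)$.
\end{itemize}
The $C_{m,Q}$ term contributes $DC_{m,Q}(X_t)$. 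Care is needed to justify treating the $\mathcal F_t$-measurable $S_t$ as a constant inside $D$. This uses a regular version of the conditional law, available because $E$ is standard Borel, together with the boundedness noted above.

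\emph{Conclusion.} Summing $G_{t+1}-G_t$ over $t=0,\dots,B-1$ and dividing by $B^2$ yields \eqref{eq:allocation-risk-identity}. For $t=0$ the summand reduces to $S_0=0$, $X_0=X$. Every step is linear in the initial law, so integrating both sides against any common initial distribution gives the averaged version.
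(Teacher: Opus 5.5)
Your proposal is correct and is essentially the paper's own proof. Your $g_t$ is $B^2$ times the paper's continuation loss $V_t$, and you use the same steps: the one-step $Q$-recursion, telescoping along a $P$-path, and the fact that $D(x,\cdot)$ is carried by $E_{h(x)}$, which kills the $s^2$ term and lets $S_t+f(X_t)$ be pulled out in front of $DA_{B-t-1,Q}(X_t)$.
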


The two terms measure changes in the mean and second moment of the
remaining centered event sum. Their variation within an event class
controls the risk difference. The proof telescopes the continuation
loss along a $P$ path. The Supplementary Material gives the proof and
an explicit bound in terms of within-class variation.

For two transitions, the identity has a direct interpretation. Let
$\kappa(X)=P(X,E_{1-h(X)})=Q(X,E_{1-h(X)})$. Then
\begin{align}
 &\mathcal R_2(P,\delta_X;h)-\mathcal R_2(Q,\delta_X;h)\nonumber\\
 &\qquad=\{1/4-\Psi+h(X)/2\}\{1-2h(X)\}D\kappa(X).
 \label{eq:allocation-crossing-profile}
\end{align}
Here $D\kappa(X)$ is the difference in the expected crossing probability
at the next full state. When $1/4<\Psi<3/4$, its coefficient is negative
in both event classes. In this range, an update that increases this
expected crossing probability reduces two-step risk. For longer runs,
\eqref{eq:allocation-risk-identity} accounts for the subsequent event
history. At a computational budget, the estimate also depends on which
transitions finish before the checkpoint. The Supplementary Material
extends the comparison to the joint law of each next state and its cost.
The resulting risk difference includes initialization, within-event
continuation loss, and conditional execution costs. A local sign condition
gives a sufficient condition for budget-risk improvement. The experiments
evaluate the completed event average using the recorded joint paths.

\subsection{Gene-network instantiation}
\label{subsec:gene-route-instantiation}

The prospective evaluation uses a feed-forward, seven-component,
fourteen-reaction stochastic gene network with \(20\) longitudinal
allele-specific observations. The observed counts are
\((M_A,M_{B1},M_{B2})\), with capture probability \(p_{\rm cap}=1\).
Two network rates are parameterized by
\[
k_{1,B1}=0.4e^{z_1},
\qquad
k_{1,B2}=0.4e^{z_2},
\]
so
\[
d(\theta)
=
z_2-z_1
=
\log
\{k_{1,B2}/k_{1,B1}\}.
\]
The generating rates are
\[
(k_{1,B1},k_{1,B2})
=
(0.2,0.8).
\]
Inference places a uniform prior on \((z_1,z_2)\) over
\([-2,2]^2\).

The posterior event used for the primary route is
\[
h_{\mathrm{gene}}(\theta)
=
\mathbf 1
\left\{
|z_2-z_1|\ge\log 2
\right\},
\qquad
\Psi_{\mathrm{gene}}
=
\E_{\pi}
\{h_{\mathrm{gene}}(\Theta)\}.
\]
Thus the route records whether a proposal changes posterior membership
in the event of at least a two-fold rate contrast. The method above
applies to any measurable event function \(h\). Here the threshold defines
the primary gene-network experiment.

The retained residual is calibrated separately from the transition
outcomes. A quadratic surface is fitted to 512 stratified
particle-filter evaluations at \(N=256\) and then fixed before the
independent reference calculation and method comparison. The fitting
rule, coefficients, and parity checks are reported in the Supplementary
Material.

\section{Event-allocation comparisons}
\label{sec:targeted-repair}

In the gene-network study, we compare the route-conditioned kernel with the fully coupled
kernel from the same retained starts. Both use \(\min(r,0)\) in the
parameter proposal and the same cross-event inheritance. They differ
in the same-event auxiliary update. Historical component experiments
compare a different pair of residual rules; their proxy-centered
scores and method definitions are reported in the Supplementary Material.

\subsection{Retained states and evaluation}
\label{subsec:cost-aware-confirmation}

The initial distribution assigns equal mass to 96 distinct retained
extended states. They were selected before method evaluation by
weighted sampling without replacement from 6,144
$q(\theta)M_\theta(du)$ candidates. Forty-eight states use independent
log-multiplier coordinates and 48 use center--contrast coordinates.
We condition the comparison on this bank and its representations.
An independent importance-integration calculation gives
$\widehat\Psi_{\rm ref}=0.6524107$, with Monte Carlo standard error
0.0091054. Both methods use this reference.

The fixed-count experiment contains twelve paired trajectories per
state and method: three original trajectories and nine additional
replicates. Each trajectory has ten attempts. The prespecified score
averages squared errors at $B\in\{1,3,5,10\}$ and gives equal weight
to the states. First-step event values agree in all 1,152 pairs, as
predicted by Lemma~\ref{lem:event-crossing-risk}. The integrated risks
are 0.149791 and 0.153501 for fully coupled and selective allocation.
The selective-minus-fully-coupled difference is 0.003711, with an
approximate 95\% interval $[0.000603,0.006819]$. Its upper endpoint
exceeds the prespecified noninferiority margin 0.005732. The
Supplementary Material gives the individual horizons, replication
results, and recorded transition costs.

\subsection{Recovery at equal computational budgets}

A separate experiment generates twelve new paired trajectories per
state and method. It evaluates completed prefixes at CPU budgets of
12.5, 25, and 50 seconds. The 25-second primary checkpoint was fixed
before these trajectories, by rounding the previous fully coupled mean
trajectory CPU time to the nearest five seconds. Charged time starts
with state loading and includes transition operations and online
updates. Worker initialization, queueing, checkpoint output, and
serialization are outside this timing convention. The transition that
crosses a checkpoint is excluded from its event average.

For each checkpoint, we square each trajectory's error before averaging
over replicates and states. Paired differences within each state give
the conditional Monte Carlo variance. Approximate intervals use a
Welch--Satterthwaite calculation with first-order uncertainty from the
common reference. The experiment contains 2,304 trajectories and
62,614 transition attempts. Every trajectory completes at least one
transition at every checkpoint.

At the primary checkpoint, selective allocation reduces mean-squared
error from 0.101442 to 0.080080, a 21.1\% reduction. The paired difference
is $-0.021362$, with approximate 95\% interval $[-0.027497,-0.015227]$.
The mean completion counts are 9.60 and 16.92.
Figure~\ref{fig:route-allocation-mechanism}(b) shows the risk differences
at all three CPU budgets. The 25-second wall-time comparison gives a
20.7\% reduction under the recorded execution conditions.
The Supplementary Material gives all CPU and wall-time
endpoints, followed by reference-sensitivity calculations.

Together, these comparisons show how selective auxiliary allocation
changes the use of a finite budget. The shared cross-event subkernel
fixes the first event law. Subsequent states and completion counts
enter the final event average jointly. At the specified CPU budgets,
their combined effect gives smaller estimation error on this bank.
The Supplementary Material examines the residual component in a
predator--prey model.

\begin{figure}[!htbp]
\centering
\includegraphics[width=\textwidth]
{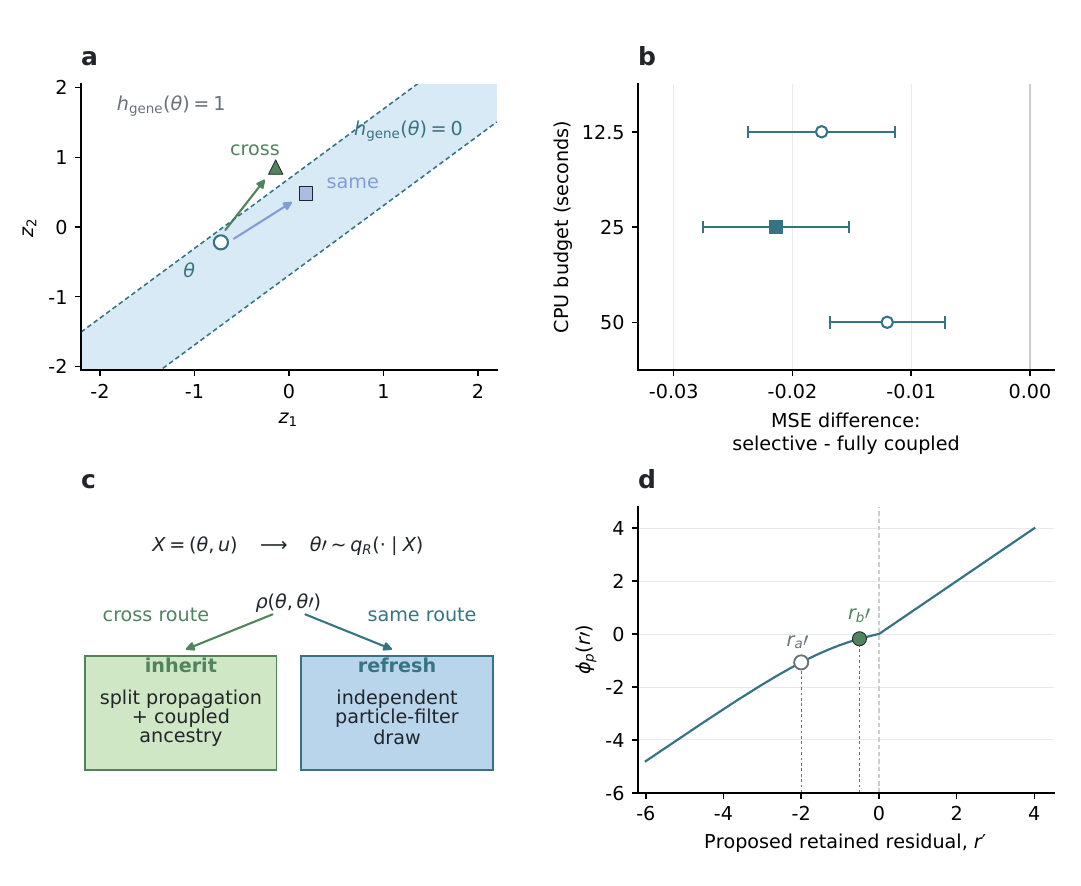}
\caption{Event-based allocation and budget risk in the gene-network study.
(a) The event $h_{\rm gene}(\theta)$ partitions the parameter plane.
(b) Selective-minus-fully-coupled MSE differences on the fixed 96-state
bank at three CPU budgets. Intervals are approximate pointwise 95\%
Monte Carlo intervals with first-order reference uncertainty. The filled
square marks the primary 25-second budget. Both methods use $q_R$.
(c) The parameter route selects independent refreshment or coupled
inheritance. (d) The residual correction $\phi_p(r')$ at $p=0.30$,
with two illustrative proposed residuals.}
\label{fig:route-allocation-mechanism}
\end{figure}
\clearpage

\subsection{Allocation in the paired transcription problem}
\label{subsec:paired-rccr}

We also apply the allocation to the paired record, event, prior, and
four starts of Section~\ref{subsec:paired-posterior-risk}. Each channel
compares selective allocation with the residual proposal $q_R$ to the original
independent-filter baseline at $N=600$. The fine channel includes the
four combinations of original or residual proposal and selective or
full coupling. Fixed centering surfaces use the existing quadrature
nodes. The filter couples complete reaction paths and shares systematic
resampling offsets. Its auxiliary proposal includes a $0.1$ fresh-record component.

An initial 56-trajectory pilot is followed by 112 new trajectories
under the same seven-cell design. At 120 CPU seconds, the new-sample
RCCR-minus-baseline MSE differences are $-0.014764$ in the coarse channel
and $-0.000531$ in the fine channel. Their approximate 95\% intervals
are $[-0.075791,0.046263]$ and $[-0.078407,0.077344]$.
The Supplementary Material reports both batches, all component contrasts, shorter-budget
prefixes, and the initialization-time differences between batches.
The seven-cell design evaluates the residual proposal and auxiliary allocation using the same error
component as the observation-refinement study.

\section{Discussion}
\label{sec:discussion}

Observation refinement affects both the posterior and its numerical
recovery. The bootstrap construction gives an explicit case in which
posterior uncertainty falls and finite-run total risk rises. Its
finite-particle extension specifies how observation resolution can offset
a fixed particle count. The retained-state bound applies more broadly to
fresh-estimator proposals and identifies where functional discrepancy
and a large likelihood weight jointly obstruct recovery.

The paired transcription study measures the two risk components on a
fixed dataset. At the primary CPU budget, the fine observation has lower
conditional total risk. In the independent precision confirmation,
doubling the particle count reduces the number of completed transitions
and increases computational MSE in both channels. These comparisons use prescribed initial states
and account for completed-prefix computation times.

Once the observation is fixed, algorithms with the same posterior share
the posterior-variance term. Their total-risk difference is their
computational-MSE difference. This identity links the observation question
to event-based auxiliary allocation. A common cross-event subkernel
preserves the next event law, while continuation moments describe the
effect of later within-event states. The budget identity also accounts
for initialization and conditional execution costs. On the gene-network bank, selective
allocation reduces event MSE by 21.1\% at the primary CPU checkpoint.

Joint observation and estimator design can use this functional-risk
criterion. Quantitative guarantees for general sequential filters require
control of their retained-state laws. Guarantees across computation budgets
also require the joint state--cost continuation loss. For functionals
beyond events, the allocation identity must account for the size of each
functional change.

\section*{Supplementary Material}
The Supplementary Material contains complete proofs, the extended-state
implementation and prospective centering specification, controlled and
cross-system experimental details, the finite-candidate estimator calibration,
and additional development and sensitivity results.

\section*{Disclosure Statement}
No potential conflict of interest was reported.

\spacingset{1}
\bibliographystyle{jasa_pmcmc}
\bibliography{references}
\end{document}


\date{}
\bigskip\bigskip\bigskip
\begin{center}
{\LARGE\bf Supplementary Material for ``Information--Computation Inversion in Pseudo-Marginal MCMC''}
\end{center}
\medskip
\spacingset{1}

\setcounter{section}{0}
\renewcommand{\thesection}{S\arabic{section}}
\renewcommand{\thesubsection}{S\arabic{section}.\arabic{subsection}}
\setcounter{table}{0}
\renewcommand{\thetable}{S\arabic{table}}
\setcounter{figure}{0}
\renewcommand{\thefigure}{S\arabic{figure}}
\setcounter{equation}{2}
\renewcommand{\theequation}{S\arabic{equation}}
\setcounter{algorithm}{0}
\renewcommand{\thealgorithm}{S\arabic{algorithm}}
\section{Proofs}
\label{supp:proofs}

\subsection{Proof of Theorem 1}
\label{supp:proof-functional-risk-obstruction}

Fix $\theta\in\Theta_+$ and $w>0$ with $H(\theta)<\infty$.
Let
\[
S_{\theta,w}
=
\{\theta'\in\Theta:q_w(\theta,\theta')>0\}.
\]
From $x=(\theta,w)$, the proposal draws
\[
\theta'\sim q_w(\theta,\theta')\,\mu(d\theta'),
\qquad
w'\sim Q_{\theta'}.
\]
On $S_{\theta,w}$, the Metropolis--Hastings ratio is
\[
R(x,x')
=
\frac{
\pi(\theta')w'q_{w'}(\theta',\theta)
}{
\pi(\theta)wq_w(\theta,\theta')
}.
\]
Therefore
\begin{align*}
A(\theta,w)
&=
\int_{S_{\theta,w}}
q_w(\theta,\theta')
\left\{
\int
\{1\wedge R(x,x')\}
Q_{\theta'}(dw')
\right\}
\mu(d\theta')
\\
&\le
\frac{1}{\pi(\theta)w}
\int_{S_{\theta,w}}
\pi(\theta')
\left\{
\int
w'q_{w'}(\theta',\theta)
Q_{\theta'}(dw')
\right\}
\mu(d\theta')
\\
&\le
\frac{1}{\pi(\theta)w}
\int_{\Theta}
\pi(\theta')
\left\{
\int
w'q_{w'}(\theta',\theta)
Q_{\theta'}(dw')
\right\}
\mu(d\theta')
\\
&=
\frac{H(\theta)}{w}.
\end{align*}
The first inequality uses $1\wedge r\le r$ on the forward support. The
second extends the integral from $S_{\theta,w}$ to $\Theta$ and uses
nonnegativity. Thus no mutual-support assumption is needed for this
upper bound. Since $A(\theta,w)\le1$,
\[
A(\theta,w)
\le
\min\left\{1,\frac{H(\theta)}{w}\right\}.
\tag{S1}
\]

The density calculation is shorthand for the same
Radon--Nikodym argument when the proposal kernels are expressed relative
to a common dominating measure.

A rejected pseudo-marginal Metropolis--Hastings proposal leaves the
complete extended state unchanged. Conditional on no previous
acceptance, every subsequent transition therefore starts again from
$(\theta,w)$ and has the same total acceptance probability
$A(\theta,w)$. Hence
\[
\Pr(
\text{no acceptance in transitions }1,\ldots,B
\mid
\Theta_0=\theta,W_0=w
)
=
\{1-A(\theta,w)\}^{B}.
\]
Using (S1),
\[
\Pr(
\text{no acceptance in }1{:}B
\mid\theta,w
)
\ge
\left(
1-\frac{H(\theta)}{w}
\right)_+^B.
\tag{S2}
\]

On the event in (S2),
\[
\Theta_1=\cdots=\Theta_B=\theta,
\qquad
\widehat\Psi_B=\psi(\theta).
\]
Since squared error is nonnegative,
\begin{align*}
\mathbb E[
(\widehat\Psi_B-\Psi)^2
\mid\theta,w]
&\ge
\{\psi(\theta)-\Psi\}^2
\Pr(\text{no acceptance in }1{:}B\mid\theta,w)
\\
&\ge
\{\psi(\theta)-\Psi\}^2
\left(
1-\frac{H(\theta)}{w}
\right)_+^B.
\end{align*}
This proves the conditional bound.

Now start the chain from
\[
\widetilde\pi(d\theta,dw)
=
\pi(\theta)wQ_\theta(dw)\,\mu(d\theta).
\]
Because $\psi\in L^2(\pi)$, the stationary risk
$\mathcal R_B(\psi)$ is well defined. On
$D_{\delta,C}\times[M,\infty)$, with $M>C$,
\[
\{\psi(\theta)-\Psi\}^2\ge\delta^2
\]
and
\[
\left(
1-\frac{H(\theta)}{w}
\right)_+^B
\ge
\left(
1-\frac{C}{M}
\right)^B.
\]
Integrating the conditional inequality over the stationary extended
target and restricting the integral to this set gives
\[
\mathcal R_B(\psi)
\ge
\delta^2
\left(
1-\frac{C}{M}
\right)^B
\widetilde\pi
\left(
D_{\delta,C}\times[M,\infty)
\right).
\]
This proves Theorem~1.

\subsection{Proof of the two-region inversion corollary}
\label{supp:proof-information-computation-inversion}

For fixed $w>0$ and $W_1\ge0$,
\[
\Delta
\longmapsto
1\wedge\frac{e^\Delta W_1}{w}
\]
is nondecreasing. Therefore $a_\Delta(w)$ is nondecreasing in
$\Delta$, and
\[
\{1-a_\Delta(w)\}^{B}
\]
is nonincreasing. Taking expectation with respect to the fixed
size-biased law of $W_0^\star$ proves part (i).

For part (ii), the map $x\mapsto 1\wedge(e^\Delta x/w)$ is
concave on $[0,\infty)$. Jensen's inequality and
$\mathbb E(W_1)=1$ give
\[
0
\le
a_\Delta(w)
\le
1\wedge\frac{e^\Delta}{w}.
\]
For every $M>0$,
\begin{align*}
\mathbb E[
a_\Delta(W_{0,\lambda}^\star)]
&\le
\Pr(W_{0,\lambda}^\star\le M)
+
\frac{e^\Delta}{M}.
\end{align*}
For fixed $M$, the first term tends to zero by assumption. Hence
\[
\limsup_{\lambda\to\infty}
\mathbb E[
a_\Delta(W_{0,\lambda}^\star)]
\le
\frac{e^\Delta}{M}.
\]
Letting $M\to\infty$ yields
\[
\mathbb E[
a_\Delta(W_{0,\lambda}^\star)]
\longrightarrow0.
\]
For $0\le a\le1$,
\[
0\le1-(1-a)^B\le Ba.
\]
Consequently,
\[
0
\le
1-Q_B(\Delta;W_{0,\lambda},W_1)
\le
B\,
\mathbb E[
a_\Delta(W_{0,\lambda}^\star)]
\longrightarrow0.
\]
This proves part (ii).

For part (iii), first note that
\[
Q_B(\Delta_c;W_{0,c},W_{1,c})<1.
\]
Indeed, the size-biased current weight $W_{0,c}^\star$ is finite and
strictly positive almost surely. Since $W_{1,c}$ is nonnegative with
mean one,
\[
\Pr(W_{1,c}>0)>0,
\]
so $a_{\Delta_c}(w)>0$ for every finite $w>0$. Hence
\[
\{1-a_{\Delta_c}(W_{0,c}^\star)\}^{B}<1
\qquad\text{almost surely},
\]
and its expectation is strictly below one.

Now, for $\lambda>1$, define
\[
W_{0,r,\lambda}
=
\begin{cases}
\lambda,
&
\text{with probability }1/(\lambda+1),
\\[1mm]
1/\lambda,
&
\text{with probability }\lambda/(\lambda+1).
\end{cases}
\]
Its mean is one. Its size-biased law satisfies
\[
\Pr(W_{0,r,\lambda}^\star=\lambda)
=
\frac{\lambda}{\lambda+1},
\qquad
\Pr(W_{0,r,\lambda}^\star=1/\lambda)
=
\frac{1}{\lambda+1},
\]
and therefore
\[
W_{0,r,\lambda}^\star\longrightarrow\infty
\qquad\text{in probability}.
\]
Part (ii), applied at $\Delta_r$ with
\[
W_{1,r}\stackrel{d}{=}W_{1,c},
\]
gives
\[
Q_B(
\Delta_r;W_{0,r,\lambda},W_{1,c})
\longrightarrow1.
\]
Because the control non-escape probability is strictly below one, a
sufficiently large finite $\lambda$ gives
\[
Q_B(
\Delta_r;W_{0,r,\lambda},W_{1,c})
>
Q_B(
\Delta_c;W_{0,c},W_{1,c}).
\]
Finally,
\[
S(\Delta)=\frac{e^\Delta}{1+e^\Delta}
\]
is strictly increasing. Since $\Delta_r>\Delta_c$,
\[
S(\Delta_r)>S(\Delta_c).
\]
This proves part (iii).

For part (iv), write \(S_i=S(\Delta_i)\), \(i\in\{c,r\}\). Because
\(\Delta_c\geq0\), \(S_c\geq1/2\). For every realization of the control
chain, \(0\leq\widehat\Psi_B\leq1\), and hence
\[
(\widehat\Psi_B-S_c)^2
\leq
\max\{S_c^2,(1-S_c)^2\}
=S_c^2.
\]
It follows that \(\mathcal R^{(0)}_{B,c}\leq S_c^2\). In the risky
chain, the event of no acceptance during transitions \(1{:}B\) has
probability \(Q_B(\Delta_r;W_{0,r,\lambda},W_{1,c})\) and gives
\(\widehat\Psi_B=0\). Therefore
\[
\mathcal R^{(0)}_{B,r,\lambda}
\geq
S_r^2 Q_B(\Delta_r;W_{0,r,\lambda},W_{1,c}).
\]
Part (ii) gives convergence of the right-hand side to \(S_r^2\). Since
\(\Delta_r>\Delta_c\), \(S_r^2>S_c^2\). Every sufficiently large finite
\(\lambda\) thus satisfies
\[
\mathcal R^{(0)}_{B,r,\lambda}
>
S_c^2
\geq
\mathcal R^{(0)}_{B,c},
\]
which proves part (iv).

\paragraph{An exact-estimator comparison.}
The distinction between first escape and squared-error ordering also
appears with $W_0=W_1=1$. Start at $J=0$, let $B=2$, and take
$\Delta>0$. The first proposal is accepted with probability one.
The second returns to region~0 with probability $e^{-\Delta}$.
Consequently $Q_2=0$ and, writing $S=S(\Delta)$,
\[
\mathcal R^{(0)}_2(\Delta)
=e^{-\Delta}(1/2-S)^2+(1-e^{-\Delta})(1-S)^2.
\]
For $\Delta_c=\log(4/3)$ and $\Delta_r=\log2$, both chains are
irreducible and aperiodic, and
\[
S_c=4/7<S_r=2/3,
\qquad
\mathcal R^{(0)}_{2,c}=39/784
<\mathcal R^{(0)}_{2,r}=5/72.
\]
Here the estimator laws coincide. The changing posterior mean and
second-step return probability produce the squared-error ordering.
Corollary~\ref{cor:finite-budget-inversion}(i) orders the probability of first escape; part (iv)
establishes squared-error coexistence under a joint change of posterior
and estimator law.

\subsection{Proof of the lognormal corollary and its witness}
\label{supp:proof-lognormal-inversion}

Let
\[
Y=\log W_0
\sim
\mathcal N(-\sigma_0^2/2,\sigma_0^2).
\]
The size-biased density of $Y$ is proportional to $e^Y$ times its
original Gaussian density. Completing the square gives
\[
\log W_0^\star
\sim
\mathcal N(+\sigma_0^2/2,\sigma_0^2).
\]
For every fixed $K>0$,
\[
\Pr(W_0^\star\le K)
=
\Phi
\left(
\frac{\log K}{\sigma_0}
-\frac{\sigma_0}{2}
\right)
\longrightarrow0.
\]
Thus
\[
W_0^\star\longrightarrow\infty
\qquad\text{in probability}.
\]

Now write $x=\log W_0^\star$ and
\[
Z=\log W_1
\sim
\mathcal N(-\sigma_1^2/2,\sigma_1^2).
\]
The cross-region Metropolis ratio is $e^{\Delta-x+Z}$. Splitting at
$Z=x-\Delta$ gives
\begin{align*}
a_\Delta(x)
&=
\Pr(Z\ge x-\Delta)
+
e^{\Delta-x}
\mathbb E[
e^Z\mathbf 1\{Z<x-\Delta\}]
\\
&=
\Phi
\left(
\frac{\Delta-x-\sigma_1^2/2}{\sigma_1}
\right)
+
e^{\Delta-x}
\Phi
\left(
\frac{x-\Delta-\sigma_1^2/2}{\sigma_1}
\right).
\end{align*}
Corollary~\ref{cor:finite-budget-inversion}(ii) then gives
\[
Q_B(\Delta;\sigma_0,\sigma_1)
\longrightarrow1
\]
for every fixed finite $\Delta$ and $B\ge1$.

For the numerical witness in the supplementary two-region example, $B=50$ and
$\sigma_1=0.5$. Independent one-dimensional numerical integration gives
\[
S(1)=0.7310585786,
\qquad
S(2)=0.8807970780,
\]
and
\[
Q_{50}(1;0.5,0.5)
=
3.6364514\times10^{-11},
\qquad
Q_{50}(2;3,0.5)
=
0.271116845.
\]
These values reproduce the displayed witness to the reported
precision.

\subsection{Event crossings and one-step conditional risk}
\label{supp:proof-event-crossing-risk}

Let $h$ take values in $\{0,1\}$, fix $X=(\theta,u)$, and write
$h_0=h(\theta)$ and $\Psi=\pi(h)$. For a transition kernel $K$, set
\[
\kappa_K(X)
=\Pr_{X,K}\{h(\Theta_1)\ne h_0\}.
\]
This is the probability of an accepted event crossing for a
Metropolis--Hastings transition. Conditional on $X$, the next event
value equals $h_0$ with probability $1-\kappa_K(X)$ and $1-h_0$ with
probability $\kappa_K(X)$. Hence
\begin{align*}
\mathcal R_1(K,\delta_X;h)
&=(1-\kappa_K(X))(h_0-\Psi)^2
  +\kappa_K(X)(1-h_0-\Psi)^2\\
&=(h_0-\Psi)^2
  +(1-2h_0)(1-2\Psi)\kappa_K(X).
\end{align*}
The second equality follows by subtracting the two squared losses.
For two kernels $K_a,K_b$ scored about the same $\Psi$ from the same
$X$, their conditional risk difference is therefore
\[
\mathcal R_1(K_a,\delta_X;h)-\mathcal R_1(K_b,\delta_X;h)
=(1-2h_0)(1-2\Psi)\{\kappa_{K_a}(X)-\kappa_{K_b}(X)\}.
\]
The coefficient is strictly negative when $h_0=0$ and $\Psi>1/2$,
or when $h_0=1$ and $\Psi<1/2$. These are precisely the event values
farther from $\Psi$. In either case, a strictly larger crossing
probability gives strictly smaller one-step conditional risk.
When $\Psi=1/2$, both event values have squared loss $1/4$.

An event crossing requires acceptance. For the fresh-estimator kernel
of Theorem~1, this gives
\[
\kappa_K(\theta,w)\le A(\theta,w)
\le\min\{1,H(\theta)/w\}.
\]
Thus the retained-weight bound also bounds the probability of a
one-step change in the posterior event. For the route-conditioned
Metropolis--Hastings transition, the same probability is
\[
\kappa_K(X)
=\int_{\{\theta'\in\Theta:h(\theta')\ne h_0\}}
q_R(\theta'\mid X)
\left\{\int\alpha(X,X')
C^{\mathrm{inherit}}_{\theta,\theta'}(u,du')\right\}
\,d\theta'.
\]
This expression separates cross-event proposal mass and conditional
acceptance. Both contribute to the one-step mechanism endpoint.
Suppose two kernels use the same parameter proposal at $X$ and, for
each cross-event parameter proposal, the same auxiliary conditional
law and acceptance rule. Their displayed integrands agree, so their
$\kappa_K(X)$ values agree. Since $h$ is binary, this also gives the
same complete distribution of $h(\Theta_1)$ and the same one-step
conditional risk. The same-event auxiliary action may change the
accepted state within the event class and the work spent on the
transition. These changes can affect later transitions; the stated
identity concerns the first transition from the common state $X$.
At stationary initialization, any kernel preserving the same extended
target has $\mathcal R_1=\Psi(1-\Psi)$. The conditional comparison
above identifies the effect within each initial event class.

\subsection{Proof of Proposition 1}
\label{supp:r4-proof-validity}

Let
\[
E=\Theta\times\mathcal U,
\qquad
X=(\theta,u),
\qquad
X'=(\theta',u'),
\qquad
s=\rho(\theta,\theta').
\]
For the off-diagonal balance calculation, let the in-support proposal
subkernel be
\[
G_{\mathrm{in}}(X,dX')
=
q_R(\theta'\mid X)
C^s_{\theta,\theta'}(u,du')
\,\mu(d\theta'),
\qquad \theta'\in\Theta.
\]
If the parameter proposal is generated on a larger ambient space, its
remaining mass
\[
\kappa(X)=1-G_{\mathrm{in}}(X,E)
\]
is assigned directly to the self-transition. The off-diagonal
detailed-balance calculation excludes this diagonal mass.

Define the forward in-support proposal-product measure on $E\times E$ by
\[
\nu(dX,dX')
=
\bar\pi(dX)G_{\mathrm{in}}(X,dX'),
\]
and let $\nu^\leftarrow$ be its image under the swap
$(X,X')\mapsto(X',X)$. Pair symmetry gives
\[
\rho(\theta,\theta')
=
\rho(\theta',\theta)
=
s.
\]
Using the auxiliary swap identity,
\begin{align*}
\nu(dX,dX')
&=
Z^{-1}
p(\theta)\widehat L(\theta,u)
q_R(\theta'\mid X)
M_\theta(du)
C^s_{\theta,\theta'}(u,du')
\mu(d\theta)\mu(d\theta')
\\
&=
Z^{-1}
p(\theta)\widehat L(\theta,u)
q_R(\theta'\mid X)
M_{\theta'}(du')
C^s_{\theta',\theta}(u',du)
\mu(d\theta)\mu(d\theta').
\end{align*}
On the common off-diagonal support of $\nu$ and
$\nu^\leftarrow$,
\begin{equation}
\label{eq:supp-r4-rn}
\frac{d\nu^\leftarrow}{d\nu}(X,X')
=
\frac{
p(\theta')
\widehat L(\theta',u')
q_R(\theta\mid X')
}{
p(\theta)
\widehat L(\theta,u)
q_R(\theta'\mid X)
}
=:R(X,X').
\end{equation}

For completeness, take the common dominating measure
\[
\xi=\nu+\nu^\leftarrow
\]
and write
\[
f=\frac{d\nu}{d\xi},
\qquad
g=\frac{d\nu^\leftarrow}{d\xi}.
\]
On $\{f>0\}$, define $R=g/f$, including $R=0$ when $g=0$; its value
on $\{f=0\}$ is immaterial to the forward proposal. With
\[
\alpha(X,X')=1\wedge R(X,X'),
\]
the accepted off-diagonal flow has density
\[
\alpha f
=
\min(f,g)
\]
with respect to $\xi$. This density is unchanged by interchanging
$X$ and $X'$. The remaining probability, including any out-of-support
proposal mass $\kappa(X)$, is assigned to the diagonal. The full Markov
kernel therefore satisfies detailed balance with respect to $\bar\pi$.

The extended target itself is well defined because
\[
\int_E
p(\theta)\widehat L(\theta,u)
M_\theta(du)\mu(d\theta)
=
\int_\Theta
p(\theta)L(\theta)\mu(d\theta)
=
Z.
\]
Integrating over $u$ gives
\[
\bar\pi(d\theta)
=
Z^{-1}p(\theta)L(\theta)\mu(d\theta),
\]
which is the desired posterior marginal.

The reversibility calculation above requires the extended target and
swap-balanced proposal structure. Unbiasedness identifies the
$\theta$-marginal with the exact posterior. States with
$\widehat L(\theta,u)=0$ have zero $\bar\pi$ mass; a proposed
zero estimator gives a zero Metropolis ratio and hence a
self-transition. Operational numerical exceptions terminate computation
and remain distinct from estimator values.

\subsection{Swap-symmetric couplings imply auxiliary balance}
\label{supp:r4-coupling-disintegration}

The swap identity used in Proposition~1 follows by disintegrating a
joint coupling. Fix a route $s$. Suppose there is a probability kernel,
measurable in the parameter pair $(\theta,\theta')$,
\[
J^s_{\theta,\theta'}(du,du')
\]
on $\mathcal U\times\mathcal U$ with first marginal $M_\theta$,
second marginal $M_{\theta'}$, and exchange property
\begin{equation}
\label{eq:supp-r4-Jswap}
J^s_{\theta,\theta'}(du,du')
=
J^s_{\theta',\theta}(du',du).
\end{equation}
For each parameter pair, the standard Borel property of $\mathcal U$
gives a regular conditional distribution
\[
C^s_{\theta,\theta'}(u,du')
\]
given its first coordinate. Choose a version that is jointly measurable
in $(\theta,\theta',u)$, as required for the complete proposal to be a
Markov kernel. The stagewise conditional simulations below provide such
a version: their rates, probabilities, and record maps are measurable
functions of the parameters and retained record. Hence
\[
J^s_{\theta,\theta'}(du,du')
=
M_\theta(du)
C^s_{\theta,\theta'}(u,du').
\]
Disintegrating the exchanged measure in
\eqref{eq:supp-r4-Jswap} gives
\[
M_\theta(du)
C^s_{\theta,\theta'}(u,du')
=
M_{\theta'}(du')
C^s_{\theta',\theta}(u',du),
\]
which is the auxiliary swap identity.

For independent refresh,
\[
J^{\mathrm{refresh}}_{\theta,\theta'}
=
M_\theta\otimes M_{\theta'},
\]
so the result is immediate. The inherited action below constructs the
required $J^{\mathrm{inherit}}_{\theta,\theta'}$ at the complete
particle-filter-output level.

\subsection{Split marked-jump exchange law}
\label{supp:r4-split-exchange}

Consider one propagation interval of the inherited cross-route
particle update. Let the two predictable guided reaction hazards be
$q=(q_j)_j$ and $q'=(q'_j)_j$. Each side computes its hazards at the
start of the observation interval and immediately after each of its
own reaction events. The guide uses that side's state and the time
remaining until the next observation. The resulting hazards remain
constant until that side's next event or the end of the interval.
An event occurring only on the other side leaves them unchanged.
This is the piecewise-constant proposal law used by the implementation
and by the path importance correction below. Define
\[
a_j=\min\{q_j,q'_j\},
\qquad
b_j=q_j-a_j,
\qquad
c_j=q'_j-a_j.
\]
The paired proposal can be represented by shared, first-only, and
second-only channel-$j$ point-process components with intensities
$a_j$, $b_j$, and $c_j$. With respect to the usual marked-jump reference
measure, a finite realization over the interval has density
\[
\mathcal D
=
\exp
\left[
-\int
\sum_j(a_j+b_j+c_j)\,dt
\right]
\prod_{k\in S}a_{j_k}(t_k)
\prod_{k\in F}b_{j_k}(t_k)
\prod_{k\in G}c_{j_k}(t_k),
\]
with the predictable, pre-event hazards evaluated along the paired
path. Exchanging the two sides leaves $a_j$ unchanged, interchanges
$b_j$ and $c_j$, and interchanges the first-only and second-only
marks. The state updates are exchanged by the same operation. Thus
the paired marked-path law is invariant under the simultaneous swap
of parameter labels and paths. Its first marginal has channel hazards
$a_j+b_j=q_j$, and its second marginal has channel hazards
$a_j+c_j=q'_j$.

The first-side path density is
$\exp\{-\int\sum_jq_j\,dt\}\prod_kq_{j_k}(t_k)$.
Dividing the paired density by this marginal density leaves the
proposed-only survival factor and the conditional shared-event marks.
Thus, conditional on the complete first-side path, proposed-only
channel-$j$ events occur with intensity
\[
q'_j-\min\{q_j,q'_j\},
\]
and, at a retained first-side channel-$j$ event with $q_j>0$, the
conditional probability that the event is shared is
\[
\frac{\min\{q_j,q'_j\}}{q_j}.
\]
Exact retained-path replay together with these conditional draws is
therefore a regular conditional simulation from the same
swap-symmetric paired path law. This construction is the split
coupling used for stochastic population processes
\citep{anderson2018split}. It assumes non-explosion of the guided
processes on the finite observation horizon.

\subsection{Particle weighting and marginal preservation}
\label{supp:r4-weight-exchange}

Let $\lambda_{\theta,j}$ denote the target reaction hazards and
$q_{\theta,j}$ the guided proposal hazards. For a proposed marked path
$\omega$, the continuous-time importance correction is a deterministic
measurable functional of the path and the side-specific parameter. In
the notation used here it has the standard form
\[
W_\theta(\omega)
=
G_\theta(\omega)
\exp
\left[
-\int
\{\lambda_{\theta,0}(t)-q_{\theta,0}(t)\}\,dt
\right]
\prod_k
\frac{
\lambda_{\theta,j_k}(t_k)
}{
q_{\theta,j_k}(t_k)
},
\]
where $G_\theta$ contains the observation contribution and
$\lambda_{\theta,0}$ and $q_{\theta,0}$ are the total hazards. This is
the usual importance-correction structure for conditioned-hazard
particle filters for Markov jump processes
\citep{golightly2014mjp}.

For Proposition~1, marginal validity and nonnegative unbiasedness of
the ordinary particle filter are requirements on $M_\theta$ and
$\widehat L$. The guides used here satisfy, at every hazard update,
\[
0.1\lambda_{\theta,j}\le q_{\theta,j}\le c\lambda_{\theta,j},
\qquad
c=5\ \text{for the gene network},\quad c=2\ \text{for Lotka--Volterra}.
\]
Unmodified channels have $q_{\theta,j}=\lambda_{\theta,j}$.
Thus $q_{\theta,j}=0$ exactly when $\lambda_{\theta,j}=0$.
Between a side's events, its state and target hazards are constant,
and the guide is frozen as specified above. The displayed weight is
therefore the target-to-proposal path density ratio multiplied by the
observation contribution. The common observation-time multinomial
resampling has the normalized particle weights as its categorical
marginal. Conditional expectation at each propagation and resampling
step then gives the usual nonnegative, unbiased particle-filter
normalizing-constant estimate.

The finite-horizon path laws are non-explosive in both models. In the
gene network, promoter indicators are bounded and the Hill response
lies in $[0,1]$. If $V$ is the sum of the four molecule counts, the
total target hazard is bounded by $C_\theta(1+V)$ for finite
$C_\theta$. Each reaction increases $V$ by at most one. The guide cap
therefore bounds its event count by a linear-rate pure-birth process.
For Lotka--Volterra, let $V=X_1+X_2$. Only prey reproduction increases
$V$, at guided rate at most $2c_1V$; predation preserves $V$ and
predator death decreases it. A linear-rate pure-birth process bounds
$V$ on every finite horizon. On each bounded set of populations all
three reaction rates are bounded, which rules out infinitely many
events while $V$ remains bounded. These arguments apply to the
ordinary and coupled marginals.

The coupling changes the joint law of two valid particle-filter
outputs while preserving both marginals.
Applying the two side-specific deterministic propagation and weight
maps equivariantly to the swap-symmetric paired path law preserves
exchange symmetry of the paired output. Each marginal remains the
ordinary guided particle-filter law at its own parameter.

The balance argument uses the mathematical path and ancestry laws.
The frozen network code evaluates them in floating-point arithmetic.
Its propagation loops use an absolute time tolerance of $10^{-15}$.
Raised consistency errors and event safeguards terminate execution.
In ordinary propagation, its categorical sampler can select a zero-rate
channel at a finite-precision endpoint; that branch returns a zero path
weight. Split propagation raises a support error on such a selection.
The ancestry tolerance is specified below. These numerical conventions
are part of the archived implementation.

\subsection{Transpose-symmetric ancestry coupling}
\label{supp:r4-index-exchange}

At a resampling step let $w=(w_i)$ and $v=(v_i)$ be the normalized
particle weights on the current and proposed sides. Define
\[
m_i=\min\{w_i,v_i\},
\qquad
\gamma=\sum_i m_i.
\]
For $\gamma<1$, set
\[
P_{ij}
=
m_i\mathbf 1\{i=j\}
+
\frac{
(w_i-m_i)(v_j-m_j)
}{
1-\gamma
}.
\]
Because
\[
\sum_j(v_j-m_j)=1-\gamma,
\qquad
\sum_i(w_i-m_i)=1-\gamma,
\]
the row and column marginals satisfy
\[
\sum_jP_{ij}=w_i,
\qquad
\sum_iP_{ij}=v_j.
\]
Moreover,
\[
P(v,w)=P(w,v)^\mathsf T.
\]
When $\gamma=1$, necessarily $w=v$, and the diagonal coupling
\[
P_{ij}=w_i\mathbf1\{i=j\}
\]
has the same marginal and transpose properties.

If the retained ancestor is $A=a$, then $w_a>0$ almost surely under
the retained categorical draw and
\[
\Pr(B=j\mid A=a)
=
\frac{P_{aj}}{w_a}
\]
is a regular conditional law of the coupled ancestry pair. Thus the
resampling operation has the correct multinomial marginal on each side
and reverses by transposition. This is the coupling property required
by coupled particle-filter constructions
\citep{jacob2016coupling}.

The frozen network implementation uses the residual formula when
$1-\gamma>10^{-15}$. At $1-\gamma\leq10^{-15}$, it returns the
current ancestor after the diagonal test. Conditional on a retained
ancestor $a$, the discrepancy from the ideal residual draw can be as
large as $(w_a-m_a)/w_a$. Averaging this bound under the fresh categorical
law $w$ gives $1-\gamma$. This bound concerns the fresh auxiliary law;
the retained extended-target law also weights the filter likelihood.
The transpose identity above is for the ideal matrix. The numerical
implementation follows the stated tolerance.

\subsection{Composition to the inherited particle-filter coupling}
\label{supp:r4-coupled-pf-composition}

Assume that the ordinary guided particle filter has the valid marginal
law $M_\theta$ and a nonnegative unbiased likelihood estimator,
retained replay is exact, the guided processes are non-explosive on the
finite observation horizon, and operational numerical exceptions
terminate computation.

Between successive observations, apply the split marked-jump coupling
of Supplementary Section~\ref{supp:r4-split-exchange}; apply each
side's deterministic weighting rule; and, at each common deterministic
resampling time in the frozen particle-filter schedule, apply the
transpose-symmetric ancestry coupling of
Supplementary Section~\ref{supp:r4-index-exchange}. The implementation
resamples at every nonfinal observation, using the same deterministic
schedule on both sides. Induction over observation times gives two
properties simultaneously:

\begin{enumerate}
\item each particle population has exactly the ordinary
particle-filter marginal at its parameter;
\item the complete paired particle-filter output is invariant under
$(\theta,u)\leftrightarrow(\theta',u')$.
\end{enumerate}

Consequently the completed inherited update induces a joint measure
\[
J^{\mathrm{inherit}}_{\theta,\theta'}(du,du')
\]
with marginals $M_\theta$ and $M_{\theta'}$ and with
\[
J^{\mathrm{inherit}}_{\theta,\theta'}(du,du')
=
J^{\mathrm{inherit}}_{\theta',\theta}(du',du).
\]
Supplementary Section~\ref{supp:r4-coupling-disintegration} then gives
a conditional inherited kernel satisfying the auxiliary swap identity
required in Proposition~1.

Zero likelihoods can be included by continuing a zero-weight side with
zero likelihood and a fixed legal particle rule, then mapping it to
the recorded zero-likelihood outcome. Using the same continuation
and recording maps on both sides preserves exchange symmetry. A
positive retained record is complete, and a zero-weight candidate has
zero extended-target mass. The validity statement concerns this
mathematical kernel in exact arithmetic; the preceding subsections
specify the frozen code's numerical conventions.

\subsection{Local cross-route residual ordering}
\label{supp:r4-proof-local-ordering}

Fix the current extended state \(X\) with a positive finite likelihood
estimate and a particular in-support cross-route parameter proposal
\(\theta'\). Suppose the prior densities and both baseline proposal
densities for this pair are positive and finite, and the centering
values are finite. Let
\[
p=p_0(\theta')\in(0,1)
\]
be the baseline probability of proposing back across the partition from
\(\theta'\), and define
\[
\phi_p(r')
=
r'
-
\min\{r',0\}
-
\log
\left[
(1-p)
+
p\exp\{-\min\{r',0\}\}
\right].
\]
This formula defines \(\phi_p\) for finite \(r'\). Extend it by
\(\phi_p(-\infty)=-\infty\). A zero proposed likelihood estimate has
residual \(r'=-\infty\) and is assigned acceptance probability zero.
The full proposed-residual law is thus a probability law on
\([-\infty,\infty)\).

\begin{lemma}[Local cross-route residual ordering]
\label{lem:one-step-action-ordering}
For fixed \(X\) and fixed cross-route \(\theta'\), the part of the log
Metropolis--Hastings ratio that varies with the proposed auxiliary state
equals \(\phi_p(r')\) up to a finite additive constant, with the
extended value \(-\infty\) for a zero estimate. The function
\(\phi_p\) is continuous on \(\mathbb R\) and strictly increasing on
\([-\infty,\infty)\).

Hence \(r'_a>r'_b\) implies
\[
\log R_{\mathrm{MH},a}
>
\log R_{\mathrm{MH},b},
\qquad
\alpha_a\ge\alpha_b.
\]
Conditional on the same \(X\) and \(\theta'\), first-order stochastic
dominance of one complete proposed-residual law over another,
including any mass at \(-\infty\), implies weakly larger expected
one-step acceptance.
\end{lemma}

\noindent\textit{Proof.}\quad

Fix the current extended state $X$ and a particular cross-route
parameter proposal $\theta'$. Terms depending only on $X$,
$\theta'$, the baseline parameter proposal, and the prior are then
constant across the auxiliary actions being compared.

Let
\[
p=p_0(\theta')\in(0,1)
\]
be the baseline reverse cross-route mass and let $r'$ be the residual
computed from the proposed auxiliary state. First take a positive
proposed likelihood estimate, so $r'$ is finite. For a reverse
cross-route proposal,
\[
\log q_R(\theta\mid X')
=
\log q_0(\theta\mid\theta')
-
\min\{r',0\}
-
\log
\left[
(1-p)+p e^{-\min\{r',0\}}
\right].
\]
Also,
\[
\log\widehat L(\theta',u')
=
m(\theta')+r'.
\]
The terms that vary with the proposed auxiliary state therefore reduce
to
\[
\phi_p(r')
=
r'
-
\min\{r',0\}
-
\log
\left[
(1-p)+p e^{-\min\{r',0\}}
\right].
\]

For $r'<0$,
\[
\phi_p'(r')
=
\frac{
p e^{-r'}
}{
(1-p)+p e^{-r'}
}
\in(0,1),
\]
while for $r'>0$,
\[
\phi_p'(r')=1.
\]
At $r'=0$, both branches equal zero, so $\phi_p$ is continuous. It is
strictly increasing on each open half-line and across zero, hence
strictly increasing on all of $\mathbb R$.

Thus $r'_a>r'_b$ implies
\[
\log R_{\mathrm{MH},a}
>
\log R_{\mathrm{MH},b}.
\]
The map
\[
x\longmapsto1\wedge e^x
\]
is nondecreasing, so
\[
\alpha_a\ge\alpha_b.
\]
Strict ordering of log ratios need not give strict ordering of
acceptance probabilities because both may be truncated at one.
For $r<0$, $\phi_p(r)=-\log\{(1-p)+pe^{-r}\}$, which tends to
$-\infty$ as $r\to-\infty$. The extension therefore preserves the
ordering, and the acceptance at $r=-\infty$ is zero.

For the stochastic-dominance statement, conditional on the same
$X$ and $\theta'$, write the one-step acceptance as
\[
g(r)
=
1\wedge
\exp\{K+\phi_p(r)\},
\]
where $K$ is constant across the compared auxiliary actions. The
function $g$, with $g(-\infty)=0$, is bounded and nondecreasing on
$[-\infty,\infty)$. If the complete law of residual $R_a$
first-order stochastically dominates that of $R_b$, then
\[
\mathbb E\{g(R_a)\}
\ge
\mathbb E\{g(R_b)\}.
\]
The lemma gives a conditional one-step ordering under its stated
stochastic-dominance premise. The prospective experiments compare the
resulting chains over multiple transitions.

\section{Finite-run effects of event allocation}
This section proves the event-allocation risk identity in the main text
and gives its two-step and stationary consequences. The argument uses
finite operator telescoping and conditional second moments; these tools
also underlie earlier pseudo-marginal comparisons
\citep{andrieu2009pseudo,andrieu2015convex}.

\subsection{Finite-horizon effects of within-event updates}
\label{sec:allocation-finite-horizon-draft}

Let $P$ and $Q$ be Markov kernels on the full extended state space $E$,
assumed to be standard Borel.
Let $h:E\to\{0,1\}$, $E_j=\{x:h(x)=j\}$, and $\Psi\in[0,1]$.
Assume that, for every $x\in E_j$ and every measurable $A\subset E_{1-j}$,
\begin{equation}
 P(x,A)=Q(x,A).\label{eq:allocation-common-cross-draft}
\end{equation}
Write $D=P-Q$ and $f=h-\Psi$. For a chain with kernel $K$, define
\[
 R_{B,K}(x)=\mathbb E_{x,K}\left[
 \left\{B^{-1}\sum_{t=1}^B h(X_t)-\Psi\right\}^2\right].
\]
The initial value $h(X_0)$ is excluded from the average. In the application,
$\Psi=\bar\pi(h)$ and $\bar\pi$ is the common extended target.

For $n\geq0$, let
\[
 A_{n,K}(x)=\mathbb E_{x,K}\sum_{j=1}^n f(X_j),\qquad
 C_{n,K}(x)=\mathbb E_{x,K}\left\{\sum_{j=1}^n f(X_j)\right\}^2.
\]
Both functions are zero at $n=0$. Products of functions below are pointwise.
Conditioning on the first transition gives
\begin{align}
 A_{n,K}&=K(f+A_{n-1,K}),\\
 C_{n,K}&=K\{f^2+2fA_{n-1,K}+C_{n-1,K}\}.
\label{eq:allocation-moments-draft}
\end{align}

\begin{proposition}[Finite-horizon allocation identity]
\label{prop:allocation-finite-horizon-draft}
Suppose \eqref{eq:allocation-common-cross-draft} holds. Let
$S_t=\sum_{j=1}^t f(X_j)$, with $S_0=0$. For each integer $B\geq1$,
\begin{align}
 R_{B,P}(x)-R_{B,Q}(x)
 =\frac1{B^2}\sum_{t=0}^{B-1}\mathbb E_{x,P}\bigl[
 &2\{S_t+f(X_t)\}D A_{B-t-1,Q}(X_t)\nonumber\\
 &+D C_{B-t-1,Q}(X_t)\bigr].
\label{eq:allocation-finite-horizon-draft}
\end{align}
In particular, the last term is zero. If $g=Ph=Qh$, then
\begin{equation}
 R_{2,P}(x)-R_{2,Q}(x)
 =\{1/4-\Psi+h(x)/2\}Dg(x).
\label{eq:allocation-two-step-draft}
\end{equation}
\end{proposition}

\begin{proof}
Condition \eqref{eq:allocation-common-cross-draft} implies that
$D(x,\cdot)$ is supported on $E_{h(x)}$, has total mass zero, and
annihilates $f$ and $f^2$. Define the continuation loss
\[
 V_t(x,s)=B^{-2}\{s^2+2sA_{B-t,Q}(x)+C_{B-t,Q}(x)\},
 \qquad 0\leq t\leq B.
\]
Thus $V_B(x,s)=s^2/B^2$ and
$V_t(x,s)=\int Q(x,dy)V_{t+1}(y,s+f(y))$.
Telescoping along a $P$ path yields
\[
 R_{B,P}(x)-R_{B,Q}(x)
 =\sum_{t=0}^{B-1}\mathbb E_{x,P}
 \int D(X_t,dy)V_{t+1}(y,S_t+f(y)).
\]
On the support of this signed measure, $f(y)=f(X_t)$. The squared
constant term integrates to zero. Expanding the remaining two terms gives
\eqref{eq:allocation-finite-horizon-draft}. All functions are bounded for
fixed $B$, so these conditional expectations and finite sums are valid.
For $B=2$, only $t=0$ remains. Use
$A_{1,Q}=g-\Psi$ and $C_{1,Q}=(1-2\Psi)g+\Psi^2$ to obtain
\eqref{eq:allocation-two-step-draft}.
\end{proof}

The identity places the effect of an auxiliary update in two continuation
quantities: the mean of the remaining centered event sum and its second
moment. Updates within an event class can change both quantities while
preserving the next event value in distribution.

For a bounded measurable $v$, define
\[
 \omega_j(v)=\sup_{y\in E_j}v(y)-\inf_{y\in E_j}v(y),\qquad
 \tau(x)=\tfrac12\lvert P(x,\cdot)-Q(x,\cdot)\rvert(E).
\]
Here $|\cdot|$ is the variation measure of a finite signed measure.

\begin{proposition}[Control by variation within an event class]
\label{prop:allocation-oscillation-draft}
Under the assumptions of Proposition~\ref{prop:allocation-finite-horizon-draft},
\begin{align}
 |R_{B,P}(x)-R_{B,Q}(x)|
 \leq\frac1{B^2}\sum_{t=0}^{B-1}\mathbb E_{x,P}\Bigl[
 \tau(X_t)\bigl\{
 &2|S_t+f(X_t)|\,\omega_{h(X_t)}(A_{B-t-1,Q})\nonumber\\
 &+\omega_{h(X_t)}(C_{B-t-1,Q})\bigr\}\Bigr].
\label{eq:allocation-oscillation-draft}
\end{align}
If $\kappa(x)$ is the common probability of leaving $E_{h(x)}$, then
$\tau(x)\leq1-\kappa(x)$.
\end{proposition}

\begin{proof}
The positive and negative parts of $D(x,\cdot)$ each have mass $\tau(x)$
and are supported on $E_{h(x)}$. Subtracting any constant from $v$ leaves
$Dv(x)$ unchanged. The Jordan decomposition therefore gives
$|Dv(x)|\leq\tau(x)\omega_{h(x)}(v)$.
Apply this inequality to the two terms in
\eqref{eq:allocation-finite-horizon-draft}, followed by the triangle inequality.
The restrictions of $P(x,\cdot)$ and $Q(x,\cdot)$ to $E_{h(x)}$ each have
mass $1-\kappa(x)$, which bounds $\tau(x)$.
\end{proof}

If the common crossing probability is constant on each event class,
the event sequence has the same two-state Markov law under both kernels.
Then $A_{n,Q}$ and $C_{n,Q}$ are constant on each event class and the
finite-horizon risks agree for every $B$. In applications, the continuation
quantities can vary within a class. Equation~\eqref{eq:allocation-oscillation-draft}
identifies the variation that must be controlled to obtain a quantitative
risk guarantee.

\subsection{The crossing profile after a within-event update}
Let $P$ and $Q$ be Markov kernels on a standard Borel space $E$.
Fix a measurable $h:E\to\{0,1\}$ and write $E_j=\{x:h(x)=j\}$.
Assume that
\[
 P(x,A)=Q(x,A),\qquad x\in E_j,\quad A\subseteq E_{1-j}
\]
for every measurable $A$. Define $D=P-Q$ and
\[
 \kappa(x)=P(x,E_{1-h(x)})=Q(x,E_{1-h(x)}),\qquad g=Ph=Qh.
\]
For $\Psi\in[0,1]$, set
\[
 R_{B,K}(x)=\mathbb E_{x,K}\left[
 \left\{\frac1B\sum_{t=1}^B h(X_t)-\Psi\right\}^2\right].
\]
The average starts after the first transition.

\begin{proposition}[Two-step crossing profile]
\label{prop:crossing-profile-two-step}
For every $x\in E$,
\begin{align}
 R_{2,P}(x)-R_{2,Q}(x)
 &=\{1/4-\Psi+h(x)/2\}Dg(x)\\
 &=\{1/4-\Psi+h(x)/2\}\{1-2h(x)\}D\kappa(x).
 \label{eq:crossing-profile-two-step}
\end{align}
Also, $Dg=P^2h-Q^2h$.
\end{proposition}
\begin{proof}
The signed measure $D(x,\cdot)$ has mass zero and is supported on
$E_{h(x)}$. Thus $Dh=0$, $Dh^2=0$, and
$D(hg)(x)=h(x)Dg(x)$. Expanding the square and conditioning on the
first transition gives
\[
 4\{R_{2,P}(x)-R_{2,Q}(x)\}
 =\{1-4\Psi+2h(x)\}Dg(x).
\]
Since $g=h+(1-2h)\kappa$, restriction to $E_{h(x)}$ yields
$Dg(x)=(1-2h(x))D\kappa(x)$. Finally,
$P^2h-Q^2h=Pg-Qg=Dg$.
\end{proof}

The quantity $D\kappa(x)$ measures how the two updates change the
expected crossing probability at the resulting full state. For
$1/4<\Psi<3/4$, the coefficient of $D\kappa(x)$ in
\eqref{eq:crossing-profile-two-step} is negative in both classes.
In this range, a larger expected next crossing probability lowers
two-step risk. Outside this range the coefficient in one class changes
sign. The target event probability therefore enters the allocation
criterion directly.

\begin{proposition}[A paired-path diagnostic]
\label{prop:paired-two-step-diagnostic}
Suppose two chains start at the same full state $x$. Couple their first
transitions so that $H_1^P=H_1^Q$, where $H_t^K=h(X_t^K)$, and so that
their full states agree whenever the first transition crosses the event
boundary. Couple their second event values to agree whenever their first
full states agree. Then, almost surely,
\[
 \left\{\frac{H_1^P+H_2^P}{2}-\Psi\right\}^2
 -\left\{\frac{H_1^Q+H_2^Q}{2}-\Psi\right\}^2
 =\{1/4-\Psi+h(x)/2\}(H_2^P-H_2^Q).
\]
\end{proposition}
\begin{proof}
When $H_2^P=H_2^Q$, both sides vanish. Otherwise, the first transition
could not have crossed, so $H_1^P=H_1^Q=h(x)$. Expanding the two
squares and using $(H_2^K)^2=H_2^K$ proves the identity.
\end{proof}

\subsection{Stationary risk and within-class variation of the crossing profile}
Assume additionally that $P$ and $Q$ are reversible with respect to the
same probability measure $\bar\pi$, and let $\Psi=\bar\pi(h)$.
Define $R_{B,K}(\bar\pi)=\int R_{B,K}(x)\bar\pi(dx)$ and
\[
 \mathcal E_K(v)=\frac12\int\bar\pi(dx)K(x,dy)\{v(y)-v(x)\}^2.
\]

\begin{proposition}[The first stationary risk difference]
\label{prop:stationary-four-step-profile}
Under these assumptions,
\[
 R_{B,P}(\bar\pi)=R_{B,Q}(\bar\pi),\qquad B=1,2,3,
\]
and
\[
 R_{4,P}(\bar\pi)-R_{4,Q}(\bar\pi)
 =\frac18\langle g,Dg\rangle_{\bar\pi}
 =\frac18\{\mathcal E_Q(g)-\mathcal E_P(g)\}.
\]
\end{proposition}
\begin{proof}
Write $f=h-\Psi$. Stationarity gives
\[
 B^2 R_{B,K}(\bar\pi)
 =B\langle f,f\rangle_{\bar\pi}
   +2\sum_{j=1}^{B-1}(B-j)\langle f,K^jf\rangle_{\bar\pi}.
\]
Since $Pf=Qf=g-\Psi$, the lag-one terms agree. Reversibility implies
$\langle f,K^2f\rangle=\langle Kf,Kf\rangle$, so the lag-two terms
also agree. At lag three,
\[
 \langle f,P^3f\rangle-\langle f,Q^3f\rangle
 =\langle g-\Psi,D(g-\Psi)\rangle
 =\langle g,Dg\rangle.
\]
The last equality uses $D1=0$ and $\bar\pi D=0$. Its coefficient at
$B=4$ is $2/16$. The Dirichlet-form expression follows from
$\mathcal E_K(g)=\langle g,(I-K)g\rangle$.
\end{proof}

Cross-event flow contributes equally to both Dirichlet forms.
Their difference concerns variation of $g$ within an event class.
This is distinct from the Dirichlet form of $h$, which is already
fixed by the common crossing subkernel. The result applies to a
stationary initial distribution; a fixed starting-state bank defines
a different risk.

\subsection{Event allocation at a computational budget}\label{supp:joint-state-cost}

Fix an event $h:E\to\{0,1\}$ on a standard Borel extended state
space $E$. Let $\Psi$ denote its posterior probability. The state includes
the retained likelihood estimate and its auxiliary record.
State-dependent computing times affect the completed prefix
\citep{murray2021anytime}. Selective
allocation and full coupling use the same parameter proposal. Their
accepted transitions across the event boundary also agree. We first
identify an additional relation between their state kernels, then compare
the event averages completed within a fixed computational budget.

\subsubsection{The refresh component in the actual state kernels}

Let $M_{\theta'}$ be the fresh auxiliary law and
$C_{\theta,\theta'}$ an auxiliary conditional kernel satisfying
\[
 M_\theta(du)C_{\theta,\theta'}(u,du')
 =M_{\theta'}(du')C_{\theta',\theta}(u',du).
\]
Fix $0<\epsilon<1$ and write
$C_\epsilon=\epsilon M+(1-\epsilon)C$.
The parameter proposal $q(\theta'\mid x)$ may depend on the current
retained state. It is fixed across the following three constructions.
Each uses the same acceptance function
\[
 \alpha(x,y)=1\wedge
 \frac{p(\theta')\widehat L(\theta',u')q(\theta\mid y)}
 {p(\theta)\widehat L(\theta,u)q(\theta'\mid x)},
 \qquad x=(\theta,u),\quad y=(\theta',u'),
\]
where $p$ is the prior density. A zero numerator causes rejection.
All constructions use $C_\epsilon$ on cross-event proposals. On
same-event proposals, $P$ uses $M$, $Q$ uses $C_\epsilon$, and $S$ uses $C$.
Their rejection probabilities complete the three Markov kernels.

\begin{proposition}[Mixture relation for event allocation]
\label{prop:bridge-mixture}
Suppose the model evidence is finite and strictly positive,
the fresh likelihood estimate is nonnegative and unbiased,
and the displayed acceptance rule is well defined on positive target
states. Then $P,Q,S$ are reversible for the same extended posterior, and
\begin{equation}
 Q=\epsilon P+(1-\epsilon)S. \label{eq:bridge-mixture}
\end{equation}
In particular $P(x,\cdot)\ll Q(x,\cdot)$ and
$P(x,A)\leq \epsilon^{-1}Q(x,A)$ for every measurable $A$.
If $D=P-Q$ and $\kappa(x)$ is their common crossing probability, then
\begin{equation}
 \tau(x):=\tfrac12|D(x,\cdot)|(E)
 \leq (1-\epsilon)\{1-\kappa(x)\}. \label{eq:bridge-tv}
\end{equation}
\end{proposition}

\begin{proof}
Fresh auxiliary pairs have joint law $M_\theta\otimes M_{\theta'}$,
which obeys the same exchange relation as $C$. The mixtures also obey
it. The same-event and cross-event route indicators are symmetric
under exchanging the two states. The accepted flux for each
construction is the minimum of the forward and reverse fluxes, which
proves detailed balance, including the diagonal rejection measure.
Unbiasedness gives the posterior parameter marginal.

On cross-event proposals the accepted measures in $P,Q,S$ coincide.
On same-event proposals the candidate measure of $Q$ is the indicated
mixture of those of $P$ and $S$. Multiplication by the common
$\alpha(x,y)$ preserves this relation. The holding mass is one minus
the accepted mass away from the current state, so it satisfies the
same relation. This includes rejection outside prior support and
rejection of zero likelihood estimates. Equation~\eqref{eq:bridge-mixture}
follows. Nonnegativity gives the measure domination.
Finally $D=(1-\epsilon)(P-S)$. The restrictions of $P$ and $S$ to the
current event class both have mass $1-\kappa(x)$; their crossing
restrictions agree. Their total variation distance is at most
$1-\kappa(x)$, proving \eqref{eq:bridge-tv}.
\end{proof}

The paired transcription implementation in Section~\ref{supp:paired-rccr} has $\epsilon=0.1$.
Proposition~\ref{prop:bridge-mixture} applies separately within the
original and residual proposal families. The mixture relation concerns
state distributions. Execution times depend on the implementation of
each component.

\subsubsection{A difference identity for the completed-prefix risk}

For method $K$, let $\mathsf H_K(x,dy,dc)$ be the joint conditional
law of the next state and its nonnegative charged cost, with state
marginal $K(x,dy)$. This assumption requires the state to contain
the information needed to predict the joint law. If timing depends
on additional history, that history must be included in the state.
Assume cumulative costs diverge almost surely from the states considered.
The completed-prefix convention includes zero-cost transitions under this
nonexplosion condition. If an implementation stops on reaching a checkpoint,
the same convention applies when transition costs are strictly positive.
For a rule that stops immediately when the remaining budget reaches zero,
declare that state terminal and set $V_K=\ell$ there. The formulas below
apply at active states; set $a_Q=\delta_Q=0$ at terminal states.

After initialization, write $z=(x,b,n,s)$, where $b\geq0$ is the
remaining budget, $n$ is the number of completed transitions, and
$s\in\{0,\ldots,n\}$ is their event sum. At $n=0$, set $s=0$. Set
\[
 \ell(z)=
 \begin{cases}(s/n-\Psi)^2,&n>0,\\(h(x)-\Psi)^2,&n=0.\end{cases}
 \qquad L_\Psi=\max\{\Psi^2,(1-\Psi)^2\}.
\]
At $n=0$, $x$ is the initialized state. Let $V_K(z)$ be the expected
loss at budget exhaustion. For $c\leq b$, put
$z_{y,c}=(y,b-c,n+1,s+h(y))$. Define
\begin{align}
 a_Q(z;y,c)&={\bf1}_{\{c\leq b\}}\{V_Q(z_{y,c})-\ell(z)\},
 \label{eq:bridge-gain}\\
 \delta_Q(z)&=\int a_Q(z;y,c)
       \{\mathsf H_P-\mathsf H_Q\}(x,dy,dc).
 \label{eq:bridge-residual}
\end{align}
The value of $V_Q(z_{y,c})$ is used only when $c\leq b$.

Run $P$ from $z$, and let $J$ be the index of the first transition
whose cost exceeds the remaining budget. Thus $J=N+1$, where $N$
is the number of additional completed transitions. Under the immediate
stopping rule, define $J=N+1$ as well and give the state reached at
the last completion residual zero. Let $Z_j$ be the augmented state
after $j$ completed transitions, with $Z_0=z$. After termination,
extend the path by an absorbing cemetery state with residual and
$V_P-V_Q$ both zero. This defines all stopped expressions below.

\begin{proposition}[Budget risk difference]
\label{prop:bridge-budget}
Under the preceding assumptions, for each integer $m\geq0$,
\begin{align}
 V_P(z)-V_Q(z)
 &=\mathbb E_{z,P}\sum_{j=0}^{\min(J,m)-1}\delta_Q(Z_j)
       +r_m(z),\label{eq:bridge-stopped}\\
 |r_m(z)|&\leq L_\Psi\Pr_{z,P}(J>m).\label{eq:bridge-tail}
\end{align}
Consequently,
\begin{equation}
 V_P(z)-V_Q(z)=\lim_{m\to\infty}
 \mathbb E_{z,P}\sum_{j=0}^{\min(J,m)-1}\delta_Q(Z_j).
 \label{eq:bridge-limit}
\end{equation}
If $\delta_Q\leq0$ at all reachable pre-transition states, then
$V_P(z)\leq V_Q(z)$.
\end{proposition}

\begin{proof}
The event average lies in $[0,1]$, so $0\leq V_K,\ell\leq L_\Psi$.
Conditioning on the next pair gives a bounded solution for the
continuation value. It is unique: after $m$ completed transitions,
the difference of two bounded solutions is bounded by their uniform
difference times the probability of still being active, which tends to zero.
The first-step recursion is
\[
 V_K(z)=\ell(z)+\int_{c\leq b}
       \{V_K(z_{y,c})-\ell(z)\}\mathsf H_K(x,dy,dc).
\]
Subtract the two recursions and set $W=V_P-V_Q$. This gives
\[
 W(z)=\delta_Q(z)+\int_{c\leq b}W(z_{y,c})\mathsf H_P(x,dy,dc).
\]
Iterating $m$ times yields \eqref{eq:bridge-stopped} with
\[
 r_m(z)=\mathbb E_{z,P}[{\bf1}_{\{J>m\}}W(Z_m)].
\]
Since $|W|\leq L_\Psi$, the stated tail bound follows. Nonexplosion
gives $J<\infty$ almost surely and hence the limit. Each finite
partial expectation is nonpositive under the sign condition, so its
limit is nonpositive. No interchange of an infinite series with an
expectation is used. If $\mathbb E J<\infty$, boundedness of
$\delta_Q$ also permits the usual expected-sum notation.
\end{proof}

The budget-crossing transition is represented by the terminal value
$\ell(z)$ and contributes no event value. An exact equality $c=b$
counts as completion. Zero-cost transitions are allowed under
nonexplosion. For a deterministic common cost, the recursion reduces
to the corresponding fixed-count risk. The finite-count continuation
identity is then recovered by expanding the first and second moments
of the remaining event sum.

\subsubsection{Separating state allocation from conditional execution cost}

Disintegrate the joint laws as
$\mathsf H_K(x,dy,dc)=K(x,dy)F_K(x,y,dc)$.
Suppose $P(x,\cdot)\ll Q(x,\cdot)$, as in
Proposition~\ref{prop:bridge-mixture}. Thus a version of $F_Q$
is defined for $P(x,\cdot)$-almost every next state. Write
\[
 G_Q(z;y)=\int a_Q(z;y,c)F_Q(x,y,dc).
\]
Adding and subtracting the measure $P(x,dy)F_Q(x,y,dc)$ gives
\begin{align}
 \delta_Q(z)&=\delta_Q^{\rm state}(z)+\delta_Q^{\rm cost}(z),
 \label{eq:bridge-split}\\
 \delta_Q^{\rm state}(z)
 &=\int G_Q(z;y)\{P-Q\}(x,dy),\\
 \delta_Q^{\rm cost}(z)
 &=\int P(x,dy)\int a_Q(z;y,c)
             \{F_P-F_Q\}(x,y,dc).
\end{align}
This is a decomposition with $Q$ as the specified reference method.
It keeps the dependence of cost on the next full state.

\begin{corollary}[Within-event continuation and cost variation]
\label{cor:bridge-bound}
Suppose $P$ and $Q$ have the same cross-event subkernel. Then
$\delta_Q^{\rm state}$ involves only next states in $E_{h(x)}$.
With total variation distance defined as half the variation norm,
\begin{align}
 |\delta_Q(z)|\leq
 &\tau(x)\,\operatorname{osc}_{E_{h(x)}}G_Q(z;\cdot)\nonumber\\
 &+L_\Psi\int P(x,dy)\,
      \operatorname{TV}\{F_P(x,y,\cdot),F_Q(x,y,\cdot)\}.
 \label{eq:bridge-local-bound}
\end{align}
The oscillation can be taken essentially with respect to $Q(x,dy)$
restricted to the event class; set it to zero if that restriction has
zero mass. The expected sum of the right side of
\eqref{eq:bridge-local-bound} along the first
$\min(J,m)$ attempted transitions under $P$, together with
$L_\Psi\Pr(J>m)$, bounds the absolute budget risk difference.
\end{corollary}

\begin{proof}
The signed measure $D(x,\cdot)$ has mass zero and is supported on
the current event class. Its positive and negative parts each have
mass $\tau(x)$. Subtracting a constant from $G_Q$ therefore gives
the first term. For fixed $z,y$, the function $a_Q(z;y,c)$ ranges
inside $[-\ell(z),L_\Psi-\ell(z)]$, including its value zero on
$c>b$. Its oscillation in $c$ is at most $L_\Psi$. Applying the same
signed-measure bound to $F_P-F_Q$ gives the second term. The final
statement follows from Proposition~\ref{prop:bridge-budget} and
the triangle inequality for its finite sum.
\end{proof}

The mixture relation supplies the explicit factor
$(1-\epsilon)(1-\kappa(x))$ in the first term. The cost term remains
even when $P=Q$. A joint equality on cross-event states and costs
would localize the entire residual to same-event outcomes. Equality
of the scientific state subkernels alone localizes its state term.
If an enlarged timing state is needed, both domination and the
common-crossing condition must be checked on that enlarged space
before applying this corollary. Proposition~\ref{prop:bridge-budget}
still applies to the augmented joint law without these extra conditions.

\subsubsection{Initialization and the observation-risk connection}

Let $\eta_K$ be the law of the initialized augmented state, including
its charged cost. Include a terminal state when initialization
exhausts the budget or reaches its prescribed failure limit, with
the specified fallback event loss. Extend $V_K$ to equal that loss
on terminal states. Then
\begin{align}
 \mathcal R_t(P)-\mathcal R_t(Q)
 &=\int V_Q\,d(\eta_P-\eta_Q)
   +\int(V_P-V_Q)\,d\eta_P.\label{eq:bridge-init}
\end{align}
The first term records initialization differences. The second is
given by Proposition~\ref{prop:bridge-budget}. The state and cost
terms in \eqref{eq:bridge-split} give a further decomposition under
its stated assumptions. Equal initialized parameter values or equal
initial likelihoods do not alone remove the first term.

Condition on the observed data and let $H^\star$ be a posterior event
draw, independent of the algorithmic randomization. For its completed
event average $A_K$,
\[
 \mathbb E\{(A_K-H^\star)^2\mid y\}
 =\Psi(1-\Psi)+\mathbb E\{(A_K-\Psi)^2\mid y\}.
\]
Thus \eqref{eq:bridge-init} also compares the conditional total
risks of algorithms using the same observation. Observation refinement
changes the posterior term and the computing problem. Event allocation
changes the second term within a chosen computing problem.

\subsubsection{A finite-budget check on cost reasoning}

Consider the chain on $\{0,1\}$ with transition matrix
$\left(\begin{smallmatrix}1/10&9/10\\9/10&1/10\end{smallmatrix}\right)$,
target event probability $\Psi=1/2$, and initial state $X_0=1$.
It is irreducible, aperiodic, and reversible for the uniform target.
For $f(x)=x-1/2$, the conditional mean is $Kf=(-4/5)f$ and
$f^2=1/4$. Thus $\mathbb E[f(X_i)f(X_j)]=(-4/5)^{j-i}/4$ for
$j\geq i$, including this fixed initialization. Direct expansion gives
fixed-count risks $R_2=1/40$ and $R_3=3/100$.
Initialization costs zero. At budget three, a transition cost of
$3/2$ completes two values, giving risk $1/40$. A cost of one
completes three values, giving risk $3/100$.
The state kernel and every fixed-count risk agree across implementations,
and every transition in the second implementation is cheaper.
The completed-prefix risk increases by $1/200$ at this budget. This example
identifies the extra condition needed for a speed argument: continuation
loss must have an appropriate ordering as the remaining budget changes.
The budget identity retains that condition through $V_Q$.

\paragraph{Identical marginals can give different budget risks.}

A two-state example shows why the joint law matters. Under both schemes,
the successive event values $Y_j$ are independent Bernoulli$(1/2)$
variables, $h(x)=x$, and $\Psi=1/2$. Initialization has zero cost.
In scheme A, transition $j$ costs $1+Y_j$. In scheme B, its cost is
independently drawn from $\{1,2\}$ with equal probabilities. The two
schemes have the same transition kernel, invariant target, and sequence
law of the costs. Consequently they also have the same completion-count
distribution at every budget.

At budget two, scheme A always yields a squared error of $1/4$.
A first simulated event value of one consumes the entire budget; a first zero
can be followed by another completed value only if it too is zero.
Under scheme B, the completion count is one with probability $3/4$
and two with probability $1/4$, independently of the event values.
Its risk is therefore $3/16+1/32=7/32$.
At budget three, direct enumeration gives risks $1/8$ and $29/192$
for A and B, respectively. For A, prefixes with nonzero loss have
probabilities $1/8$, $1/8$, and $1/4$: the event strings $000$, $001$
(the last value is uncompleted), and $11$ (the last value is uncompleted).
Each has loss $1/4$. For B, the completion-count probabilities for
one, two, and three values are $1/4$, $5/8$, and $1/8$; their risks
are $1/4$, $1/8$, and $1/12$.
Thus the risk ordering changes with the budget even though the
transition and cost marginals agree. Equation~\eqref{eq:bridge-residual}
retains the dependence that distinguishes the two schemes.

\section{Route-conditioned implementation specification}
\label{supp:algorithm}

\subsection{Extended state and transition order}
The implemented transition stores the parameter \(\theta\), the retained
nonnegative likelihood estimate, and the compact particle information
needed by the inherited cross-route update. One transition proceeds in
the following order: evaluate the retained residual and route
normalizer; draw the parameter proposal from \(q_R\); determine the
pair route; generate the route-specific auxiliary proposal; evaluate
the proposed likelihood; evaluate the reverse route probability from
the proposed state; and apply the exact Metropolis--Hastings correction.
A rejection leaves the complete extended state unchanged.

\begin{algorithm}[!htbp]
\caption{Implementation form of the route-conditioned transition}
\label{alg:supp-dual}
\begin{algorithmic}[1]
\Require Current extended state \(X=(\theta,u)\)
\State Evaluate \(r(X)\), \(r_-(X)\), \(p_0(\theta)\), and \(Z_R(X)\)
\State Draw \(\theta'\sim q_R(\cdot\mid X)\)
\If{\(\theta'\) is outside the parameter support}
  \State \Return \(X\)
\EndIf
\State Set \(R\gets\rho(\theta,\theta')\)
\If{\(R=\mathrm{cross}\)}
  \State Draw \(u'\sim C^{\mathrm{inherit}}_{\theta,\theta'}(u,\cdot)\)
\Else
  \State Draw \(u'\sim M_{\theta'}\)
\EndIf
\If{\(\widehat L(\theta',u')=0\)}
  \State \Return \(X\)
\EndIf
\State Set \(X'=(\theta',u')\) and evaluate \(q_R(\theta\mid X')\)
\State Set
\[
A=\frac{p(\theta')\widehat L(\theta',u')q_R(\theta\mid X')}
{p(\theta)\widehat L(\theta,u)q_R(\theta'\mid X)}
\]
\State With probability \(1\wedge A\), return \(X'\); otherwise return \(X\)
\end{algorithmic}
\end{algorithm}

A valid zero likelihood estimate is part of the mathematical estimator
law and therefore produces a self-transition. Raised numerical and
consistency exceptions terminate execution. The numerical tolerances
and zero-rate sampling branch in the original network implementation
are specified in Section~\ref{supp:r4-coupled-pf-composition}.

\subsection{Prospective centering surface and retained residual}
\label{supp:m0}
The gene-network prospective experiments use a centering rule calibrated
before method outcomes. The uniform prior region \([-2,2]^2\) is divided into
a \(16\times16\) equal-area grid and sampled twice per cell, giving 512
\(N=256\) particle-filter evaluations. A quadratic surface is fitted by
posterior-mass-weighted least squares over positive likelihood estimates and
the raw residual is centered by its posterior-mass-weighted median.

The resulting effective surface enters the transition. Write
\[
\bar m(z)=\bar\beta_0+\bar\beta_1z_1+\bar\beta_2z_2
+\bar\beta_3z_1^2+\bar\beta_4z_1z_2+\bar\beta_5z_2^2,
\]
with
\[
\begin{aligned}
\bar\beta_0&=-65.75145474657815,\\
\bar\beta_1&=-0.9622162298215284,\\
\bar\beta_2&=-0.18710696147239048,\\
\bar\beta_3&=-0.4640490337196819,\\
\bar\beta_4&=\phantom{-}0.21026611790365335,\\
\bar\beta_5&=-0.38684694546353526.
\end{aligned}
\]
The retained residual is
\[
r(X)=\log\widehat L(\theta,u)-\bar m(z),
\qquad r_-(X)=\min\{r(X),0\}.
\]
\subsection{Analytic cross-route mass}
\label{supp:pcross}
Let \(d=z_2-z_1\), \(a=\log2\), and suppose
\(D'\mid d\sim\mathcal N(d,s_d^2)\) under the baseline random walk. If
the current state lies in the central band, \(|d|<a\), a cross route moves
to \(|D'|\ge a\), giving
\[
p_0(d)=\Phi\!\left(\frac{-a-d}{s_d}\right)
+1-\Phi\!\left(\frac{a-d}{s_d}\right).
\]
If \(|d|\ge a\), a cross route moves to \(|D'|<a\), giving
\[
p_0(d)=\Phi\!\left(\frac{a-d}{s_d}\right)
-\Phi\!\left(\frac{-a-d}{s_d}\right).
\]
The base proposal standard deviation is 0.5. For independent log
multipliers, \(s_d=\sqrt2\times0.5\); for the center--contrast
parameterization, the contrast-coordinate proposal has
\(s_d=0.5\). The normalizer
\(Z_R=1-p_0+p_0e^{-r_-}\) is consequently available in closed form.

\subsection{Split-hazard inheritance}
For each reaction channel \(j\), current and proposed conditioned
hazards \(q_j\) and \(q'_j\) are split into a shared component
\(\min(q_j,q'_j)\) and two residual components. The shared event stream
couples compatible reaction events; each residual stream supplies the
excess hazard of one marginal. This is the split-coupling construction
for stochastic population processes \citep{anderson2018split}. The
conditioned-hazard particle proposal carries the corresponding path
importance correction, following the informative-observation MJP
filtering construction of \citet{golightly2014mjp}.

\subsection{Coupled ancestry resampling}
At an observation time, normalized particle weights \(w\) and \(v\) are
coupled through the matrix derived in Supplementary Section~\ref{supp:r4-index-exchange}. Its
rows and columns recover the two multinomial marginals, and exchanging
the weight vectors transposes the matrix. The coupling is therefore
compatible with the swap identity proved in Supplementary Sections~\ref{supp:r4-proof-validity}--\ref{supp:r4-coupled-pf-composition} and with
the coupled-particle-filter framework of \citet{jacob2016coupling}.

\subsection{Computational accounting}
The 16-state component experiment contains 128 three-transition
trajectories per method. Recorded online particle-filter calls total 335 for
baseline, 340 for the full-residual route-conditioned kernel, and 336 for the
one-sided route-conditioned kernel. These counts record estimator invocations within
that experiment; wall time is reported separately for the prospective study.

The 96-state comparison records particle-filter wall time within each
transition. The paired geometric mean of state-level selective-to-fully-coupled
ratios is 0.5758536199653568. All 96 ratios are below one. The two methods
were run in separate batches with eight workers per batch. This endpoint
records elapsed time inside the particle-filter routines in that execution.
Proposal calculations, calibration, state-bank generation, and posterior
reference construction are separate costs.

\section{Reproducibility specifications}
\label{supp:reproducibility}

This section specifies the models, numerical settings, and computational
accounting. The default reproducibility workflow recomputes the reported
summaries and figures from frozen trajectory-level results. The released
coupled kernel implements the conditional split/index law specified below.
Its propagation stream uses a declared reconstruction key; the historical
per-particle propagation sub-key is unavailable. This key gives a new
random stream for the same specified simulation scheme. Numerical
conventions are described with the auxiliary-law proof.
Numerical computations use NumPy \citep{harris2020numpy} and
SciPy \citep{virtanen2020scipy}.

\subsection{Controlled transcription comparison}

The latent state is
\[
X(t)=\{G_1(t),M_1(t),G_2(t),M_2(t)\},
\qquad
X(0)=(0,0,0,0).
\]
For allele \(j\), promoter activation, promoter deactivation, mRNA
synthesis, and mRNA degradation have hazards
\[
k_{\mathrm{on},j}(1-G_j),\quad
k_{\mathrm{off},j}G_j,\quad
s_jG_j,\quad
d_mM_j.
\]
The baseline rates are
\[
k_{\mathrm{on},1}=k_{\mathrm{on},2}=0.2,\qquad
k_{\mathrm{off},1}=k_{\mathrm{off},2}=0.5,\qquad
s_1=s_2=5,\qquad d_m=1.
\]
Inference uses
\[
\theta=
\bigl(\log m_{\mathrm{on},2},\log m_{s,2}\bigr),
\]
where the two components multiply \(k_{\mathrm{on},2}\) and \(s_2\).
The prior coordinates are independent truncated Gaussian variables with
mean zero, standard deviation \(0.75\), and support \([-4,4]\). The
frequency-reduction truth used in the matched inversion is
\[
\theta^\star=(\log0.3,0)
=(-1.2039728043259361,0).
\]
The fixed alternative is $\theta^{\mathrm{alt}}=(0,\log0.3)$.
Replicate numbers in the paper are one-based; the stored data identifiers
are 0, 1, and 2. The likelihood contrasts use all three datasets,
the particle-filter screen uses replicate 2, and the chain diagnostic
uses replicate 1 with iterations 101--500 retained.

The matched longitudinal schedules are
\[
t_i=i,\qquad i=1,\ldots,20,
\]
with binomial capture probability \(\rho=0.6\). For each matched
replicate, allele-specific and total-count observations share the same
latent trajectory and the same allele-level capture draws:
\[
Y_i^{\mathrm{AS}}=(Y_{1i},Y_{2i}),
\qquad
Y_i^{\mathrm{TC}}=Y_{1i}+Y_{2i}.
\]
Thus the designs have the same 20 observation events, one biological
trajectory, time horizon, and assay-event count; the allele-specific
record has 40 scalar entries and the total-count record has 20 because
the former retains the allele allocation.

The historical fixed-parameter likelihood contrasts restrict each mRNA
coordinate to $\{0,\ldots,108\}$ and propagate the finite generator by
matrix exponentiation. These evaluations require the maximum probability
on the truncation boundary to be at most $10^{-8}$; all pass this gate.
The posterior-event quadrature in Section~\ref{supp:paired-risk} uses
the cutoff and checks specified there. The boundary check is a truncation diagnostic. The reported contrasts
refer to this finite-state likelihood calculation; the threshold does
not supply a rigorous relative-error bound for the infinite-state
observation likelihood.

\subsection{Feed-forward gene-network experiments}

The network state is
\[
X=(G_A,M_A,P_A,G_{B1},M_{B1},G_{B2},M_{B2}),
\]
with all components initialized at zero. Let
\[
R_A=P_A/50,\qquad H(R_A)=\frac{R_A}{1+R_A}.
\]
The downstream promoter activation rates are
\[
k_{\mathrm{on},Bj}(t)=0.05+k_{1,Bj}H\{R_A(t)\}.
\]
Table~\ref{tab:supp-network-reactions} gives the fourteen hazards used
in the experiment.

\begin{table}[!htbp]
\centering
\caption{Reaction system for the feed-forward gene-network experiment.}
\label{tab:supp-network-reactions}
\begin{tabular}{lll}
\hline
Reaction & State change & Hazard\\
\hline
A promoter on & \(G_A:0\to1\) & \(0.2(1-G_A)\)\\
A promoter off & \(G_A:1\to0\) & \(0.5G_A\)\\
A mRNA birth & \(M_A\to M_A+1\) & \(5G_A\)\\
A mRNA death & \(M_A\to M_A-1\) & \(M_A\)\\
A protein birth & \(P_A\to P_A+1\) & \(10M_A\)\\
A protein death & \(P_A\to P_A-1\) & \(P_A\)\\
B1 promoter on & \(G_{B1}:0\to1\) &
\(\{0.05+k_{1,B1}H(R_A)\}(1-G_{B1})\)\\
B1 promoter off & \(G_{B1}:1\to0\) & \(0.5G_{B1}\)\\
B1 mRNA birth & \(M_{B1}\to M_{B1}+1\) & \(5G_{B1}\)\\
B1 mRNA death & \(M_{B1}\to M_{B1}-1\) & \(M_{B1}\)\\
B2 promoter on & \(G_{B2}:0\to1\) &
\(\{0.05+k_{1,B2}H(R_A)\}(1-G_{B2})\)\\
B2 promoter off & \(G_{B2}:1\to0\) & \(0.5G_{B2}\)\\
B2 mRNA birth & \(M_{B2}\to M_{B2}+1\) & \(5G_{B2}\)\\
B2 mRNA death & \(M_{B2}\to M_{B2}-1\) & \(M_{B2}\)\\
\hline
\end{tabular}
\end{table}

The single connected trajectory is observed at
\(t=1,\ldots,20\). The measured components are
\((M_A,M_{B1},M_{B2})\), and capture probability is one. The active
rates are parameterized by
\[
k_{1,B1}=0.4e^{z_1},\qquad
k_{1,B2}=0.4e^{z_2}.
\]
The generating point is
\[
(k_{1,B1},k_{1,B2})=(0.2,0.8),\qquad
(z_1,z_2)=(-\log2,\log2).
\]
The physical target uses a uniform prior in the independent log
multipliers on \([-2,2]^2\). The center--contrast representation
\[
c=(z_1+z_2)/2,\qquad d=z_2-z_1
\]
is its unit-Jacobian push-forward, with support
\(|c-d/2|\leq2\) and \(|c+d/2|\leq2\). The baseline Gaussian
random walk has coordinate standard deviation \(0.5\) in either
representation; out-of-support proposals are rejected before the
particle filter.

The likelihood estimator in the component-matched development experiment is the
conditioned-hazard importance particle filter with \(N=256\),
multinomial resampling after every nonfinal observation, guide floor
\(0.1\) times the target hazard, and guide cap \(5\) times the target
hazard. The implementation safeguards are 500 event proposals per
particle interval and 768000 proposals per likelihood estimate.
Crossing a safeguard produces a fail-stop execution error; it is not
mapped to a likelihood value or an MH rejection.

The scientific classifier is
\[
h(z)=\mathbf1\{|z_2-z_1|\geq\log2\}.
\]
The historical 16-state component experiment uses a separate
gene-network dataset from the 96-state comparison. Its first four
observations are $(0,0,0),(1,0,0),(5,0,0),(1,1,0)$; those of the
96-state dataset are $(0,0,0),(5,0,0),(8,0,0),(8,0,1)$.
The historical quadratic surface has coefficients
\[
\begin{aligned}
(\beta_0,\beta_1,\beta_2)&=(-91.8178654646,\ 0.7354699643,\ 1.4956733337),\\
(\beta_3,\beta_4,\beta_5)&=(-0.3361985590,\ -0.0935573653,\ -0.2017826527)
\end{aligned}
\]
on the basis $(1,z_1,z_2,z_1^2,z_1z_2,z_2^2)$, with residual center
$2.4892286266$. It is separate from the prospective surface in
Supplementary Section~\ref{supp:m0}.

The two historical score centers are
\[
\mu_{\mathrm{truth}}=0.4884105960,
\qquad \mu_{\mathrm{sym}}=0.6034768212.
\]
They are fixed development proxies attached to the two initialization
anchors. Numerically, they equal the anchor-specific event averages
from the concatenated baseline pilot ($t=0,\ldots,200$) and extension
($t=200,\ldots,300$), including the shared endpoint. They define the development score, with a common score center for all
methods at a given anchor. The trajectory score is
\[
D_{3,a}=\left\{\frac13\sum_{t=1}^{3}h(\theta_t)-\mu_a\right\}^2.
\]
The experiment has 16 retained states and eight paired trajectories
per state and method. Its results describe recovery about these
fixed proxies; their uncertainty is not propagated.

\subsection{Gene-network reference and fixed-bank comparison}
\label{supp:prospective-cost-aware}
The primary posterior event probability is evaluated by an independent
pseudo-marginal importance-integration calculation on the prospective
gene dataset. The $N=256$ reference uses 11,264 draws and gives
\[
\widehat\Psi_{\rm ref}=0.6524107281355993,
\qquad \tau=\operatorname{MCSE}=0.009105433218058626.
\]
A separate $N=512$ calculation uses 2,048 draws. The reported MCSE
quantifies simulation precision of the importance-integration reference.
Its use below is a first-order normal approximation; self-normalization
and tail approximation errors are distinct from this MCSE.

The comparison fixes an empirical initial distribution with equal mass
on 96 distinct extended states. They were chosen before method outcomes
by weighted Gumbel-top-$k$ sampling without replacement from 6,144
$q(\theta)M_\theta(du)$ candidate evaluations. The candidate positive-weight
fraction is 0.99935, normalized-weight ESS is 931.65, and maximum
normalized weight is 0.009812. Forty-eight selected states use each
coordinate representation. The inferential calculation below conditions
on the selected bank and its representations. This subsection records
the original three-replicate experiment.
Section~\ref{supp:allocation-experiments} reports the nine-replicate
extension and the separate equal-budget experiment.

The original experiment has three trajectories per method and state and ten attempts per
trajectory, giving 576 primary trajectories. Common random streams
pair the methods within each state and replicate. Different replicates
use distinct streams. The primary horizon set is $\mathcal B=\{1,3,5,10\}$.
Supporting 1.5-fold and 3-fold event definitions use one trajectory per
method and state, giving another 384 trajectories. These are separate
reroutings on the same fixed bank.

For state $i$, paired replicate $r$, method $k$, and horizon $B$, let
$\bar h_{kirB}=B^{-1}\sum_{t=1}^Bh(\theta_{kirt})$. Define
\[
d_{ir}=\frac14\sum_{B\in\mathcal B}
\left\{(\bar h_{RirB}-\widehat\Psi_{\rm ref})^2
-(\bar h_{AirB}-\widehat\Psi_{\rm ref})^2\right\},
\qquad
\widehat\Delta=\frac1{96}\sum_{i=1}^{96}\bar d_i,
\]
where $R$ denotes selective allocation, $A$ fully coupled allocation,
and $\bar d_i$ averages the three paired replicates. Let $s_i^2$ be
their sample variance. The conditional Monte Carlo variance estimate is
\[
v_{\rm MC}=\sum_{i=1}^{96}v_i,
\qquad v_i=\frac{s_i^2}{96^2\,3}.
\]
The common-reference derivative is
\[
g=-\frac{2}{96\cdot3\cdot4}
\sum_{i,r}\sum_{B\in\mathcal B}
(\bar h_{RirB}-\bar h_{AirB}),
\qquad v=v_{\rm MC}+g^2\tau^2.
\]
We use $\widehat\Delta\pm t_{.975,\nu}\sqrt v$, with the
Welch--Satterthwaite approximation
$\nu=v^2/\sum_i(v_i^2/2)$, treating the independently reported $\tau$
as fixed. This interval describes conditional simulation precision with
first-order reference uncertainty. The variance estimate uses three paired replicates per state, and
coverage relies on the stated $t$ approximation. The same calculation is used for
individual horizons. At $B=1$, the paired outcomes coincide and the
difference is identically zero.

The integrated estimate is $0.00529244$. Its approximate
standard error is $0.00299382$, with $\nu=24.093$.
The approximate 95\% interval is $[-0.00088523,0.01147011]$.
The largest state contributes 16.3\% of $v_{\rm MC}$.
The margin $0.00573245$ was set before this comparison
at 15\% of a separate pilot improvement of $0.0382163348$.
That pilot uses the same prospective dataset and shared reference;
its baseline and fully coupled losses are $0.1847155394$ and
$0.1464992046$. The interval spans the prespecified noninferiority margin.

The retained-state geometric-mean PF-time ratios for the 1.5-, 2-,
and 3-fold events are $0.587422$, $0.575854$, and $0.595809$.
These are descriptive measurements from the recorded execution.
The analysis conditions on the 96-state bank.

\subsection{Lotka--Volterra challenge}

The second model has reactions
\[
X_1\to2X_1,\qquad
X_1+X_2\to2X_2,\qquad
X_2\to\varnothing
\]
with hazards
\[
c_1X_1,\qquad c_2X_1X_2,\qquad c_3X_2.
\]
The benchmark starts at \((100,100)\). It is indexed from \(t=1\),
then propagated through 49 unit SSA intervals to \(t=50\). The
generating rates are
\[
(c_1,c_2,c_3)=(0.5,0.0025,0.3),
\]
and observations at \(t=1,\ldots,50\) satisfy
\[
Y_t=X_t+\varepsilon_t,\qquad
\varepsilon_t\sim\mathcal N(0,0.5I_2).
\]
Inference fixes \(c_2=0.0025\), assigns independent
\(\operatorname{Uniform}(0,1)\) priors to \(c_1\) and \(c_3\), and
uses
\[
z_1=\log(c_1/0.5),\qquad z_2=\log(c_3/0.3).
\]
The transformed log prior is \(z_1+z_2+\mathrm{const}\) on
\(z_1<\log2\) and \(z_2<\log(10/3)\), with no imposed lower
truncation. Support is checked before the particle filter. The
scientific classifier is
\[
h_{\mathrm{LV}}(z)=\mathbf1\{z_1-z_2>0\}.
\]

The \(N=32\) likelihood estimator is a conditioned-hazard
importance particle filter with the complete-path Radon--Nikodym
correction, multinomial resampling at each observation, guide floor
\(0.1h_j\), guide cap \(2h_j\), and an interval safeguard of 1000
events. The proposal-support rule is \(q_j=0\) exactly when the target
hazard \(h_j=0\). Operational safeguard hits are fail-stop outcomes.

The calibration stage supplies the quadratic retained-state
centering surface
\[
m_0(z)=\sum_{k=0}^{5}\beta_k b_k(z),
\quad
b(z)=(1,z_1,z_2,z_1^2,z_1z_2,z_2^2),
\]
with
\[
\begin{aligned}
\beta_{0:2}&=(-364.58732474,-89.06739593,47.33067526),\\
\beta_{3:5}&=(-926.23718806,761.07521897,-1003.45974061).
\end{aligned}
\]
and residual center
\[
c_0=-0.09732775515666958.
\]
The Gaussian proposal covariance is
\[
\Sigma_q=
\begin{pmatrix}
0.004496262499 & 0.002216540536\\
0.002216540536 & 0.005259224310
\end{pmatrix}.
\]
The independently calibrated reference covariance used in the
reference-proxy calculation is
\[
\Sigma_{\mathrm{ref}}=
\begin{pmatrix}
0.000637470122 & 0.000314256200\\
0.000314256200 & 0.000745641155
\end{pmatrix}.
\]
The class-mass reference proxy is
\[
\mu_{\mathrm{LV}}=0.042886382556749036.
\]
It is the Gaussian class mass implied by the independently fitted
quadratic posterior reference. The challenged trajectories use this
reference proxy without refitting it.

The challenged comparison uses \(N=32\), horizon \(B=3\), 16 initial
retained states, and eight stochastic replicates per initial state and
method, for 128 trajectories and 384 transition attempts per method.
The prospective mechanism bank is disjoint: it contains 32 new initial
states, 16 in each residual channel, with four stochastic replicates per
state and method.

\subsection{Historical method definitions and analysis inputs}

The historical gene and Lotka--Volterra comparison uses a full-residual
kernel with $r$ and a one-sided kernel with $\min(r,0)$. Both use
same-event refreshment and cross-event inheritance. The historical
full-residual kernel is therefore route-conditioned. The 96-state
experiment uses $\min(r,0)$ for both kernels; its comparator inherits
on all proposals. Table~\ref{tab:supp-method-rules} distinguishes them.

\begin{table}[htbp]
\centering
\caption{Residual and auxiliary rules in each experiment.}
\label{tab:supp-method-rules}
\begin{tabular}{lll}
\hline
Experiment/kernel & Residual & Inheritance route\\
\hline
Historical baseline & $0$ & none\\
Historical full-residual tilt & $r$ & none\\
Historical coupling & $0$ & all\\
Historical full-residual & $r$ & cross only\\
Historical one-sided & $\min(r,0)$ & cross only\\
96-state fully coupled & $\min(r,0)$ & all\\
96-state selective & $\min(r,0)$ & cross only\\
\hline
\end{tabular}
\end{table}

The accompanying \texttt{figure\_reproduction} directory contains the
trajectory inputs, scoring code, and figure calculations.
Its historical scores are formed from $h_1,h_2,h_3$ before averaging.
Development proxies remain explicitly identified. Conditional Monte
Carlo summaries preserve paired method replicates within each fixed
state. The Lotka--Volterra pooled rank correlation is descriptive;
within-anchor permutations are a sensitivity calculation on the
four specified parameter anchors.

\section{Extended numerical evidence}
\label{supp:tables}

\subsection{Matched observation inversion}
\begin{table}[!htbp]
\centering
\caption{Matched deterministic-reference log-likelihood contrasts. Each
row uses the same latent trajectory and capture draws for total-count
and allele-specific observation.}
\label{tab:supp-inversion}
\begin{tabular}{rrrr}
\hline
Replicate & Total count & Allele-specific & Paired increase\\
\hline
1 & -0.194017 & 1.951546 & 2.145563\\
2 &  1.050495 & 2.019913 & 0.969418\\
3 &  0.429545 & 5.267897 & 4.838352\\
\hline
\end{tabular}
\end{table}

The particle-likelihood evaluation contains 40/40 finite evaluations under
total-count observation and 29/36 under allele-specific observation,
with seven nonfinite evaluations and 16 allele-specific tasks crossing
the low-ESS criterion. Among finite evaluations, median absolute
log-likelihood error is 0.20407 and 0.82288, respectively, and RMSE is
0.356466 and 1.441564. The frozen screening sets contain 10 total-count
and 9 allele-specific parameter nodes. Seven nodes are common to both
sets, giving 28 matched tasks per regime. On this common-node subset,
total count has 28/28 finite tasks, zero low-ESS tasks, median absolute
error 0.213791, and RMSE 0.381689. Allele-specific observation has 24/28
finite tasks, four nonfinite tasks, 12 low-ESS tasks, median finite-task
absolute error 0.837058, and RMSE 1.471972. The common-node subset gives
the same reliability ordering. The corresponding finite-chain comparison has
4/4 initialized total-count chains and 3/4 initialized allele-specific
chains. For each chain, the rejection fraction is the proportion of
rejected transitions among retained iterations. Its maximum across
initialized chains is 0.455 for total count and 0.550 for allele-specific
observation. The observation-level relation
\(Y^{\mathrm{TC}}=Y_1+Y_2\) makes the total-count record a deterministic
coarsening of the paired allele-specific record, so the statistical
contrast compares a finer observation map against its coarsening under
the same captured total.

\subsection{Finite-candidate estimator calibration}
\label{supp:carpmd}

The offline calibration compares three fixed estimator configurations:
C1 is a bootstrap filter with \(N=600\), C2 is a bootstrap filter with
\(N=1200\), and C3 is a bridge filter with \(N=600\), four bridge
steps, guide power \(0.75\), guide floor \(10^{-12}\), and systematic
resampling at ESS fraction \(0.5\). Selection and holdout partitions
contain 624 and 312 pair records, corresponding to 864 and 432
scheduled particle-filter tasks.

For candidate \(c\), let \(\mathcal S_c\) denote its strata in the
selection partition. The selector minimizes lexicographically
\[
\mathcal Q(c)=
\left(
I_c,\,
\max_{s\in\mathcal S_c}p^{\mathrm{nf}}_{cs},\,
\max_{s\in\mathcal S_c}q^{0.875}_{cs},\,
\max_{s\in\mathcal S_c}p^{\mathrm{lowESS}}_{cs},\,
\operatorname{med}_{s\in\mathcal S_c}t_{cs},\,
\mathrm{ID}_c
\right),
\]
where \(I_c=0\) when every stratum has at least one finite pair record
and \(I_c=1\) otherwise,
\[
\ell_{\mathrm{ratio}}
=
\min\left\{
|\widehat\Delta_\ell-\Delta_\ell^{\mathrm{ref}}|/4,1
\right\},
\]
and $q^{0.875}_{cs}=\ell_{(\lceil0.875m\rceil)}$, where
$\ell_{(1)}\leq\cdots\leq\ell_{(m)}$ are the losses of the $m$ finite
records in stratum $(c,s)$. The remaining components are
the pair nonfinite rate, pair low-ESS rate, and median pair runtime.
The candidate identifier supplies the final deterministic tie-break.

\begin{table}[!htbp]
\centering
\caption{Selection-partition score components. The selector minimizes
the columns from left to right after the common finite-coverage
indicator, which equals zero for all six rows.}
\label{tab:supp-calibration-selection}
\begingroup\setlength{\tabcolsep}{4pt}
\begin{tabular}{llrrrr}
\hline
Regime & Candidate & \shortstack{Worst\\nonfinite} & \shortstack{Worst\\$q_{.875}$ loss} &
\shortstack{Worst\\low-ESS} & \shortstack{Median\\runtime (s)}\\
\hline
Allele-specific & C1 & 0.8125 & 1.000000 & 1.0000 & 8.4548\\
Allele-specific & C2 & 0.6250 & 1.000000 & 1.0000 & 17.1305\\
Allele-specific & C3 & 0.0000 & 0.707491 & 0.6875 & 17.2347\\
Total count & C1 & 0.0000 & 0.215513 & 0.0000 & 7.9692\\
Total count & C2 & 0.0000 & 0.211402 & 0.0000 & 15.8233\\
Total count & C3 & 0.0000 & 0.340970 & 0.1875 & 16.6023\\
\hline
\end{tabular}
\endgroup
\end{table}

The selected configurations are C3 for allele-specific observation and
C2 for total-count observation. They are then evaluated without
reselection on the disjoint holdout.

\begin{table}[!htbp]
\centering
\caption{Independent-holdout score components for the same three
candidates.}
\label{tab:supp-calibration-holdout}
\begingroup\setlength{\tabcolsep}{4pt}
\begin{tabular}{llrrrr}
\hline
Regime & Candidate & \shortstack{Worst\\nonfinite} & \shortstack{Worst\\$q_{.875}$ loss} &
\shortstack{Worst\\low-ESS} & \shortstack{Median\\runtime (s)}\\
\hline
Allele-specific & C1 & 0.875 & 0.591501 & 1.000 & 10.9486\\
Allele-specific & C2 & 0.125 & 1.000000 & 1.000 & 21.7678\\
Allele-specific & C3 & 0.000 & 0.344818 & 0.125 & 20.3174\\
Total count & C1 & 0.000 & 0.215159 & 0.000 & 9.8704\\
Total count & C2 & 0.000 & 0.158954 & 0.000 & 19.7150\\
Total count & C3 & 0.000 & 0.198901 & 0.125 & 19.2835\\
\hline
\end{tabular}
\endgroup
\end{table}

Both selected configurations are Pareto non-dominated within this
three-candidate holdout. After calibration, the maximum rejection
fractions across initialized chains are $0.430$ for total count and
$0.615$ for allele-specific observation. For allele-specific observation, the minimum multi-chain tail ESS
across the two parameter coordinates is $43.8$.

\subsection{Historical proxy-centered recovery}
\begin{table}[htbp]
\centering
\caption{Trajectory squared losses about the fixed development
proxies. Each experiment has 16 retained states and eight paired
trajectories per state and method.}
\label{tab:supp-gene}
\begin{tabular}{lrr}
\hline
Method & Gene & Lotka--Volterra\\
\hline
Baseline & 0.224213 & 0.353898\\
Full-residual tilt & 0.205651 & 0.301074\\
Coupling & 0.197281 & 0.328801\\
Full-residual route-conditioned & 0.197572 & 0.306927\\
One-sided route-conditioned & 0.176087 & 0.285697\\
\hline
\end{tabular}
\end{table}

The gene one-sided-minus-full-residual mean difference is $-0.0214844$;
the corresponding Lotka--Volterra difference is $-0.0212292$.
These comparisons change the residual rule, with the same route-dependent
auxiliary allocation. The approximate conditional Monte Carlo standard
errors are $0.006304$ and $0.010521$, respectively. Both scores use the
fixed development proxies specified above. Figure~\ref{fig:supp-scope-boundaries}
shows their paired state-level values.

\begin{figure}[H]
\centering
\includegraphics[width=\textwidth]{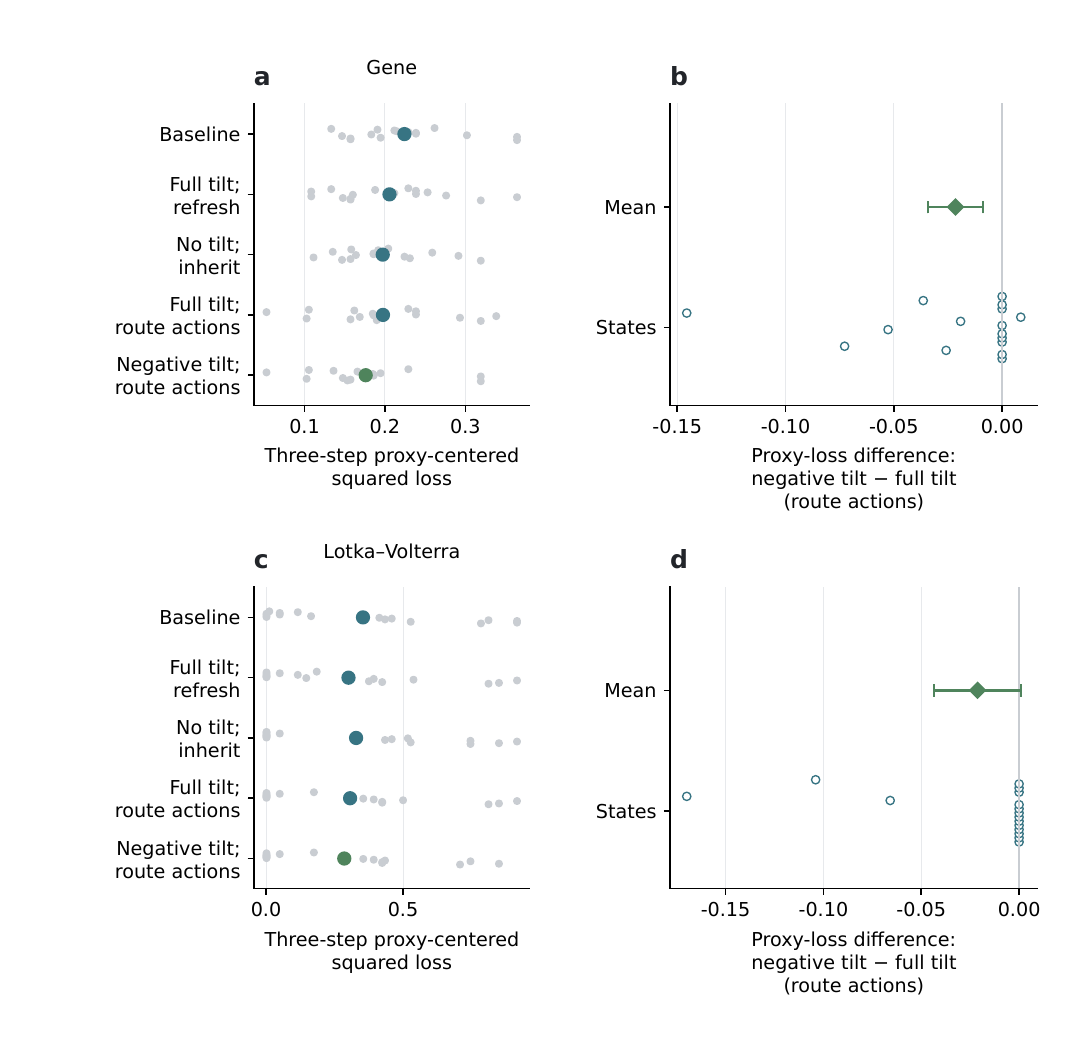}
\caption{Historical proxy-centered trajectory squared losses.
Panels (a,c) show gene and Lotka--Volterra method means (large points)
and state scores (small points).
Panels (b,d) show paired one-sided-minus-full-residual route-conditioned
differences and approximate conditional Monte Carlo intervals.
The comparison concerns the residual rule; both combined kernels
allocate inheritance to cross-event proposals.}
\label{fig:supp-scope-boundaries}
\end{figure}

\subsection{Fixed-bank gene comparison}
\begin{table}[htbp]
\centering
\caption{Selective-minus-fully-coupled trajectory-loss differences
on the fixed 96-state bank. Intervals are the approximate conditional
Monte Carlo intervals defined in Supplementary
Section~\ref{supp:prospective-cost-aware}, with first-order reference
uncertainty.}
\label{tab:supp-cost-aware-confirmation}
\begin{tabular}{lrrr}
\hline
Horizon & Difference & Lower & Upper\\
\hline
1 & 0 & 0 & 0\\
3 & 0.001896 & -0.008164 & 0.011956\\
5 & 0.010079 & -0.002262 & 0.022421\\
10 & 0.009194 & -0.000757 & 0.019146\\
Integrated & 0.005292 & -0.000885 & 0.011470\\
\hline
\end{tabular}
\end{table}

The first-step event values and accepted event crossings agree in all
288 paired trajectories. Of 102 attempts from the 34 states with $h_0=0$,
35 cross the event. Of 186 attempts from the 62 states with $h_0=1$,
37 cross. Both methods give the same counts. Their mean first-step
squared loss is $0.2308927$ about the common reference.

The integrated losses are $0.1501954$ for fully coupled allocation
and $0.1554879$ for selective allocation. The selected bank is fixed
for these conditional calculations. Particle-filter times total
9880.370 and 5869.580 recorded seconds, respectively, across the
parallel trajectory tasks; their ratio is $0.594065$. The geometric
mean of the 96 paired state ratios is $0.575854$. Summed task times
and elapsed whole-experiment time are different quantities.

\subsection{Lotka--Volterra residual strata}
The negative-residual bank has four fixed parameter anchors,
$(-.06,.06)$, $(-.02,.02)$, $(.02,-.02)$, and $(.06,-.06)$, each with four
selected residual quantiles. The pooled rank correlation between
$M_-$ and the baseline-to-full-residual-tilt proxy-loss gain is
$0.7791913$. Within-anchor correlations are
$-0.7745967$, $-0.6324555$, $0.8$, and $1.0$, in the same anchor order.
As a sensitivity calculation, enumerate the $24^4$ permutations of
gain within the four anchors. The one-sided tail fraction is
$0.3624494$. This calculation retains the fixed anchor membership
and assesses residual ordering within the design. It is an analysis
sensitivity result, separate from the originally specified pooled test.

The positive-residual observations are retained in the supplied analysis
inputs. Their event labels and proxy-centered losses are available for
reanalysis.

\subsection{Retained-likelihood holding episodes}
\label{supp:holding-diagnostic}

Historical pilot chains supply a descriptive persistence diagnostic.
Within each chain, we subtract the mean retained log likelihood across
its uncensored holding episodes.
Across 527 uncensored holding episodes, chain-centered retained log
likelihood is positively associated with subsequent holding length
in four anchor-by-representation groups
(Spearman \(\rho_s=0.409\)--\(0.565\)). These chains use a separate
gene-network dataset. Their centering and data specification are given
in the Supplementary Material. The diagnostic measures persistence
along those chains.

\begin{figure}[H]
\centering
\includegraphics[width=0.9\textwidth]{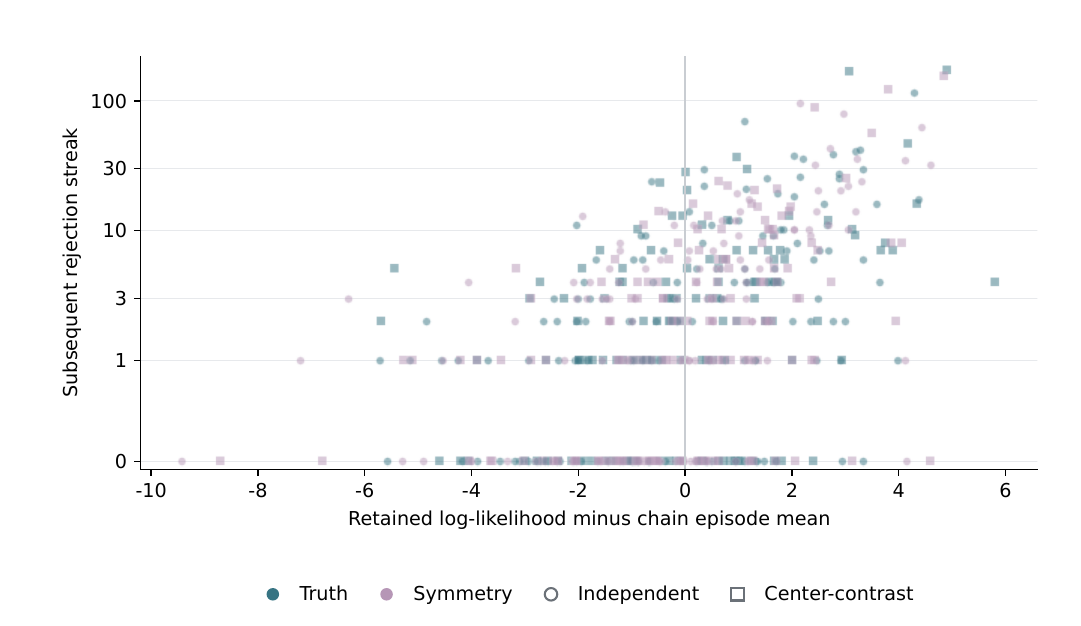}
\caption{Retained log likelihood and holding length in the historical
gene-network chains. Each point is an uncensored holding episode.
The retained log likelihood is centered by the mean across episodes
in its chain. Color identifies the initialization anchor; shape identifies
the parameter representation. The vertical axis is linear near zero and
logarithmic above one. The diagnostic uses 527 episodes from a dataset
distinct from the prospective 96-state comparison.}
\label{fig:supp-holding}
\end{figure}

\section{Replication and equal-budget evidence}
\label{supp:allocation-experiments}

\subsection{Fixed-count extension}
The nine additional paired replicates use the same 96-state bank,
event, centering surface, proposal, and likelihood-estimator settings
as the original three-replicate experiment. They add 1,728 trajectories
and 17,280 transition attempts. The original and additional runs together
give twelve paired replicates per state and method.

\begin{table}[htbp]
\centering
\caption{Combined fixed-count results. The integrated row averages the
four prespecified horizons. Differences are selective minus fully coupled.}
\label{tab:supp-fixed-count-combined}
\begin{tabular}{lrrrl}\hline
Horizon & Fully coupled & Selective & Difference & 95\% interval\\\hline
1 & 0.224013 & 0.224013 & 0.000000 & $[0.000000,0.000000]$\\
3 & 0.155804 & 0.157392 & 0.001588 & $[-0.002922,0.006097]$\\
5 & 0.128022 & 0.133353 & 0.005331 & $[-0.000133,0.010795]$\\
10 & 0.091323 & 0.099247 & 0.007924 & $[0.002134,0.013714]$\\
integrated & 0.149791 & 0.153501 & 0.003711 & $[0.000603,0.006819]$\\
\hline\end{tabular}\end{table}

The combined interval crosses the prespecified margin 0.00573245.
The nine additional replicates give an integrated difference of 0.003183.
Its interval is $[-0.000512,0.006878]$.
In the additional execution, the ratio of total selective to fully
coupled transition wall time is 0.601201; the corresponding CPU ratio
is 0.601316. These ratios use recorded transition totals, including
non-PF transition operations. The original state-level geometric mean
PF-time ratio uses a different aggregation and timing scope.

An exploratory two-step analysis uses the 1,152 combined trajectory
pairs. All first-step event values agree. The paired-path identity
above holds exactly for every pair. There are 71 pairs with different
second-step event values. The two-step squared-loss difference is
0.00297441, with approximate interval $[-0.00056283,0.00651165]$.
This endpoint was examined after the fixed-count experiment. Its
algebraic agreement checks the relation between the formulas and stored
event paths.

\subsection{Separate equal-budget experiment}
The equal-budget experiment uses twelve new paired replicates per
state and method, giving 2,304 trajectories. The two methods share
proposal and acceptance random streams within each pair. Independent
pairs use distinct streams. CPU and wall checkpoints are 12.5, 25,
and 50 seconds. The primary checkpoint of 25 CPU seconds was specified
before the new runs by rounding the previous fully coupled mean CPU
trajectory cost, 26.41578 seconds, to the nearest five seconds.

The clock starts with loading the retained state. It includes state
loading, transition operations, and online updates; worker initialization,
queueing, serialization, and checkpoint output are excluded. Each run
records the first transition beyond the largest checkpoint for both
clocks. A checkpoint uses the longest completed prefix with cumulative
charged time at most its budget. A crossing transition is recorded but
excluded from that prefix. A zero-length prefix would use $h(X_0)$;
no such prefix occurred. These are retrospective checkpoints on recorded
paths, with no within-PF interruption.

\begin{table}[htbp]
\centering
\caption{All equal-budget endpoints, conditional on the fixed state bank.
Intervals are pointwise; 25 CPU seconds is the primary endpoint.}
\label{tab:supp-equal-budget}
\begin{tabular}{lrrrrl}\hline
Clock & Seconds & Fully coupled & Selective & Difference & 95\% interval\\\hline
cpu & 12.5 & 0.135918 & 0.118377 & -0.017541 & $[-0.023709,-0.011374]$\\
cpu & 25.0 & 0.101442 & 0.080080 & -0.021362 & $[-0.027497,-0.015227]$\\
cpu & 50.0 & 0.067070 & 0.055050 & -0.012020 & $[-0.016873,-0.007167]$\\
wall & 12.5 & 0.135882 & 0.118705 & -0.017178 & $[-0.023346,-0.011010]$\\
wall & 25.0 & 0.101426 & 0.080413 & -0.021013 & $[-0.027143,-0.014883]$\\
wall & 50.0 & 0.067526 & 0.055451 & -0.012075 & $[-0.016929,-0.007220]$\\
\hline\end{tabular}\end{table}

\subsection{Conditional uncertainty and reference sensitivity}
For an endpoint, write $a_{kir}$ for its event average under method
$k\in\{P,Q\}$, state $i$, and replicate $r$. Set
$d_{ir}=(a_{Pir}-\widehat\Psi_{\rm ref})^2-
(a_{Qir}-\widehat\Psi_{\rm ref})^2$ and
$\widehat\Delta=96^{-1}\sum_i\bar d_i$.
The integrated fixed-count endpoint first averages $d_{ir}$ over its
four horizons. With $R$ paired replicates per state, let
$v_i=s_i^2/(96^2R)$, where $s_i^2$ is the within-state sample variance
of the paired differences. The combined fixed-count and equal-budget
calculations use $R=12$; the additional fixed-count calculation uses
$R=9$.

For a single endpoint the common-reference derivative is
\[
 g_{\rm ref}=-\frac{2}{96R}\sum_{i,r}(a_{Pir}-a_{Qir}),\qquad
 v=\sum_i v_i+g_{\rm ref}^2\tau^2,
\]
where $\tau=0.0091054332$. For the integrated endpoint the derivative
also averages over horizons. Approximate intervals are
$\widehat\Delta\pm t_{.975,\nu}\sqrt v$, where
$\nu=v^2/\sum_i\{v_i^2/(R-1)\}$, treating the reported $\tau$ as fixed.
The state bank is conditioned upon throughout. These are Monte Carlo
intervals with first-order reference uncertainty.

At 25 CPU seconds, 10,000 within-state paired bootstrap resamples with
a common reference perturbation give a diagnostic interval
$[-0.027204,-0.015467]$. Moving the reference to either endpoint of
$\widehat\Psi_{\rm ref}\pm1.96\tau$ leaves the risk difference negative:
$-0.021434$ and $-0.021290$, respectively. This is a sensitivity
calculation about the existing reference. At the primary checkpoint,
63 of the 96 state-specific point differences are negative; this count
is descriptive and uses no statewise significance threshold.

\section{Observation refinement: constructions and complete proofs}\label{supp:observation-theory}
\subsection{A common risk for statistical and computational error}
Let $\Theta$ have prior $\nu$, let $F$ be an observation, and let $C=T(F)$ be its coarsening. Fix $h$ with $\E h(\Theta)^2<\infty$. For $Y\in\{F,C\}$, define
\[
 p_Y=\E\{h(\Theta)\mid Y\},\qquad
 v_Y=\Var\{h(\Theta)\mid Y\}.
\]
An algorithm returns $A_Y$, using $Y$ and randomization that is conditionally independent of $\Theta$ given $Y$. This requirement includes the initialization and the stopping rule. Assume $\E A_Y^2<\infty$. Write
\[
 V_Y=\E v_Y,\quad E_Y=\E(A_Y-p_Y)^2,\quad
 J_Y=\E\{A_Y-h(\Theta)\}^2.
\]
Here $E_Y$ includes both bias and variance of the computational estimate.

\begin{lemma}[Risk decomposition]\label{lem:risk}
The following identities hold:
\begin{align}
 J_Y&=V_Y+E_Y,\label{eq:risk}\\
 J_F-J_C&=E_F-E_C-G,\qquad
 G=\E(p_F-p_C)^2=V_C-V_F\geq0.\label{eq:contrast}
\end{align}
Conditional on a fixed observation $y$, the corresponding risk is
$j_Y(y)=v_Y(y)+e_Y(y)$, where $e_Y(y)=\E\{(A_Y-p_Y(y))^2\mid Y=y\}$.
\end{lemma}
\begin{proof}
Expand $A_Y-h(\Theta)=(A_Y-p_Y)+(p_Y-h(\Theta))$. Conditional on $(Y,A_Y)$, the second term has mean zero by conditional independence. The cross term therefore vanishes. Next $p_C=\E(p_F\mid C)$ by the tower property. Conditional variance decomposition, followed by expectation, gives $V_C=V_F+\E\Var(p_F\mid C)=V_F+G$. The conditional identity follows from the same expansion at fixed $y$.
\end{proof}

Equation~\eqref{eq:contrast} compares procedures on a common inferential task. Fine-data procedures include every procedure that first computes $C=T(F)$. The results below fix the procedure used for each observation and compare its finite-run risk. They retain the statistical gain $G$ explicitly.

\subsection{Observation refinement under a common particle proposal}
Fix a parameter $\theta$ and a coarse observation $c$. Let $\mu_\theta$ be the latent-state law. For each fine outcome $f$ with $T(f)=c$, let $g_f(z)\geq0$ be its observation probability, and put $g_c=\sum_{f:T(f)=c}g_f$. The index set is finite or countable. Write
\[
 L_f=\int g_f\,d\mu_\theta,\quad L_c=\sum_f L_f,
 \qquad \omega_f=L_f/L_c.
\]
Assume $0<L_c<\infty$ and omit indices with $L_f=0$. Use a single proposal probability $\xi$ for $c$ and every compatible $f$, with $g_c\,\mu_\theta\ll\xi$. Define
\[
 r_f=\frac{d(g_f\mu_\theta)}{L_f\,d\xi},\qquad
 r_c=\frac{d(g_c\mu_\theta)}{L_c\,d\xi}.
\]
Use the same iid draws $Z_1,\ldots,Z_N\sim\xi$. The normalized likelihood estimates are
\[
 W_f=\frac1N\sum_i r_f(Z_i),\qquad W_c=\frac1N\sum_i r_c(Z_i).
\]

\begin{proposition}[Coarsening identity]\label{prop:weights}
We have $r_c=\sum_f\omega_f r_f$, $W_c=\sum_f\omega_f W_f$, and
\begin{equation}\label{eq:convex}
 W_c\cx W_I,\qquad \Prb(I=f)=\omega_f,
\end{equation}
where $I$ is independent of the particle draws. If $\sum_f\omega_f\int r_f^2\,d\xi<\infty$, then
\begin{equation}\label{eq:variance}
 \sum_f\omega_f\Var(W_f)-\Var(W_c)
 =\frac1N\sum_f\omega_f\int(r_f-r_c)^2\,d\xi.
\end{equation}
The remainder vanishes exactly when $r_f=r_c$ $\xi$-almost everywhere for every $f$ of positive $\omega_f$.
\end{proposition}
\begin{proof}
The Radon--Nikodym identity follows from $g_c=\sum_f g_f$ and Tonelli's theorem. Thus $W_c=\E(W_I\mid Z_1,\ldots,Z_N)$, which gives~\eqref{eq:convex} by conditional Jensen's inequality. Since $\int r_f\,d\xi=\int r_c\,d\xi=1$, independence of the draws gives $N\Var(W_f)=\int r_f^2\,d\xi-1$. Pointwise expansion gives
\[
 \sum_f\omega_f(r_f-r_c)^2
 =\sum_f\omega_f r_f^2-r_c^2.
\]
Integration proves~\eqref{eq:variance}. Every summand of the remainder is nonnegative, which gives the equality condition.
\end{proof}

The densities $r_f$ describe the latent posterior relative to the particle proposal. The remainder in~\eqref{eq:variance} measures separation among these latent posteriors. The quantity $G$ measures changes in the posterior mean of $h(\Theta)$. The identity uses classical conditional-expectation and importance-sampling arguments; the role of the $\chi^2$ divergence in importance sampling is developed by \citet{sanzalonso2021}. The convex order in~\eqref{eq:convex} compares the coarse estimator with a mixture over fine outcomes. The fixed-target ordering results of \citet{andrieu2015convex} require a common posterior target and a pointwise order of weight laws at each parameter.

\subsection{An exact risk reversal from the observation mechanism}
Consider a discrete observation $Y=T_Y(Z)$ with $Z\mid\theta\sim M_\theta$. Set
\[
 L_Y(\theta;y)=\Prb_\theta\{T_Y(Z)=y\},\qquad
 m_Y(y)=\int L_Y(\theta;y)\nu(d\theta)>0.
\]
Both observation channels use the prior independence proposal $\theta'\sim\nu$. For each proposal, draw $N$ latent particles independently from $M_{\theta'}$ and evaluate
\begin{equation}\label{eq:bootstrap}
 \widehat L_{Y,N}(\theta';y)
 =\frac1N\sum_{i=1}^N\ind\{T_Y(Z_i)=y\}.
\end{equation}
The current estimate is retained after rejection. This is a one-observation bootstrap likelihood estimator within the pseudo-marginal construction~\citep{andrieu2009pseudo,andrieu2010pmcmc}. For $N=1$, the update is also a prior-proposal instance of the likelihood-free MCMC algorithm of \citet{marjoram2003}.

For the following results, the extended chain starts in its invariant distribution, independently of the unknown $\Theta$ conditional on the data. Let $\theta_1,\ldots,\theta_B$ denote the states after $B$ transitions and let $A_{Y,B}=B^{-1}\sum_{b=1}^B h(\theta_b)$. Define
\begin{equation}\label{eq:VB}
 D_B(\lambda)=\frac{B+2\sum_{k=1}^{B-1}(B-k)\lambda^k}{B^2}.
\end{equation}

\begin{theorem}[One-particle risk]\label{thm:one}
For $N=1$, the parameter transition kernel at observation $y$ is
\begin{equation}\label{eq:lazy}
 P_Y(\theta,d\theta')=(1-m_Y(y))\delta_\theta(d\theta')
       +m_Y(y)\pi_Y(d\theta'),
\end{equation}
where $\pi_Y$ is the posterior. Consequently, for $h\in L^2(\pi_Y)$,
\begin{equation}\label{eq:one-risk}
 e_Y(y;B)=v_Y(y)D_B(1-m_Y(y)),\quad
 j_Y(y;B)=v_Y(y)\{1+D_B(1-m_Y(y))\}.
\end{equation}
\end{theorem}
\begin{proof}
A stationary retained likelihood estimate is positive and therefore equals one. The prior and proposal terms cancel in the acceptance ratio. A candidate is accepted precisely when its simulated observation equals $y$. The joint measure of a proposed parameter and a match is $\nu(d\theta')L_Y(\theta';y)=m_Y(y)\pi_Y(d\theta')$. This proves~\eqref{eq:lazy}. Put $\bar h=h-\pi_Y(h)$. Then $P_Y\bar h=(1-m_Y)\bar h$, so stationarity gives
$\operatorname{Cov}\{h(\theta_b),h(\theta_{b+k})\}=v_Y(1-m_Y)^k$.
Summing all $B^2$ covariance terms gives the first formula in~\eqref{eq:one-risk}; Lemma~\ref{lem:risk} gives the second.
\end{proof}

The predictive probability obeys $m_F(f)\leq m_C(T(f))$. In~\eqref{eq:lazy}, refinement therefore increases the probability of retaining the current parameter. Formula~\eqref{eq:one-risk} combines this persistence with the posterior variance.

\begin{corollary}[A sharp binary threshold]\label{cor:binary}
Let $\Theta\sim\operatorname{Bernoulli}(1/2)$, let $S\in\{0,1\}$ satisfy $\Prb(S=\Theta\mid\Theta)=s\in(1/2,1)$, and take $F=S$, $C=0$, and $h(\theta)=\theta$. Use~\eqref{eq:bootstrap} with $N=1$ and the same prior proposal for both channels. Then
\[
 G=(s-1/2)^2,\quad
 J_C=\tfrac14(1+1/B),\quad
 J_F=s(1-s)\{1+D_B(1/2)\}.
\]
For $B\geq2$, $J_F>J_C$ holds if and only if
\begin{equation}\label{eq:sharp}
 \frac12<s<\frac12+
 \sqrt{\frac{D_B(1/2)-1/B}{4\{1+D_B(1/2)\}}}.
\end{equation}
For $B=1$, $J_F<J_C$.
\end{corollary}
\begin{proof}
The fine posterior success probability is $s$ or $1-s$, each with predictive probability $1/2$. Its variance is $s(1-s)$. The coarse posterior is the prior and its likelihood estimate is identically one. Theorem~\ref{thm:one} gives the risks. Substituting $s(1-s)=1/4-(s-1/2)^2$ into $J_F>J_C$ gives~\eqref{eq:sharp}. For $B\geq2$, the sum in~\eqref{eq:VB} is strictly positive at $1/2$, so the interval is nonempty. For $B=1$, $D_1=1$ and $2s(1-s)<1/2$.
\end{proof}

\paragraph{An exact-likelihood control.}
Keep the binary model and prior proposal, and replace the fine likelihood estimator by the exact likelihood. Given $S=1$, the off-diagonal transition probabilities are $P(0,1)=1/2$ and $P(1,0)=(1-s)/(2s)$. The nonconstant eigenvalue is $\lambda_{\rm ex}=1-1/(2s)$. Thus $J_{F,\rm ex}=s(1-s)\{1+D_B(\lambda_{\rm ex})\}$. The other fine outcome is symmetric. With $s=3/5$ and $B=10$, direct evaluation gives
\begin{center}
\begin{tabular}{lrrr}
Procedure & Posterior variance & Computational MSE & Total risk\\\hline
Coarse, bootstrap & 0.250000 & 0.025000 & 0.275000\\
Fine, exact likelihood & 0.240000 & 0.032448 & 0.272448\\
Fine, bootstrap & 0.240000 & 0.062409 & 0.302409
\end{tabular}
\end{center}
The fine bootstrap excess over the coarse risk is exactly $8771/320000$. The control holds the parameter proposal and observation model fixed. It locates the change in risk ordering in the likelihood-estimation step.

\subsubsection{An informative coarse observation}
The comparison also admits a coarse observation that updates the prior. Given $\Theta$, draw $C$ and $S$ independently, with $\Prb(C=\Theta\mid\Theta)=t$ and $\Prb(S=\Theta\mid\Theta)=s$. Take $Z=(C,S)$ and $F=Z$. Both channels again use the same single-particle latent simulation and prior parameter proposal. For $c,u\in\{0,1\}$, write
\[
 a_\theta(c,u)=t^{\ind\{c=\theta\}}(1-t)^{\ind\{c\ne\theta\}}
 s^{\ind\{u=\theta\}}(1-s)^{\ind\{u\ne\theta\}},
\]
\[
 m_{cu}=\{a_0(c,u)+a_1(c,u)\}/2,\qquad
 p_{cu}=\frac{a_1(c,u)}{a_0(c,u)+a_1(c,u)}.
\]
Theorem~\ref{thm:one} gives the exact expressions
\begin{align*}
 J_C&=t(1-t)\{1+D_B(1/2)\},\\
 J_F&=\sum_{c,u}m_{cu}p_{cu}(1-p_{cu})\{1+D_B(1-m_{cu})\}.
\end{align*}
At $t=51/100$, $s=3/5$, and $B=10$, these yield
\begin{gather*}
 V_C=0.2499,\quad V_F=0.2399078385\ldots,\\
 J_C=0.3148837617\ldots<J_F=0.3535096184\ldots.
\end{gather*}
The exact positive difference is
\[
 \frac{5029327947757690661551616993814999}
 {130206250000000000000000000000000000}.
\]
Here both observations inform $\Theta$, and the fine observation supplies an additional positive gain. Using exact likelihoods with the same prior proposal gives $J_{C,\rm ex}=0.2757876408\ldots$ and $J_{F,\rm ex}=0.2723352711\ldots$. These values follow from the two-state eigenvalue $1-\{2\max(p,1-p)\}^{-1}$ at each posterior success probability $p$, averaged over its predictive distribution.

\subsection{A reversal for every fixed particle count}
Return to the binary signal of Corollary~\ref{cor:binary}. Let $R$ be uniform on $\{1,\ldots,K\}$ and independent of $(\Theta,S)$. Define $Z=(S,R)$, $F=Z$, and $C=0$. The signal $S$ supplies information about $\Theta$. The ancillary coordinate $R$ changes matching resolution while leaving that information unchanged. Generate the same number $N$ of latent pairs in both channels and use~\eqref{eq:bootstrap}.

\begin{theorem}[Finite-particle reversal]\label{thm:finiteN}
Fix $N\geq1$, $B\geq2$, and
\begin{equation}\label{eq:srange}
 \frac12<s<\frac12+\sqrt{\frac{B-1}{8B}}.
\end{equation}
Put $v=s(1-s)$ and
\[
 u_{N,K}=\frac{1-(1-s/K)^N}{2}
       +\frac{1-\{1-(1-s)/K\}^N}{2}.
\]
The stationary computational risk and total risk satisfy
\begin{equation}\label{eq:lower}
 E_F\geq v(1-u_{N,K})^B
       \geq v\left(1-\frac{N}{2K}\right)_+^B,
 \qquad J_F\geq v\{1+(1-u_{N,K})^B\}.
\end{equation}
The coarse total risk is $J_C=(1+1/B)/4$. Define
\[
 a=\frac{1+1/B}{4v}-1.
\]
Then $0<a<1$, and every integer
\begin{equation}\label{eq:K}
 K>\frac{N}{2\{1-a^{1/B}\}}
\end{equation}
gives $J_F>J_C$, although $V_F<V_C$.
\end{theorem}
\begin{proof}
By symmetry, condition on $F=(1,r)$. Let $a_1=s$ and $a_0=1-s$. At a proposed parameter $\theta'$, the matching count $J'$ has law $\operatorname{Bin}(N,a_{\theta'}/K)$. The retained count $j$ is positive under the extended stationary distribution. The prior proposal cancels the prior in the acceptance ratio, leaving $\min(1,J'/j)$. In particular, an accepted proposal requires $J'>0$. Averaging its probability over the prior proposal gives the state-independent bound $\Prb(\text{accept}\mid\theta,j)\leq u_{N,K}$. The binomial union bound gives $u_{N,K}\leq N/(2K)$.

On the event that all first $B$ proposals are rejected, the reported mean equals $h(\theta_0)$. Conditional on the retained state, this event has probability at least $(1-u_{N,K})^B$. Squared error is nonnegative on its complement. Integrating over the stationary initial state gives
\[
 \E(A_{F,B}-p_F)^2\geq
 (1-u_{N,K})^B\E\{h(\theta_0)-p_F\}^2
 =v(1-u_{N,K})^B.
\]
This proves~\eqref{eq:lower}. The same value applies to every fine outcome. The coarse estimator equals one, so its chain consists of independent prior draws. Condition~\eqref{eq:srange} is equivalent to $2v>(1+1/B)/4$. Also $v<1/4$, so $0<a<1$. Inequality~\eqref{eq:K} implies $(1-N/(2K))^B>a$. Substitution in~\eqref{eq:lower} proves the strict comparison.
\end{proof}

For $s=3/5$ and $B=10$, condition~\eqref{eq:K} is $K>2.855028N$ (a rounded-up sufficient coefficient). Thus $K=3N$ suffices for every $N\geq1$. The normalized fresh likelihood estimate has
\[
 \Var(W_F\mid\theta,f)=\frac{1-L_F(\theta;f)}{NL_F(\theta;f)},\qquad
 \sum_f L_F(\theta;f)\Var(W_F\mid\theta,f)=\frac{2K-1}{N}.
\]
The coarse normalized estimate has zero variance. This follows directly from the binomial variance and the $2K$ possible fine outcomes. It identifies an observation-induced weight cost within the same bootstrap estimator. Bounds on retained-weight tails and their effect on convergence are developed in \citet{andrieu2009pseudo,andrieu2015convergence,andrieu2022comparison}; the binomial weights used here also occur in the ABC analysis in \citet{andrieu2022comparison}.

\subsection{Strictly positive observation probabilities}
For the model in Theorem~\ref{thm:finiteN}, replace the fine observation mechanism by
\begin{equation}\label{eq:eps}
 g_\varepsilon(f\mid z)=(1-\varepsilon)\ind\{z=f\}
                        +\frac{\varepsilon}{2K},\quad 0<\varepsilon<1.
\end{equation}
The coarse observation remains $C=0$. The bootstrap likelihood estimate at a matching count $j$ becomes
\[
 \ell_\varepsilon(j)=(1-\varepsilon)j/N+\varepsilon/(2K)>0.
\]

\begin{corollary}[Persistence under positive emissions]\label{cor:positive}
For fixed $N,K,B$, every strict risk reversal at $\varepsilon=0$ persists for all sufficiently small $\varepsilon>0$ under~\eqref{eq:eps}.
\end{corollary}
\begin{proof}
Condition on $f=(1,r)$ and write $b_\theta(j)=\Prb\{\operatorname{Bin}(N,a_\theta/K)=j\}$. For $\varepsilon>0$, work on the finite space $\{0,1\}\times\{0,\ldots,N\}$. The independence proposal mass is $q(\theta,j)=b_\theta(j)/2$, and the stationary mass is
\[
 \widetilde\pi_\varepsilon(\theta,j)=2Kq(\theta,j)\ell_\varepsilon(j).
\]
The normalizing constant is $1/(2K)$ because the prior predictive observation remains uniform. The proposed move from $(\theta,j)$ to $(\theta',j')$ is accepted with probability $\min\{1,\ell_\varepsilon(j')/\ell_\varepsilon(j)\}$. Every ratio has a limit as $\varepsilon\downarrow0$: it is $j'/j$ for positive $j,j'$, zero for $j>0,j'=0$, one for $j=j'=0$, and diverges for $j=0,j'>0$. Thus the accepted transition matrix has an entrywise limit. The limiting stationary mass assigns zero to $j=0$ and equals the stationary mass at $\varepsilon=0$ on positive counts.

The posterior success probability is $p_\varepsilon=(1-\varepsilon)s+\varepsilon/2$. The stationary finite-horizon MSE is the finite sum over state paths of their stationary path probabilities times the squared error of their reported mean. Each term is continuous at zero, and the state space and horizon are finite. Hence the total risk is continuous at zero. A strictly positive risk difference remains positive in a neighborhood of zero.
\end{proof}

An explicit positive-emission example uses $N=K=1$, $s=3/5$, $B=10$, and $\varepsilon=1/100$. Here $p_\varepsilon=599/1000$, $E_F=0.0619334079\ldots$, and
\[
 J_F=0.3021324079\ldots>J_C=0.275.
\]
The exact difference, evaluated on the four-state extended chain, is
\[
 \frac{16682185944029987370297}{614843547888279600250000}>0.
\]
This calculation uses strictly positive likelihood estimates at every proposal.

\subsection{Connection to sequential particle likelihoods}
For independent capture of allele counts $m_1,m_2$ with a common capture probability $\rho$, write
\[
 g_F(y_1,y_2\mid m_1,m_2)
 =\binom{m_1}{y_1}\binom{m_2}{y_2}
 \rho^{y_1+y_2}(1-\rho)^{m_1+m_2-y_1-y_2}.
\]
For total count $c=y_1+y_2$, Vandermonde's identity gives
\[
 \sum_{y_1+y_2=c}g_F(y_1,y_2\mid m_1,m_2)
 =\binom{m_1+m_2}{c}\rho^c(1-\rho)^{m_1+m_2-c}=g_C(c\mid m_1,m_2).
\]
At multiple observation times, conditional independence multiplies the observation probabilities along a latent path. Summation over compatible fine sequences and integration over the common latent-path law commute by Tonelli's theorem. Thus the full likelihoods obey the same coarsening relation. Proposition~\ref{prop:weights} applies when the latent-path proposal is shared, including the one-observation bootstrap case.

\begin{proposition}[Likelihood unbiasedness with adaptive resampling]\label{prop:PF}
Consider a state-space model with initial law $\mu$, Markov transitions $M_t$, and nonnegative bounded observation potentials $g_t$. A bootstrap filter propagates particles with $M_t$. It either retains normalized weights or resamples with conditionally unbiased offspring counts. The resampling decision is measurable with respect to the current particle system. The product of its weighted mean observation increments is an unbiased estimate of the likelihood.
\end{proposition}
\begin{proof}
Let $Z_t^N$ be the product of increments through time $t$, and let $\eta_t^N$ be the weighted particle measure before optional resampling. Define $\Gamma_t^N(\varphi)=Z_t^N\eta_t^N(\varphi)$. At time zero, iid initial particles give $\E\Gamma_0^N(\varphi)=\mu(\varphi)$, with $Z_0^N=1$. Let $\bar\eta_t^N$ denote the weighted measure after the optional resampling. On a resampling step it has equal weights; the conditional offspring property gives
\[
 \E\{\bar\eta_t^N(\varphi)\mid\mathcal F_t\}=\eta_t^N(\varphi).
\]
The same identity holds when no resampling is selected. Here $\mathcal F_t$ contains the current weighted system and the resampling decision, before the fresh resampling randomness.

After propagation and multiplication by $g_{t+1}$, normalization cancels the new likelihood increment, giving
\[
 \E\{\Gamma_{t+1}^N(\varphi)\mid\mathcal F_t\}
 =Z_t^N\eta_t^N\{M_{t+1}(g_{t+1}\varphi)\}
 =\Gamma_t^N\{M_{t+1}(g_{t+1}\varphi)\}.
\]
If an increment is zero, define the unnormalized measure to remain zero thereafter; the same recursion holds. Induction proves that $\E\Gamma_t^N$ equals the model's unnormalized filtering measure. Taking $\varphi=1$ proves the claim. Bounded nonnegative potentials ensure all likelihood expectations exist.
\end{proof}

For systematic resampling, let $U$ be uniform on $[0,1/N)$ and place the $N$ points $U+j/N$, $j=0,\ldots,N-1$, on the cumulative-weight intervals. The expected number in an interval of length $w_i$ is $Nw_i$, by integrating the point indicators over $U$. This verifies the offspring condition of Proposition~\ref{prop:PF}. It gives the likelihood property needed for marginal parameter PMMH. A selected-path extended-target construction may require further conditions on labelled ancestor indices; these are explicit in \citet[Assumption 2]{andrieu2010pmcmc}.

\paragraph{Retained-weight control.}
For a posterior density $\pi(\theta)>0$, let $W$ be the normalized fresh likelihood estimator, with law $Q_\theta$ and mean one. Consider an independent fresh-weight proposal with parameter density $q(\theta,\theta')$. At a retained state $(\theta,w)$, the usual pseudo-marginal ratio gives
\[
 \Prb(\mathrm{accept}\mid\theta,w)\leq
 \min\left\{1,\frac{H(\theta)}w\right\},\qquad
 H(\theta)=\frac{\int\pi(\theta')q(\theta',\theta)\,d\theta'}{\pi(\theta)}.
\]
Indeed, bounding $\min(1,r)$ by $r$ cancels the forward proposal on its support, and integration over the proposed weight uses $\int w'Q_{\theta'}(dw')=1$; extending the remaining nonnegative integral over all reverse proposals gives the bound. If $q$ is a fixed $d$-dimensional Gaussian random walk with covariance $\Sigma\succ0$, then
\[
 H(\theta)\leq\frac{(2\pi)^{-d/2}|\Sigma|^{-1/2}}{\pi(\theta)}<\infty
\]
at every point with finite positive posterior density. Conditional on such a retained state, the no-acceptance event gives the finite-horizon bound
\[
 \E_{\theta,w}(A_B-\pi h)^2
 \geq\{h(\theta)-\pi h\}^2(1-H(\theta)/w)_+^B.
\]
The expectation starts before the first of the $B$ transitions. This bound links retained weight to the chosen posterior functional. It applies to fresh auxiliary proposals; a conditional auxiliary kernel requires its own transition ratio and acceptance calculation.

\paragraph{The common inferential target.}
Observation refinement changes both terms in~\eqref{eq:risk}. After the observation is fixed, algorithms targeting the same posterior share $v_Y(y)$. Their total-risk difference equals their computational-MSE difference. This gives a common criterion for studying observation choice and the allocation of auxiliary computation. An allocation method can be evaluated by the event-functional error it removes at a specified cost.

\paragraph{Sequential and time-budget comparisons.}
Adaptive particle filters can generate different particle populations and resampling times under the two channels. Proposition~\ref{prop:PF} establishes likelihood unbiasedness for each channel; Proposition~\ref{prop:weights} specifies the shared-proposal condition for its weight comparison. The exact risk formulas in the constructions above use a stationary start and a deterministic iteration count. Equation~\eqref{eq:risk} also covers prescribed nonstationary starts and time-based stopping, subject to its conditional-independence and integrability assumptions. Such comparisons estimate $e_Y(y)$ for the actual algorithm and budget.

\section{Conditioning, initialization, and reference accuracy}\label{supp:risk-conditioning}
\subsection{Evaluation under a common observation}
Let $H=h(\Theta)$ be square integrable, let $F$ be a fine observation,
and let $C=T(F)$. Put $p_Y=\E(H\mid Y)$ and $v_Y=\Var(H\mid Y)$.
A coarse procedure returns $A_C=a(C,U_C)$, where $U_C$ is independent
of $(\Theta,F)$; a fine procedure returns $A_F=b(F,U_F)$ under the same
independence requirement. The algorithmic randomizations may be coupled
with each other. Assume both outputs are square integrable. Tuning constants
are part of the specified procedures; data-dependent tuning must be included
in their information sets.

\begin{proposition}[Three risk comparisons]
Define $e_Y(y)=\E[(A_Y-p_Y(y))^2\mid Y=y]$ and
$b_C(c)=\E[A_C\mid C=c]-p_C(c)$. Each procedure's own conditional risk is
\[
 j_Y(y)=v_Y(y)+e_Y(y).
\]
For a common fine observation $F=f$, with $c=T(f)$ and
$\delta=p_C(c)-p_F(f)$, the risks of both procedures under this conditioning are
\begin{align}
 r_F(f)&=v_F(f)+e_F(f),\\
 r_C(f)&=v_F(f)+e_C(c)+\delta^2+2\delta b_C(c).
\end{align}
Consequently,
\[
 r_F(f)-r_C(f)=e_F(f)-e_C(c)-\delta^2-2\delta b_C(c).
\]
Under prior predictive averaging,
\[
 \E\{j_F(F)-j_C(C)\}
 =\E\{r_F(F)-r_C(F)\}
 =\E e_F(F)-\E e_C(C)-\E(p_F-p_C)^2.
\]
\end{proposition}
\begin{proof}
Conditional on an observation and the procedure's randomization, the
conditional expectation of the posterior residual is zero. Expanding the
squared error gives $j_Y=v_Y+e_Y$ and
$r_Y(f)=v_F(f)+\E[(A_Y-p_F(f))^2\mid F=f]$.
The law of $A_C$ conditional on $F=f$ is its law at input $c$.
Expand $A_C-p_F=(A_C-p_C)+\delta$ to obtain the formula for $r_C$.
The tower property gives $\E(p_F\mid C)=p_C$, so
$\E[\delta b_C(C)]=0$. The same property gives
$\E v_C-\E v_F=\E(p_F-p_C)^2$. Substitution proves the averaged identities.
\end{proof}

At a fixed paired observation, $j_F(f)-j_C(c)$ and $r_F(f)-r_C(f)$
evaluate different conditional risk comparisons. Their difference is
\[
 [r_F-r_C]-[j_F-j_C]=v_C-v_F-\delta^2-2\delta b_C.
\]
This correction is available from saved estimates of the posterior mean.
For an event functional, the empirical common-observation risk contrast is
\[
 \overline{(A_F-p_F)^2-(A_C-p_F)^2}.
\]
It requires a fine posterior reference for evaluation; the coarse procedure
continues to use its coarse input. This additional analysis does not change
any previously specified experimental endpoint.

Within a fixed observation channel, two algorithms share $p_Y$ and $v_Y$.
The difference of their own conditional risks is exactly their computational
MSE difference, for arbitrary prescribed initializations and stopping rules
that satisfy the conditional independence requirement.

\subsection{An exact formula for a nonstationary start}
Consider the transition kernel
$P=\lambda I+(1-\lambda)\Pi$, where $\Pi(x,\cdot)=\pi(\cdot)$
and $0\leq\lambda<1$. This is the exact parameter kernel of the
one-particle matching construction with a prior independence proposal,
once a positive initial likelihood estimate is present. Let $X_0\sim\mu$
and define $p=\pi h$, $v=\pi[(h-p)^2]$,
$u_0=\mu[(h-p)^2]<\infty$, and $b_0=\mu h-p$.
For the average after $B\geq1$ transitions, put
\[
 A_B=B^{-1}\sum_{i=1}^{B}h(X_i),\quad
 D_B(\lambda)=\frac{B+2\sum_{k=1}^{B-1}(B-k)\lambda^k}{B^2},\quad
 T_B(\lambda)=\frac{\sum_{j=1}^B(2j-1)\lambda^j}{B^2}.
\]
\begin{proposition}[Initial-distribution correction]
For the specified kernel and initial law,
\begin{align}
 \E_\mu(A_B-p)&=\frac{b_0}{B}\sum_{i=1}^B\lambda^i,\\
 \E_\mu(A_B-p)^2&=vD_B(\lambda)+(u_0-v)T_B(\lambda).
\end{align}
\end{proposition}
\begin{proof}
Set $f=h-p$. Then $P^if=\lambda^if$ and
$\mu P^if^2=v+(u_0-v)\lambda^i$. For $1\leq i\leq j\leq B$,
\[
 \E_\mu[f(X_i)f(X_j)]
 =\lambda^{j-i}\mu P^if^2
 =v\lambda^{j-i}+(u_0-v)\lambda^j.
\]
The mean formula follows from the first identity. Sum the displayed second
moments over all ordered pairs $(i,j)$. The stationary terms sum to
$B^2vD_B$; exactly $2j-1$ ordered pairs have maximum index $j$.
Their remaining terms sum to $B^2(u_0-v)T_B$. Divide by $B^2$.
\end{proof}

The stationary formula is recovered when $u_0=v$. The sign of the
initialization correction depends on $u_0-v$.
For the binary signal model with accuracy $s=3/5$, a prior draw as the
initial parameter, one matching particle, and $B=10$, the coarse total
risk is $0.275$ and the fine total risk is $0.3030048828125$.
The initialization is independent of the unknown parameter given the data.
The coarse and fine procedures use the same initial parameter law.
The construction therefore also gives a risk reversal with this specified
nonstationary initialization. It is a finite-iteration result for the stated
kernel. An adaptive sequential PF with a Gaussian parameter proposal has a
different kernel; a CPU budget produces a random number of transitions.

\subsection{A different quadrature parameterization for reference checks}
Let $F_0$ be the continuous marginal prior CDF on $[-4,4]$ and $Q_0=F_0^{-1}$.
For independent prior coordinates and event $x<y$, write
$x=Q_0(u)$ and $y=Q_0(u+(1-u)v)$, with $0<u,v<1$.
For any nonnegative likelihood $L$,
\[
 \int_{x<y}L(x,y)\,dF_0(x)dF_0(y)
 =\int_0^1\!\int_0^1 (1-u)
 L\{Q_0(u),Q_0(u+(1-u)v)\}\,dv\,du.
\]
The change of variables $a=F_0(x)$, $b=F_0(y)$ first removes both prior
densities. The triangular transformation $a=u$, $b=u+(1-u)v$ then has
Jacobian $1-u$, proving the identity. Swapping the coordinates gives the
complement integral. For $L=1$, both integrals equal $1/2$.
This provides a different parameterization for an independent quadrature
implementation. Agreement with the existing rule is a numerical check;
it does not by itself certify the error of a truncated-state likelihood.

\section{Paired transcription posterior-risk experiments}
\label{supp:paired-risk}

\subsection{Target, initialization, and estimator}
The paired record consists of 20 total counts and the corresponding
allele-specific captured counts, with capture probability $0.6$.
The latent initial state is $(0,0,0,0)$. The prior coordinates are
independent $N(0,0.75^2)$ variables truncated to $[-4,4]$.
The event is $h(\theta)=\mathbf1\{\theta_1<\theta_2\}$.
The four parameter starts are $(-0.75,0)$, $(0,-0.75)$, $(0,0.75)$,
and $(0.75,0)$, with equal weights. They specify an initial distribution
for algorithm evaluation. The proposal covariance is
\[
 \Sigma=\begin{pmatrix}
 0.3792006236571427&-0.09789489409573962\\
 -0.09789489409573963&0.23672738321031908
 \end{pmatrix}.
\]
These parameters are fixed across channels and particle counts.
The bootstrap filter propagates the reaction process and retains
normalized particle weights when no resampling is performed.
Systematic resampling is selected when effective sample size is below
half the particle count. The likelihood estimate is the product of
weighted observation increments. The adaptive-resampling argument in
the observation-theory section verifies its unbiasedness for each channel.

The baseline and initial intervention use CPU budgets of 30, 60, 120,
and 240 seconds; 120 seconds is primary.
Charged intervals include initialization attempts, parameter mapping,
and transition operations. Imports, validation, scheduling, checkpoint
output, reference integration, and serialization are excluded. The event
average includes every completed post-transition state within the budget,
including repeated states after rejection. The transition crossing a
budget is recorded and excluded from that prefix. A prefix with no
completed transition returns the initial event value. Eight zero-estimate
initialization attempts trigger the same fallback, with failure flagged.
No burn-in is removed. Fixed-30-transition estimates are separate endpoints.

The formal baseline uses $N=600$ and 32 new paired repetitions per
start. The initial intervention uses 20 new groups per start, each
containing both channels at $N=600$ and $N=1200$.
Within an intervention group, proposal innovations and acceptance
uniforms are shared by transition index; particle-filter streams are
independent across counts, channels, and operations. Twenty complete
replicate blocks are randomly ordered, with random start and particle-count
order inside blocks. The two channels run concurrently with two workers.
The formal samples exclude their respective pilots.

The initial intervention continues until both 240 CPU seconds and 30 transitions
are reached, with guards of 10,000 transitions and 1,800 measured algorithm
wall seconds. All 320 trajectories finish; no initialization failure,
guard termination, or interruption occurs. There are 13,729 transitions
and 320 initialization operations. One proposal is outside prior support,
so the recorded PF count is 14,048. The 154 zero-likelihood estimates
produce ordinary rejections and remain in the record. The baseline
contains one interrupted pair after both channels had passed the
120-second endpoint; all 128 primary pairs are complete. Its 240-second
results retain the recorded interruption sensitivity.

\subsection{Independent precision confirmation}
A separate confirmation uses 310 four-cell groups and 1,240 trajectories.
The four starts receive $(n_1,n_2,n_3,n_4)=(40,110,70,90)$ groups.
The initial 320 trajectories provide the variance estimates used to
choose this allocation and are excluded from the new estimate.
All starts retain analysis weight $1/4$. Ten complete allocation
blocks each contain $(4,11,7,9)$ groups, randomly ordered within a block.
The four cells of each group are submitted in randomized order to four
workers. The next group starts after all four cells finish.
Scientific random streams are separate from the scheduling randomization.

Each trajectory stops after the first complete operation reaching
120 CPU seconds. The 30- and 60-second prefixes are retained.
The initialization fallback and the transition and algorithm-wall guards
are the same as in the initial intervention. This confirmation has no
240-second or fixed-transition extension. The sample size is fixed
before execution, with no stopping or extension based on effect signs
or interval widths.

All trajectories complete in one uninterrupted session.
There are 23,169 transitions, 1,240 initialization operations,
and 24,407 recorded PF calls. Two proposals fall outside prior support.
The 308 zero-likelihood candidates produce rejections; every
initialization succeeds on its first attempt. Total charged CPU is
153,103.234 seconds and elapsed session time is 39,591.693 seconds.
Four worker processes execute the cells, with numerical-library
thread counts set to one. The execution uses Python 3.14.7,
NumPy 2.5.2, and SciPy 1.18.0 on Windows.
CPU budgets use measured process time for each algorithmic operation.

A preceding fixed-parameter calibration replays 48 PF inputs with
two and four workers. Likelihoods, effective-sample-size diagnostics,
and resampling counts agree for every matched input. The wall-time
speed ratio is 2.031 and the total algorithm-CPU ratio is 0.982.
These are execution measurements. The new confirmation uses its
actual four-worker CPU costs; the historical two-worker samples
retain their separate role in planning and reporting.
Group boundaries record elapsed time, power status, and memory.
Every algorithmic operation records its worker, CPU time, and wall time.

\subsection{Fresh-estimator diagnostic}
A separate fixed-parameter bank contains 32 independent likelihood
estimates for each channel, particle count, and parameter start,
giving 512 evaluations. For the estimates $L_1,\ldots,L_{32}$ in
one cell, let $\overline L=32^{-1}\sum_{i=1}^{32}L_i$. We report
the sample relative variance
\[
 \widehat{\operatorname{RV}}
 =\frac{\sum_{i=1}^{32}(L_i-\overline L)^2}
 {31\overline L^2}.
\]
All 16 cells have positive sample means. The calculation uses a
common rescaling of the likelihood estimates within each cell.
Table~\ref{tab:particle-mechanism} gives the eight particle-count
comparisons. These evaluations are separate from the trajectory
runs and their charged CPU budgets.

\begin{table}[htbp]
\centering
\caption{Fresh-estimator diagnostic at the four parameter starts.
Each channel--parameter--particle-count cell has 32 independent
likelihood evaluations. Entries are sample relative-variance point
estimates, defined in the text.}
\label{tab:particle-mechanism}
\begin{tabular}{llrr}
\hline
Channel & Parameter & $N=600$ & $N=1200$\\\hline
Coarse & $(-0.75,0)$ & 0.079524 & 0.065208\\
Coarse & $(0,-0.75)$ & 0.142700 & 0.097886\\
Coarse & $(0,0.75)$ & 0.123751 & 0.029898\\
Coarse & $(0.75,0)$ & 0.056212 & 0.034021\\
Fine & $(-0.75,0)$ & 0.793846 & 0.294056\\
Fine & $(0,-0.75)$ & 0.854912 & 0.512706\\
Fine & $(0,0.75)$ & 0.290949 & 0.105624\\
Fine & $(0.75,0)$ & 0.200773 & 0.070066\\
\hline\end{tabular}
\end{table}

\subsection{Paired uncertainty and risk definitions}
For a scalar paired contrast $D_{si}$ at start $s$, with $n_s$ replicates
at start $s$, define
\[
 \widehat D=\frac14\sum_{s=1}^4\overline D_s,\qquad
 \widehat{\operatorname{Var}}(\widehat D)
 =\frac1{16}\sum_{s=1}^4\frac{S_s^2}{n_s}.
\]
Let $v_s=S_s^2/(16n_s)$. The approximate interval uses
$t_{0.975,\widehat d}$, with
\[
 \widehat d=\frac{(\sum_s v_s)^2}{\sum_s v_s^2/(n_s-1)}.
\]
For the intervention, the four cells are first combined within each
start--replicate group. This preserves their pairing in the variance.
Intervals concern algorithmic repetition at the fixed data and starts.
Auxiliary intervals are unadjusted. Each complete allocation block first
averages the four start-specific means with equal weights. The ten
confirmation block means give a supplementary $t_9$ interval.

The baseline primary own-observation contrast is $j_F(f)-j_C(c)$.
For the intervention, the primary contrast is
\[
 I=\{e_{F,1200}-e_{F,600}\}-\{e_{C,1200}-e_{C,600}\}.
\]
Posterior variance cancels within each channel. The planned primary
interval half-width is $0.02$. The baseline achieves $0.017366$.
The initial intervention yields $0.037681$, with
$I=-0.014773$ and Monte Carlo standard error $0.018798$.
The independent confirmation gives
$I=0.011613$, standard error $0.008641$, and interval
$[-0.005405,0.028632]$. Its half-width, $0.017019$, is below the
specified $0.02$ target. The ten-block interval is
$[-0.008245,0.031471]$. Both intervals include zero. In the initial intervention, the fixed-30-transition interaction is $0.009331$, with interval
$[-0.014732,0.033395]$.

The common-full-observation comparison was added in the subsequent
conditioning audit. It uses
\[
 \overline{(A_F-p_F)^2-(A_C-p_F)^2}.
\]
Its evaluation reference is $p_F$; the coarse algorithm receives $c$.
Both conditional definitions are reported in the main text. The confirmation
common-$F$ contrasts are $-0.092139$ at $N=600$ and $-0.074753$
at $N=1200$. Their intervals are $[-0.109558,-0.074721]$ and
$[-0.093069,-0.056437]$, respectively.

\begin{table}[htbp]
\centering
\caption{Initial 320-trajectory particle-count study at all specified horizons.
Differences are N1200 minus N600 within channel; interaction is fine
minus coarse intervention. Parentheses give Monte Carlo standard errors.
The 120-second interaction is primary.}
\label{tab:supp-particle-initial}
\begin{tabular}{lrrr}
\hline
Horizon & Coarse & Fine & Interaction\\\hline
30 CPU s & $+0.047451$ & $+0.082716$ & $+0.035264$\\
 & $(0.012775)$ & $(0.018559)$ & $(0.019334)$\\
60 CPU s & $+0.037532$ & $+0.069081$ & $+0.031549$\\
 & $(0.011087)$ & $(0.024284)$ & $(0.025851)$\\
120 CPU s & $+0.032954$ & $+0.018181$ & $-0.014773$\\
 & $(0.010431)$ & $(0.017197)$ & $(0.018798)$\\
240 CPU s & $+0.013497$ & $+0.013554$ & $+0.000057$\\
 & $(0.007713)$ & $(0.008676)$ & $(0.010337)$\\
30 transitions & $-0.011439$ & $-0.002108$ & $+0.009331$\\
 & $(0.005831)$ & $(0.010043)$ & $(0.011929)$\\
\hline\end{tabular}\end{table}

\begin{table}[htbp]
\centering
\caption{Independent 1,240-trajectory precision confirmation.
Differences are $N=1200$ minus $N=600$ within channel; interaction
is fine minus coarse difference. Parentheses give stratified
Monte Carlo standard errors. The 120-second interaction is primary;
30- and 60-second results are auxiliary prefixes of the same trajectories.}
\label{tab:precision-confirmation}
\begin{tabular}{lrrr}
\hline
Budget & Coarse & Fine & Interaction\\ \hline
30 CPU s & $+0.050458$ & $+0.079306$ & $+0.028849$\\
 & $(0.006971)$ & $(0.010866)$ & $(0.010804)$\\
60 CPU s & $+0.041607$ & $+0.077262$ & $+0.035656$\\
 & $(0.006592)$ & $(0.010565)$ & $(0.011189)$\\
120 CPU s & $+0.035635$ & $+0.047248$ & $+0.011613$\\
 & $(0.006170)$ & $(0.007530)$ & $(0.008641)$\\
\hline\end{tabular}\end{table}

\begin{table}[htbp]
\centering
\caption{Start-specific primary interactions in the precision confirmation.
The overall estimate averages the four start-specific estimates
with weight $1/4$. SD is the sample standard deviation of the
within-group interaction.}
\label{tab:supp-precision-starts}
\begin{tabular}{lrrr}
\hline
Initial parameter & Groups & Interaction & SD\\ \hline
$(-0.75,0)$ & 40 & $-0.004808$ & $0.116062$\\
$(0,-0.75)$ & 110 & $+0.047360$ & $0.177219$\\
$(0,0.75)$ & 70 & $-0.063032$ & $0.129270$\\
$(0.75,0)$ & 90 & $+0.066933$ & $0.173285$\\
\hline\end{tabular}\end{table}

\clearpage
\subsection{Finite-state posterior reference}
The original quadrature integrates over the event and complement
triangles in parameter coordinates, using orders 24 and 28 for the
coarse and fine channels. The transcript cutoff is 64.
The probability references are $0.4877788868410538$ and
$0.7911916099887355$. A separate parameterization uses the prior CDF
identity in the conditioning section. It smooths each unit-interval
coordinate by $H(s)=s^2(3-2s)$, with derivative $6s(1-s)$.
The transformed weight is $(1-u)H'(s)H'(t)w_sw_t$,
where $u=H(s)$ and $v=H(t)$. It absorbs both prior densities.
The integration sum therefore uses the likelihood times this weight.

The 20-order transformed rule has 800 nodes per channel, covers both
prior triangles, and gives probabilities $0.4877950044209498$ and
$0.7911930054990873$. Probability differences are $1.61\times10^{-5}$
and $1.40\times10^{-6}$. The absolute log-normalizer differences are
$3.90\times10^{-5}$ and $5.41\times10^{-6}$; both coordinate-mean
differences are below $2.36\times10^{-5}$. All pass the fixed thresholds
of $10^{-4}$ for event probability and $10^{-3}$ for the other quantities.

Earlier complete 8/12-order grids compare cutoffs 40 and 64, with maximum
probability difference $1.35\times10^{-10}$. At 14 selected locations
per channel, the 64/96 log-likelihood differences are at most
$1.42\times10^{-14}$. The new grid has 135 boundary-flagged nodes per
channel at threshold $10^{-8}$; their normalized quadrature masses
are $0.00039651$ and $0.00080469$. All nodes are retained.
The blocked finite-state generator conserves mass. These comparisons
are numerical diagnostics; they do not provide a certified error bound
for the infinite-state posterior. In particular, the boundary-flagged
mass is not such a bound.

\subsection{Reference perturbations and saved-trajectory sensitivity}
For an event estimator $A\in[0,1]$, let $\mu=\E A$ and
$e(p)=\E(A-p)^2$, with $p,p+\delta\in[0,1]$. Squaring gives
\[
 e(p+\delta)-e(p)=2(p-\mu)\delta+\delta^2.
\]
Adding posterior variance yields
\[
 j(p)=\E A^2+p(1-2\mu),\qquad
 j(p+\delta)-j(p)=\delta(1-2\mu).
\]
For two algorithms at a common posterior, the quadratic terms cancel,
so their MSE contrast changes by $2\delta(\mu_0-\mu_1)$.
These are exact identities for fixed output laws and also hold for
the saved empirical averages. A numerical difference between two
reference rules gives the corresponding sensitivity of each risk estimate.

Substituting the transformed-rule references gives a baseline
own-observation contrast of $-0.1003672085$, compared with the original
$-0.1003659461$. The initial intervention interaction changes from
$-0.0147730881$ to $-0.0147726410$.
The corresponding new intervals are
\[
 [-0.1177327303,-0.0830016867],\qquad
 [-0.0524531343,0.0229078523].
\]
The earlier $\pm0.001$ and $\pm0.01$ reference scenarios are retained
as additional sensitivity calculations. Reference integration and all PF diagnostic calculations
are outside the reported trajectory CPU budgets.

For the independent confirmation, the transformed references give
$I=0.0116138029$ and interval $[-0.0054045430,0.0286321488]$.
The interaction changes by $4.82\times10^{-7}$.
At the four reference corners with independent channel shifts of
$\pm0.001$, its estimates range from $0.011469$ to $0.011758$;
the largest interval half-width is $0.017069$.
With shifts of $\pm0.01$, the range is $[0.010170,0.013057]$ and
the largest half-width is $0.017541$. All checked interaction
intervals include zero. The coarse and fine particle-count intervention
intervals remain positive at these corners.

\section{Allocation for the paired record}
\label{supp:paired-rccr}

This study uses the paired record, prior, event $h(\theta)=
\mathbf1\{\theta_1<\theta_2\}$, and four equally weighted starts of
the preceding section. It connects the observation comparison to
allocation within each observation channel. The original independent
filter is retained as a baseline. The new auxiliary coupling preserves
the adaptive systematic resampling used for this record.

\subsection{The complete auxiliary law}

For channel $j$ and parameter $\theta$, let $M_{j,\theta}$ be the law
of a fresh bootstrap filter with $N=600$ particles. Its record contains
the reaction times and channels for each particle on every observation
interval, the resulting states, normalized weights, effective sample
sizes, ancestor vectors, and systematic offsets. Reaction times are
relative to the start of their observation interval. Removing the
additional recorded variables recovers the original fresh filter.

Resampling occurs when the effective sample size is below $N/2$.
When resampling is skipped, the normalized weights are retained.
For pre-observation weights $w_i$ and observation potential $g$, the
likelihood increment is $\sum_i w_i g(X_i)$. At every resampling
decision, include an offset $S\sim\operatorname{Unif}[0,1/N)$,
independent of the preceding record. An unused offset is generated
from a separate random stream. For weights $w$, define
\[
 A_w(S)_i=\min\left\{k:\sum_{l=1}^k w_l\geq S+(i-1)/N\right\},
 \qquad i=1,\ldots,N.
\]
Boundary equalities have probability zero. Each systematic offspring
count has conditional expectation $Nw_i$.

Let $\eta_t^N$ be the weighted empirical measure after observation $t$
and let $Z_t^N$ be the product of likelihood increments. Conditional
on the record through that weight update, the resampling decision is
known. Unbiased offspring counts, or the retained weights when no
resampling occurs, give
\[
 \mathbb E[Z_{t+1}^N\eta_{t+1}^N(\varphi)\mid\mathcal F_t]
 =Z_t^N\eta_t^N\{M_{t+1}(g_{t+1}\varphi)\}.
\]
Here $M_{t+1}$ denotes the latent propagation kernel, and the expectation
integrates the possible resampling after time $t$ and the next
propagation. Induction from the fixed initial state, with $\varphi=1$
at the last observation, proves
$\mathbb E_{M_{j,\theta}}\widehat L_j(\theta,U)=L_j(\theta)$.

\subsection{A joint law for the two auxiliary records}

For paired particle states $x,y$, write $a_k(x)$ and $b_k(y)$ for
their reaction rates under $\theta$ and $\theta'$. With reaction
increments $\nu_k$, use the joint generator
\begin{align*}
 \mathcal GF(x,y)={}&\sum_k\min(a_k,b_k)
    \{F(x+\nu_k,y+\nu_k)-F(x,y)\}\\
 &+\sum_k(a_k-b_k)_+\{F(x+\nu_k,y)-F(x,y)\}\\
 &+\sum_k(b_k-a_k)_+\{F(x,y+\nu_k)-F(x,y)\}.
\end{align*}
For a function of $x$ alone, the first two rates add to $a_k$.
For a function of $y$ alone, the first and third add to $b_k$.
The marginal processes therefore have the required reaction laws.
The generator is invariant under exchanging the two parameter--state
pairs. Promoter counts are bounded, synthesis rates are bounded at
fixed parameters, and degradation rates grow linearly in molecule
counts. These properties give nonexplosion on finite time intervals
for the marginal and joint processes.

The conditional implementation takes the current complete path as
given. Between its events, it simulates candidate-only reactions with
rates $(b_k-a_k)_+$. At a current event of type $k$, it synchronizes
the candidate with probability $\min(a_k,b_k)/a_k$, using pre-event
rates, and then updates the current state. The current path intensity
$a_k$ is independent of the candidate history. Thus its path-density
factor does not involve the candidate, and conditioning leaves exactly
the preceding candidate-only rates and synchronization probabilities.
Saved current event rates are positive, and replay must recover the
saved endpoint.

At resampling, the two sides share one offset $S$. Each side applies
its own ESS decision and uses either $A_w(S)$ or the identity ancestor
map. This defines a joint law for the complete ancestor vectors in
all four combinations of resampling decisions. Each marginal has
the correct systematic resampling distribution, and exchanging the
two sides leaves the construction unchanged.

\begin{proposition}[Exchange identity for the paired filter]
\label{prop:paired-record-exchange}
Alternating the joint propagation, each side's observation-weight
update, and the shared-offset resampling defines a joint record law
$J_{j,\theta,\theta'}$ with marginals $M_{j,\theta}$ and
$M_{j,\theta'}$. Its conditional kernel $C^{\rm split}$ satisfies
\[
 M_{j,\theta}(dU)C^{\rm split}_{j,\theta,\theta'}(U,dU')
 =M_{j,\theta'}(dU')C^{\rm split}_{j,\theta',\theta}(U',dU).
\]
\end{proposition}

\begin{proof}
At each propagation and resampling stage, a side's marginal law
depends on its own past and equals its fresh-filter stage law.
Induction over the finitely many stages gives the two claimed
marginals. Every joint stage is invariant under exchanging the
parameter labels and the records. Their composition has the same
symmetry. Finite nonexplosive paths, finite index vectors, and real
offsets form a standard Borel record space, so regular conditional
distributions exist. The explicit stagewise simulation gives a version
jointly measurable in the parameter pair and current record.
Disintegrating the symmetric joint law gives the identity. The path-conditional construction and shared offsets
above implement this disintegration at each stage.

To include a zero-weight record, first define a full simulation with
zero likelihood thereafter and a fixed legal continuation rule for
that side, such as uniform weights at subsequent stages. Use the same
rule on either side. The joint-stage construction still has the
required marginals and symmetry. Mapping each record to its prefix
through the first zero increment preserves that symmetry. A positive
target record is complete; conditioning on it and truncating a
zero-weight candidate therefore gives the implemented conditional
law. Such a candidate has zero extended-target mass and is rejected.
\end{proof}

The coupling used in the experiment is
\[
 C^\epsilon=0.9C^{\rm split}+0.1M_{j,\theta'}.
\]
The fresh component updates the whole auxiliary record, including
its offsets. The independent product of the two fresh laws is
exchange symmetric, so the mixture satisfies the same identity.
Selective allocation uses this mixture on cross-event proposals and
the fresh law on same-event proposals. Full coupling uses the mixture
on both routes. For a common parameter proposal, these methods have
the same cross-event subkernel and satisfy
Proposition~\ref{prop:bridge-mixture}.

These statements concern exact simulation and arithmetic. The numerical
implementation checks path validity, likelihood recursion, and the
acceptance ratio. A damaged record, nonfinite computation, or an
exceeded path guard is a failed run. These outcomes do not count as
ordinary likelihood-zero rejections.

\subsection{Frozen proposal and comparison design}

The original proposal $q_0$ is the existing Gaussian random walk.
The residual proposal $q_R$ multiplies its cross-event density by
$\exp(-r)$, where
$r=\min\{\log\widehat L_j(\theta,U)-m_j(\theta),0\}$.
Writing $p_0(\theta)$ for the original crossing probability, its
normalizing constant is
$1-p_0(\theta)+e^{-r}p_0(\theta)$.
For $d=\theta_2-\theta_1$, the proposal difference has variance
$s^2=0.8117177950589409$, so $p_0(\theta)=\Phi(-|d|/s)$.
The full acceptance ratio includes the nonuniform prior, the particle
likelihood, and both proposal densities. The reverse residual is
computed from the candidate record.

The fixed quadratic surfaces $m_j$ use the existing likelihood
quadrature, with 1,152 coarse-channel nodes and 1,568 fine-channel
nodes. Weighted least squares uses prior, likelihood, and quadrature
weights. The surfaces center the proposal residual; the particle
likelihood remains in the acceptance ratio. Both surfaces were fixed
before the risk pilot.

There are seven cells. Each channel includes the original independent
filter with $q_0$ and selective allocation with $q_R$. The fine channel
also includes all four combinations of $q_0$ or $q_R$ with full or
selective coupling. Initial parameters and proposal/acceptance seed
streams are paired within each start--replicate block. Filter streams
use the same seed assignments across methods within a channel;
realized trajectories can diverge after different decisions. Execution
uses two workers and randomized cell order within blocks.

The initial pilot has two replicates per start and cell, or 56
trajectories. One interrupted seven-cell block was replaced in full
with new seeds. The replacement rule and the new block were fixed
before comparing risks. The interrupted block remains in the archived
auxiliary results and is excluded from the primary pilot summaries.
The extension adds four replicates per start and cell, or 112
trajectories. All 112 completed their budget without an initialization
failure or a triggered guard.

The charged clock is process CPU time. It includes initialization of
the retained filter, transition computation, saved-record construction
and replay within those operations, and evaluation of the event.
Engine construction, compression, checkpoint writing, and separate
path audits are outside this clock. Initialization permits up to eight
attempts. The estimator averages completed post-transition event values
at 30, 60, and 120 seconds, including their preceding initialization
cost. If none completes, it uses the initial event value. The first
transition exceeding a checkpoint contributes cost information but no
event value to that prefix. The 120-second run stops once that budget
is reached; no operation in the extension had a zero recorded cost
or hit one of the checkpoints exactly. All summaries use this recorded
completed-prefix convention.

\subsection{Risk estimates and uncertainty}

Let $d_{hr}$ be a paired squared-error contrast at start $h$ and
replicate $r$. The reported contrast is
$\widehat\Delta=\frac14\sum_{h=1}^4\overline d_h$.
With $n$ replicates per start and within-start sample variance $s_h^2$,
the variance estimate and Welch degrees of freedom are
\[
 \widehat v=\frac{\sum_h s_h^2}{16n},\qquad
 \widehat\nu=(n-1)\frac{(\sum_h s_h^2)^2}{\sum_h s_h^4}.
\]
The intervals are approximate pointwise 95\% $t$ intervals conditional
on the fixed data and starts. They have no simultaneous-coverage
adjustment. The initial, extension, and combined summaries use
$n=2,4,6$, respectively. The quadrature event probabilities are the
same fixed references as in the observation study.
In the initial pilot, all eight observed selection contrasts under
$q_R$ at 30 seconds are zero. Their empirical variance is zero, so
the table gives no $t$ interval for that cell.

Tables~\ref{tab:rccr-cells-initial}--\ref{tab:rccr-cells-combined}
give all cell MSEs and mean completed-transition counts.
Tables~\ref{tab:rccr-contrasts-initial}--\ref{tab:rccr-contrasts-combined}
give all seven planned contrasts at all three checkpoints. The
interaction is the selective-minus-full contrast under $q_R$ minus
the same contrast under $q_0$. It measures whether the allocation
contrast changes with the parameter proposal.

\begin{table}[p]\centering\spacingset{1}\small
\caption{Initial pilot: all seven cells at three CPU budgets. MSE uses the fixed channel-specific quadrature reference. Steps denotes the mean number of completed transitions.}\label{tab:rccr-cells-initial}
\begin{tabular}{llrrrr}\hline
Channel & Method & CPU seconds & MSE & Steps & Runs \\\hline
Coarse & $q_0$, independent & 30 & 0.199189 & 3.38 & 8 \\
Coarse & $q_0$, independent & 60 & 0.162929 & 8.38 & 8 \\
Coarse & $q_0$, independent & 120 & 0.080470 & 19.12 & 8 \\
Coarse & $q_R$, selective & 30 & 0.229003 & 2.38 & 8 \\
Coarse & $q_R$, selective & 60 & 0.189209 & 6.50 & 8 \\
Coarse & $q_R$, selective & 120 & 0.122071 & 16.00 & 8 \\
Fine & $q_0$, independent & 30 & 0.270435 & 3.25 & 8 \\
Fine & $q_0$, independent & 60 & 0.135898 & 7.88 & 8 \\
Fine & $q_0$, independent & 120 & 0.048303 & 17.00 & 8 \\
Fine & $q_0$, full & 30 & 0.261906 & 3.38 & 8 \\
Fine & $q_0$, full & 60 & 0.086012 & 7.62 & 8 \\
Fine & $q_0$, full & 120 & 0.036718 & 16.25 & 8 \\
Fine & $q_0$, selective & 30 & 0.355547 & 3.12 & 8 \\
Fine & $q_0$, selective & 60 & 0.068054 & 7.62 & 8 \\
Fine & $q_0$, selective & 120 & 0.031317 & 17.38 & 8 \\
Fine & $q_R$, full & 30 & 0.111161 & 3.12 & 8 \\
Fine & $q_R$, full & 60 & 0.112142 & 8.12 & 8 \\
Fine & $q_R$, full & 120 & 0.049218 & 16.88 & 8 \\
Fine & $q_R$, selective & 30 & 0.111161 & 3.25 & 8 \\
Fine & $q_R$, selective & 60 & 0.067537 & 8.38 & 8 \\
Fine & $q_R$, selective & 120 & 0.030711 & 17.75 & 8 \\
\hline\end{tabular}\end{table}
\begin{table}[p]\centering\spacingset{1}\small
\caption{Initial pilot: paired MSE contrasts and approximate pointwise 95\% intervals. Selection means selective minus full coupling. Proposal means $q_R$ minus $q_0$.}\label{tab:rccr-contrasts-initial}
\begin{tabular}{lrrr}\hline
Contrast & CPU seconds & Difference & 95\% interval \\\hline
Coarse: RCCR--baseline & 30 & $0.029815$ & $[-0.331565,0.391194]$ \\
Coarse: RCCR--baseline & 60 & $0.026280$ & $[-0.078131,0.130691]$ \\
Coarse: RCCR--baseline & 120 & $0.041601$ & $[-0.170414,0.253616]$ \\
Fine: RCCR--baseline & 30 & $-0.159275$ & $[-0.581661,0.263112]$ \\
Fine: RCCR--baseline & 60 & $-0.068360$ & $[-0.318091,0.181370]$ \\
Fine: RCCR--baseline & 120 & $-0.017592$ & $[-0.417098,0.381915]$ \\
Selection at $q_0$ & 30 & $0.093641$ & $[-0.756875,0.944157]$ \\
Selection at $q_0$ & 60 & $-0.017958$ & $[-0.241161,0.205244]$ \\
Selection at $q_0$ & 120 & $-0.005401$ & $[-0.014733,0.003931]$ \\
Selection at $q_R$ & 30 & $0.000000$ & $\text{---}^{\dagger}$ \\
Selection at $q_R$ & 60 & $-0.044604$ & $[-0.528613,0.439405]$ \\
Selection at $q_R$ & 120 & $-0.018507$ & $[-0.280624,0.243610]$ \\
Proposal under full & 30 & $-0.150745$ & $[-0.592377,0.290887]$ \\
Proposal under full & 60 & $0.026129$ & $[-0.220965,0.273224]$ \\
Proposal under full & 120 & $0.012501$ & $[-0.008136,0.033137]$ \\
Proposal under selective & 30 & $-0.244386$ & $[-0.628456,0.139684]$ \\
Proposal under selective & 60 & $-0.000516$ & $[-0.223217,0.222184]$ \\
Proposal under selective & 120 & $-0.000605$ & $[-0.237688,0.236478]$ \\
Interaction & 30 & $-0.093641$ & $[-0.944157,0.756875]$ \\
Interaction & 60 & $-0.026646$ & $[-0.297765,0.244473]$ \\
Interaction & 120 & $-0.013106$ & $[-0.282049,0.255837]$ \\
\hline\end{tabular}
\par\smallskip $\dagger$: all eight observed paired contrasts are zero; a $t$ interval is not reported.
\end{table}
\begin{table}[p]\centering\spacingset{1}\small
\caption{Extension: all seven cells at three CPU budgets. MSE uses the fixed channel-specific quadrature reference. Steps denotes the mean number of completed transitions.}\label{tab:rccr-cells-extension}
\begin{tabular}{llrrrr}\hline
Channel & Method & CPU seconds & MSE & Steps & Runs \\\hline
Coarse & $q_0$, independent & 30 & 0.150270 & 5.38 & 16 \\
Coarse & $q_0$, independent & 60 & 0.107815 & 12.31 & 16 \\
Coarse & $q_0$, independent & 120 & 0.063147 & 26.69 & 16 \\
Coarse & $q_R$, selective & 30 & 0.174550 & 4.81 & 16 \\
Coarse & $q_R$, selective & 60 & 0.111291 & 11.00 & 16 \\
Coarse & $q_R$, selective & 120 & 0.048382 & 23.25 & 16 \\
Fine & $q_0$, independent & 30 & 0.134680 & 5.31 & 16 \\
Fine & $q_0$, independent & 60 & 0.062294 & 11.88 & 16 \\
Fine & $q_0$, independent & 120 & 0.059039 & 26.12 & 16 \\
Fine & $q_0$, full & 30 & 0.190015 & 4.44 & 16 \\
Fine & $q_0$, full & 60 & 0.108349 & 10.25 & 16 \\
Fine & $q_0$, full & 120 & 0.054530 & 22.19 & 16 \\
Fine & $q_0$, selective & 30 & 0.164241 & 4.75 & 16 \\
Fine & $q_0$, selective & 60 & 0.118460 & 11.06 & 16 \\
Fine & $q_0$, selective & 120 & 0.075457 & 24.12 & 16 \\
Fine & $q_R$, full & 30 & 0.313047 & 4.50 & 16 \\
Fine & $q_R$, full & 60 & 0.200269 & 10.44 & 16 \\
Fine & $q_R$, full & 120 & 0.089902 & 22.19 & 16 \\
Fine & $q_R$, selective & 30 & 0.291986 & 4.69 & 16 \\
Fine & $q_R$, selective & 60 & 0.131914 & 10.94 & 16 \\
Fine & $q_R$, selective & 120 & 0.058508 & 23.00 & 16 \\
\hline\end{tabular}\end{table}
\begin{table}[p]\centering\spacingset{1}\small
\caption{Extension: paired MSE contrasts and approximate pointwise 95\% intervals. Selection means selective minus full coupling. Proposal means $q_R$ minus $q_0$.}\label{tab:rccr-contrasts-extension}
\begin{tabular}{lrrr}\hline
Contrast & CPU seconds & Difference & 95\% interval \\\hline
Coarse: RCCR--baseline & 30 & $0.024280$ & $[-0.053120,0.101681]$ \\
Coarse: RCCR--baseline & 60 & $0.003475$ & $[-0.067856,0.074807]$ \\
Coarse: RCCR--baseline & 120 & $-0.014764$ & $[-0.075791,0.046263]$ \\
Fine: RCCR--baseline & 30 & $0.157306$ & $[0.017252,0.297360]$ \\
Fine: RCCR--baseline & 60 & $0.069620$ & $[-0.086410,0.225651]$ \\
Fine: RCCR--baseline & 120 & $-0.000531$ & $[-0.078407,0.077344]$ \\
Selection at $q_0$ & 30 & $-0.025775$ & $[-0.116479,0.064930]$ \\
Selection at $q_0$ & 60 & $0.010111$ & $[-0.047945,0.068166]$ \\
Selection at $q_0$ & 120 & $0.020927$ & $[-0.018261,0.060115]$ \\
Selection at $q_R$ & 30 & $-0.021061$ & $[-0.068073,0.025950]$ \\
Selection at $q_R$ & 60 & $-0.068355$ & $[-0.177492,0.040782]$ \\
Selection at $q_R$ & 120 & $-0.031394$ & $[-0.076841,0.014053]$ \\
Proposal under full & 30 & $0.123032$ & $[-0.039314,0.285378]$ \\
Proposal under full & 60 & $0.091920$ & $[-0.051204,0.235044]$ \\
Proposal under full & 120 & $0.035372$ & $[-0.026978,0.097722]$ \\
Proposal under selective & 30 & $0.127746$ & $[-0.028126,0.283617]$ \\
Proposal under selective & 60 & $0.013454$ & $[-0.158540,0.185449]$ \\
Proposal under selective & 120 & $-0.016949$ & $[-0.113411,0.079512]$ \\
Interaction & 30 & $0.004713$ & $[-0.108389,0.117815]$ \\
Interaction & 60 & $-0.078466$ & $[-0.176316,0.019384]$ \\
Interaction & 120 & $-0.052321$ & $[-0.121167,0.016526]$ \\
\hline\end{tabular}
\end{table}
\begin{table}[p]\centering\spacingset{1}\small
\caption{Combined sample: all seven cells at three CPU budgets. MSE uses the fixed channel-specific quadrature reference. Steps denotes the mean number of completed transitions.}\label{tab:rccr-cells-combined}
\begin{tabular}{llrrrr}\hline
Channel & Method & CPU seconds & MSE & Steps & Runs \\\hline
Coarse & $q_0$, independent & 30 & 0.166576 & 4.71 & 24 \\
Coarse & $q_0$, independent & 60 & 0.126186 & 11.00 & 24 \\
Coarse & $q_0$, independent & 120 & 0.068921 & 24.17 & 24 \\
Coarse & $q_R$, selective & 30 & 0.192701 & 4.00 & 24 \\
Coarse & $q_R$, selective & 60 & 0.137263 & 9.50 & 24 \\
Coarse & $q_R$, selective & 120 & 0.072945 & 20.83 & 24 \\
Fine & $q_0$, independent & 30 & 0.179932 & 4.62 & 24 \\
Fine & $q_0$, independent & 60 & 0.086829 & 10.54 & 24 \\
Fine & $q_0$, independent & 120 & 0.055461 & 23.08 & 24 \\
Fine & $q_0$, full & 30 & 0.213979 & 4.08 & 24 \\
Fine & $q_0$, full & 60 & 0.100904 & 9.38 & 24 \\
Fine & $q_0$, full & 120 & 0.048593 & 20.21 & 24 \\
Fine & $q_0$, selective & 30 & 0.228009 & 4.21 & 24 \\
Fine & $q_0$, selective & 60 & 0.101658 & 9.92 & 24 \\
Fine & $q_0$, selective & 120 & 0.060744 & 21.88 & 24 \\
Fine & $q_R$, full & 30 & 0.245752 & 4.04 & 24 \\
Fine & $q_R$, full & 60 & 0.170893 & 9.67 & 24 \\
Fine & $q_R$, full & 120 & 0.076341 & 20.42 & 24 \\
Fine & $q_R$, selective & 30 & 0.231711 & 4.21 & 24 \\
Fine & $q_R$, selective & 60 & 0.110455 & 10.08 & 24 \\
Fine & $q_R$, selective & 120 & 0.049242 & 21.25 & 24 \\
\hline\end{tabular}\end{table}
\begin{table}[p]\centering\spacingset{1}\small
\caption{Combined sample: paired MSE contrasts and approximate pointwise 95\% intervals. Selection means selective minus full coupling. Proposal means $q_R$ minus $q_0$.}\label{tab:rccr-contrasts-combined}
\begin{tabular}{lrrr}\hline
Contrast & CPU seconds & Difference & 95\% interval \\\hline
Coarse: RCCR--baseline & 30 & $0.026125$ & $[-0.030433,0.082683]$ \\
Coarse: RCCR--baseline & 60 & $0.011077$ & $[-0.039504,0.061658]$ \\
Coarse: RCCR--baseline & 120 & $0.004024$ & $[-0.048828,0.056877]$ \\
Fine: RCCR--baseline & 30 & $0.051779$ & $[-0.086466,0.190025]$ \\
Fine: RCCR--baseline & 60 & $0.023627$ & $[-0.076621,0.123874]$ \\
Fine: RCCR--baseline & 120 & $-0.006218$ & $[-0.053947,0.041511]$ \\
Selection at $q_0$ & 30 & $0.014031$ & $[-0.060876,0.088937]$ \\
Selection at $q_0$ & 60 & $0.000754$ & $[-0.033899,0.035407]$ \\
Selection at $q_0$ & 120 & $0.012151$ & $[-0.012344,0.036646]$ \\
Selection at $q_R$ & 30 & $-0.014041$ & $[-0.040637,0.012555]$ \\
Selection at $q_R$ & 60 & $-0.060438$ & $[-0.127823,0.006947]$ \\
Selection at $q_R$ & 120 & $-0.027098$ & $[-0.060472,0.006275]$ \\
Proposal under full & 30 & $0.031773$ & $[-0.108259,0.171806]$ \\
Proposal under full & 60 & $0.069990$ & $[-0.022739,0.162718]$ \\
Proposal under full & 120 & $0.027748$ & $[-0.011492,0.066988]$ \\
Proposal under selective & 30 & $0.003702$ & $[-0.142334,0.149737]$ \\
Proposal under selective & 60 & $0.008797$ & $[-0.090324,0.107919]$ \\
Proposal under selective & 120 & $-0.011501$ & $[-0.066939,0.043937]$ \\
Interaction & 30 & $-0.028071$ & $[-0.108670,0.052527]$ \\
Interaction & 60 & $-0.061192$ & $[-0.125179,0.002795]$ \\
Interaction & 120 & $-0.039249$ & $[-0.085490,0.006991]$ \\
\hline\end{tabular}
\end{table}

At 120 seconds all seven extension intervals include zero. The fine
channel's RCCR-minus-baseline estimate is $-0.000531$ in the extension
and $-0.006218$ in the combined sample. At 30 seconds the extension
estimate is $0.157306$, with interval $[0.017252,0.297360]$; at 60
seconds it is $0.069620$, with interval $[-0.086410,0.225651]$.
These prefixes describe the time dependence of the completed event
average. The pilot supplies component comparisons; it does not
establish a channel-wide ordering of risk.

Initialization times also differ between batches. For matched
start--channel--method cells, the extension's mean initialization CPU
time is between 0.352 and 0.962 times the initial batch's mean.
These are different random replicates, so the ratio does not isolate
a cause. The combined sample describes a mixture with weights $1/3$
and $2/3$ for the two batches. A sensitivity calculation retains
separate start-by-batch variances with these fixed weights. At 120
seconds its intervals are $[-0.059059,0.067107]$ for the coarse baseline
contrast, $[-0.063092,0.050655]$ for the fine baseline contrast,
$[-0.063993,0.009797]$ for selection under $q_R$, and
$[-0.088392,0.009894]$ for the interaction.

For a contrast $\sum_k a_k(A_k-\Psi)^2$ within one channel, where
$\sum_k a_k=0$, shifting the reference to $\Psi+e$ changes its mean
by $-2e\sum_k a_k\mathbb E A_k$. The archived calculations evaluate
$e=\pm0.001$ and $\pm0.01$. These are sensitivity values, not certified
quadrature error bounds. The uncertainty reported here is simulation
uncertainty conditional on the numerical reference.

The extension recorded 2,905 filter calls, including 62 zero-likelihood
outcomes, and retained 1,996 complete filter records for path checks.
Their reaction paths, observation weights, likelihood recursion, and
acceptance ratios were checked against the frozen implementation.
The original independent baseline does not retain complete reaction
paths; it is checked from its saved transition and likelihood records.
The exchange identity above supplies the auxiliary-law argument.
The network experiment and this paired-record pilot retain their own
initialization laws, computing conventions, and inferential scopes.

\section{Two-region retained-weight constructions}
\label{subsec:two-region-inversion}

The retained-state bound in Theorem~1 of the main paper applies on general
state spaces. A two-region
reduction makes the ordering reversal explicit.

Let \(J\in\{0,1\}\) and suppose the proposal always proposes the other
region. Write
\[
\Delta
=
\log
\frac{\pi(J=1)}{\pi(J=0)},
\qquad
S(\Delta)
=
\frac{e^\Delta}{1+e^\Delta}.
\]
Let \(W_0\) and \(W_1\) be almost surely finite, nonnegative, mean-one
pseudo-marginal weights. Initialize the chain from its extended target
conditional on \(J=0\). The retained weight then has size-biased law
\[
\Pr(W_0^\star\in dw)
=
w\,\Pr(W_0\in dw).
\]
Given \(W_0^\star=w\), a proposal to region \(1\) draws an independent
fresh \(W_1\) and is accepted with probability
\[
a_\Delta(w)
=
\mathbb E
\left[
1\wedge
\frac{e^\Delta W_1}{w}
\right].
\]
For \(B\ge1\), define
\[
Q_B(\Delta;W_0,W_1)
=
\mathbb E
\left[
\{1-a_\Delta(W_0^\star)\}^{B}
\right].
\]
Thus \(Q_B\) is the conditional probability that none of transitions
\(1,\ldots,B\) is accepted.

\begin{corollary}[Information--computation inversion]
\label{cor:finite-budget-inversion}
Fix \(B\ge1\).

\noindent
(i) For fixed laws of \(W_0\) and \(W_1\),
\(Q_B(\Delta;W_0,W_1)\) is nonincreasing in \(\Delta\).

\noindent
(ii) Let \(\{W_{0,\lambda}\}\) be almost surely finite, nonnegative, mean-one
weights whose size-biased versions satisfy
\[
W_{0,\lambda}^\star
\longrightarrow
\infty
\qquad
\text{in probability}.
\]
For every fixed finite \(\Delta\) and every fixed almost surely finite,
nonnegative, mean-one law \(W_1\),
\[
Q_B(\Delta;W_{0,\lambda},W_1)
\longrightarrow1.
\]

\noindent
(iii) Let
\((\Delta_c,W_{0,c},W_{1,c})\) have almost surely finite, nonnegative, mean-one
weights, and let \(\Delta_r>\Delta_c\). There exists a strictly
positive mean-one law \(W_{0,r}\), with
\(W_{1,r}\stackrel{d}{=}W_{1,c}\), such that
\[
S(\Delta_r)>S(\Delta_c)
\]
and
\[
Q_B(\Delta_r;W_{0,r},W_{1,r})
>
Q_B(\Delta_c;W_{0,c},W_{1,c}).
\]

\noindent
(iv) Suppose in addition that \(\Delta_r>\Delta_c\geq0\), set
\(\psi(J)=J\), and initialize each chain from its extended target
conditional on \(J=0\). Let \(\mathcal R^{(0)}_{B,c}\) and
\(\mathcal R^{(0)}_{B,r,\lambda}\) denote their actual mean-squared
errors about \(S(\Delta_c)\) and \(S(\Delta_r)\), respectively. For the
two-point family \(W_{0,r,\lambda}\) below and
\(W_{1,r}\stackrel d= W_{1,c}\), every sufficiently large finite
\(\lambda\) satisfies
\[
S(\Delta_r)>S(\Delta_c),
\qquad
\mathcal R^{(0)}_{B,r,\lambda}
>
\mathcal R^{(0)}_{B,c}.
\]
\end{corollary}

For part (iii), a finite two-point law is enough. For sufficiently
large \(\lambda>1\), let
\[
W_{0,r,\lambda}
=
\begin{cases}
\lambda,
&
\text{with probability }1/(\lambda+1),
\\[1mm]
1/\lambda,
&
\text{with probability }\lambda/(\lambda+1).
\end{cases}
\]
This law has mean one, while its size-biased version equals
\(\lambda\) with probability \(\lambda/(\lambda+1)\). Statistical
preference for region \(1\) can therefore increase while both
conditional finite-horizon mean-squared error and the probability of
remaining in region \(0\) increase.

A lognormal specialization gives the same mechanism in the Gaussian
log-noise model used in pseudo-marginal efficiency and large-sample
analyses \citep{sherlock2015,schmon2021,andrieu2022comparison}.

\begin{corollary}[Lognormal specialization]
\label{cor:lognormal-inversion}
Suppose
\[
\log W_j
\sim
\mathcal N
\left(
-\frac{\sigma_j^2}{2},
\sigma_j^2
\right),
\qquad
j\in\{0,1\}.
\]
Then
\[
\log W_0^\star
\sim
\mathcal N
\left(
+\frac{\sigma_0^2}{2},
\sigma_0^2
\right).
\]
For every fixed finite \(\Delta\) and \(B\ge1\),
\[
Q_B(\Delta;\sigma_0,\sigma_1)
\longrightarrow1
\qquad
\text{as }\sigma_0\to\infty.
\]
For \(x=\log W_0^\star\) and \(\sigma_1>0\),
\[
a_\Delta(x)
=
\Phi
\left(
\frac{\Delta-x-\sigma_1^2/2}{\sigma_1}
\right)
+
e^{\Delta-x}
\Phi
\left(
\frac{x-\Delta-\sigma_1^2/2}{\sigma_1}
\right).
\]
\end{corollary}

For \(B=50\) and \(\sigma_1=0.5\), the settings
\((\Delta,\sigma_0)=(1,0.5)\) and \((2,3)\) give
\[
(S,Q_{50})=(0.731059,3.636\times10^{-11})
\quad\hbox{and}\quad(0.880797,0.271117).
\]
Thus a larger posterior probability of region 1 accompanies a larger
conditional non-escape probability. Figure~\ref{fig:two-region-inversion}
shows this escape calculation. Corollary~\ref{cor:finite-budget-inversion}(iv)
separately gives the existence result for actual mean-squared error.

\begin{figure}[H]
\centering
\includegraphics[width=\textwidth]
{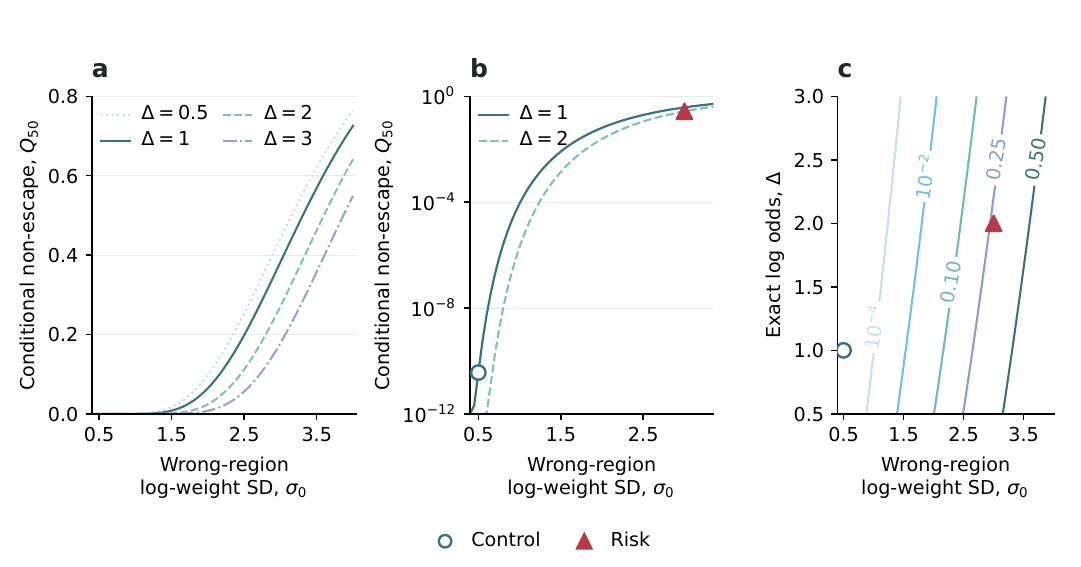}
\caption{Analytic information--computation inversion in the lognormal
two-region specialization.
(a) Conditional non-escape probability \(Q_{50}\) against the
 region-0 log-weight standard deviation \(\sigma_0\), with
proposed-region standard deviation fixed at \(0.5\).
(b) The same quantity on a logarithmic scale for the two marked witness
settings.
(c) Contours of \(Q_{50}\) over \((\sigma_0,\Delta)\), showing the
opposing effects of exact separation and retained-weight variability.}
\label{fig:two-region-inversion}
\end{figure}

The monotonicity in Corollary~\ref{cor:finite-budget-inversion}(i)
concerns first escape. Actual mean-squared error also depends on the
changing posterior mean and later returns between regions. Part (iv)
establishes a joint-change reversal with the specified initial law.
Section~\ref{supp:proof-information-computation-inversion} gives an
exact-estimator example that distinguishes these two orderings.

\section{Additional matched observation diagnostics}
\label{subsec:matched-inversion}
The controlled statistical comparison uses a reference
log-likelihood contrast
\[
\Delta_\ell(y)
=
\ell_{\mathrm{ref}}(\theta^\star;y)
-
\ell_{\mathrm{ref}}(\theta^{\mathrm{alt}};y),
\]
where \(\theta^\star=(\log0.3,0)\) reduces allele-2 activation
and \(\theta^{\mathrm{alt}}=(0,\log0.3)\) reduces its synthesis rate.
Both points are fixed before the stochastic likelihood comparison. The contrast
\(\Delta_\ell\) measures how the observation map separates the two
specified parameter values, whereas \(\mathcal R_B(\psi)\) measures
finite-run recovery of a posterior functional. The main text
gives a retained-state bound and an observation-mechanism construction.
The paired experiment evaluates posterior-event risk directly.

Allele-specific counts retain the captured allele allocation. Their
sum gives the total-count record. Across three paired datasets, the
reference log-likelihood contrasts increase under allele resolution
(Table~\ref{tab:supp-inversion}). Figure~\ref{fig:information-computation-inversion}(a)
shows the three paired changes.

The particle-filter screen uses replicate 2. Panel (b) shows the
absolute-error distribution among finite evaluations and reports
nonfinite counts separately. Section~\ref{supp:tables} gives the
complete error summaries and the comparison on common parameter nodes.
The chain diagnostic uses replicate 1 and retained iterations 101--500.
Panel (c) reports initialization fractions and the maximum within-chain
rejection fraction. These designated datasets and evaluation rules are
fixed across the two observation channels.

\begin{figure}
\centering
\includegraphics[width=\textwidth]{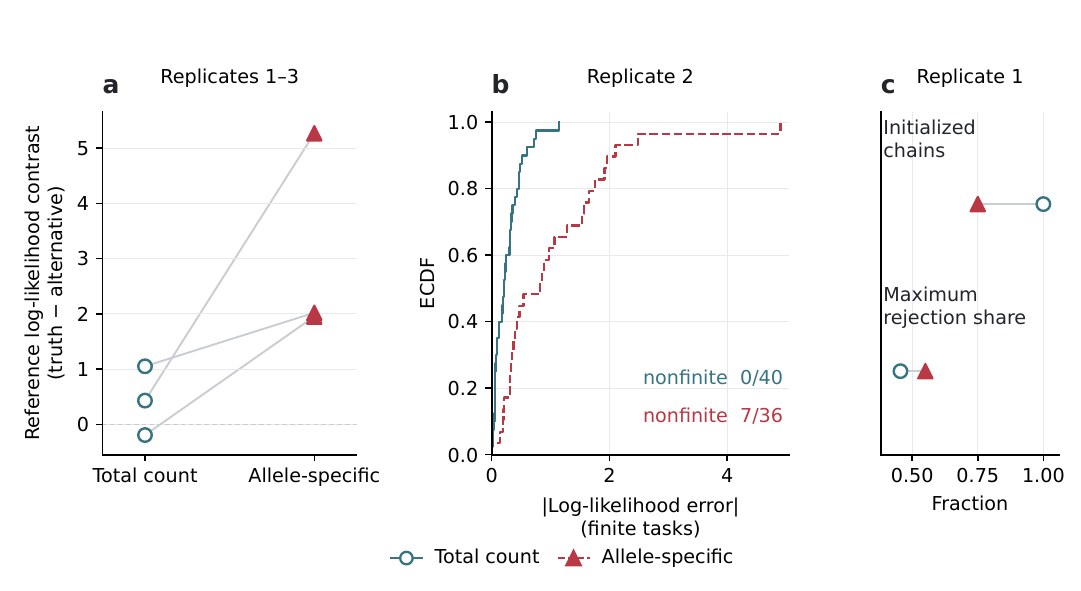}
\caption{Observation contrast and computational diagnostics.
(a) Fixed-point reference log-likelihood contrasts for three paired
datasets. (b) Absolute log-likelihood error over finite particle-filter
tasks from replicate 2; nonfinite counts are given separately.
(c) Initialization fractions and maximum within-chain rejection fractions
from replicate 1, over retained iterations 101--500.}
\label{fig:information-computation-inversion}
\end{figure}

An offline comparison selects a filter separately in each observation
regime from three fixed candidates and evaluates it on a disjoint
holdout. It selects a bootstrap filter with \(N=1200\) for total count
and a bridge filter with \(N=600\) for allele-specific counts. The
resulting maximum rejection fractions are 0.430 and 0.615. This comparison
describes the selected configurations within the candidate set;
the selection rule and holdout results are in Section~\ref{supp:carpmd}.

The posterior-event analysis in Section~\ref{supp:paired-risk} evaluates
completed-prefix functional risk on a paired record.

\section{Predator--prey event-recovery diagnostics}
\label{sec:cross-system}

\subsection{Lotka--Volterra model and retained-state diagnostic}
\label{subsec:lv-challenge}

The second system is a bivariate predator--prey Markov jump
process with reactions
\[
X_1\longrightarrow 2X_1,
\qquad
X_1+X_2\longrightarrow 2X_2,
\qquad
X_2\longrightarrow\varnothing,
\]
and hazards
\[
c_1X_1,
\qquad
c_2X_1X_2,
\qquad
c_3X_2.
\]
This Lotka--Volterra family has been used previously in
pseudo-marginal state-space experiments
\citep{choppala2018lv}. Prey--predator models have also served as
test beds for coupled particle filters
\citep{jacob2016coupling}. The split reaction coupling used by the
cross-route auxiliary action is an established coupling for stochastic
population processes
\citep{anderson2018split}.

We use
\[
X(1)=(100,100),
\qquad
(c_1,c_2,c_3)
=
(0.5,0.0025,0.3),
\]
with observations at \(t=1,\ldots,50\), including the initial time,
and Gaussian measurement error
\[
Y_t
=
X_t+\varepsilon_t,
\qquad
\varepsilon_t
\sim
\mathcal N(0,0.5I_2).
\]
During inference, \(c_2\) is fixed and independent
\(\operatorname{Uniform}(0,1)\) priors are assigned to \(c_1\) and
\(c_3\). We use
\[
z_1
=
\log(c_1/0.5),
\qquad
z_2
=
\log(c_3/0.3).
\]
The event is
\[
h_{\mathrm{LV}}(z)
=
\mathbf 1\{z_1-z_2>0\}.
\]
This class compares the prey-reproduction and predator-death rates
relative to their generating balance and was specified before the
cross-system method outcomes.

The retained-state signature appears again. At four designated
anchor groups, the Spearman correlations between retained
log-likelihood and the subsequent holding length are
\[
0.727520,\qquad
0.568880,\qquad
0.416844,\qquad
0.691301.
\]
All four are positive. The correlations use finite particle-filter
Metropolis--Hastings rejections and exclude support rejections and zero
or nonfinite particle-filter estimates.

A matched estimator-quality control raises the particle count from
\(N=32\) to \(N=128\) while preserving the dataset, target, proposal
geometry, and retained-state reference surface. Across the four anchor
groups, the standard deviation of the log-likelihood estimator falls
to \(0.371\)--\(0.561\) of its \(N=32\) value. The corresponding
retained-state correlations are
\[
0.596256,\qquad
0.316242,\qquad
0.467180,\qquad
0.545647,
\]
and remain positive in all four groups. The observed retained-state
association therefore persists after a substantial attenuation of
marginal likelihood-estimator variability.

The Supplementary Material reports a full-kernel comparison and its
computational accounting. The experiment below uses a disjoint state bank
to examine how residual exposure and event membership enter short-run recovery.

\subsection{Residual-stratified recovery}
\label{subsec:prospective-mechanism}

A separate bank contains 32 initial states, with 16 in each
residual channel and four trajectories per method and state.
Four fixed parameter anchors each contribute four negative- and four
positive-residual states, selected at residual quantiles 0.2, 0.4,
0.6, and 0.8. This design examines variation within specified
regions of the parameter space.

Let
\[
\mu_{\mathrm{LV}}
=
0.0428864
\]
be the independently calibrated reference-proxy class mass and define
the event leverage
\[
L_\Psi(z)
=
\{h_{\mathrm{LV}}(z)-\mu_{\mathrm{LV}}\}^2.
\]
If \(p_0\) is the baseline cross-route probability and \(r\) is the
retained residual measured before the method outcomes,
define
\[
p_{\mathrm{full}}
=
\operatorname{logit}^{-1}
\{
\operatorname{logit}(p_0)-r
\}
\]
and the negative-residual route-restoration exposure
\[
M_-
=
L_\Psi(z)
\max\{p_{\mathrm{full}}-p_0,0\}.
\]
The quantity \(M_-\) combines the functional leverage of the retained
state with the amount of cross-route probability restored by the
negative residual.

We score each trajectory by
\(\{(h_1+h_2+h_3)/3-\mu_{\mathrm{LV}}\}^2\),
using the same Gaussian reference proxy for every method. The
baseline-to-residual-tilt difference measures recovery about this proxy.
The residual-only kernel in this historical experiment uses the full
residual \(r\). The selective gene-network kernel uses \(\min(r,0)\).
Table~\ref{tab:supp-method-rules} gives the method definitions.

Across the 16 negative-residual states, the exposure and the
square-loss gain have Spearman correlation 0.7792. The four within-anchor
correlations are $-0.775$, $-0.633$, $0.800$, and $1.000$, in the
anchor order shown in Figure~\ref{fig:cross-system-mechanism}. Event membership
contributes to both the exposure and the loss: the leverage factors
for \(h=0\) and \(h=1\) are approximately 0.00184 and 0.91607.
Figure~\ref{fig:cross-system-mechanism} therefore displays the anchors
separately. The pooled association describes this stratified bank;
the within-anchor results identify its heterogeneity.

\begin{figure}
\centering
\includegraphics[width=\textwidth]{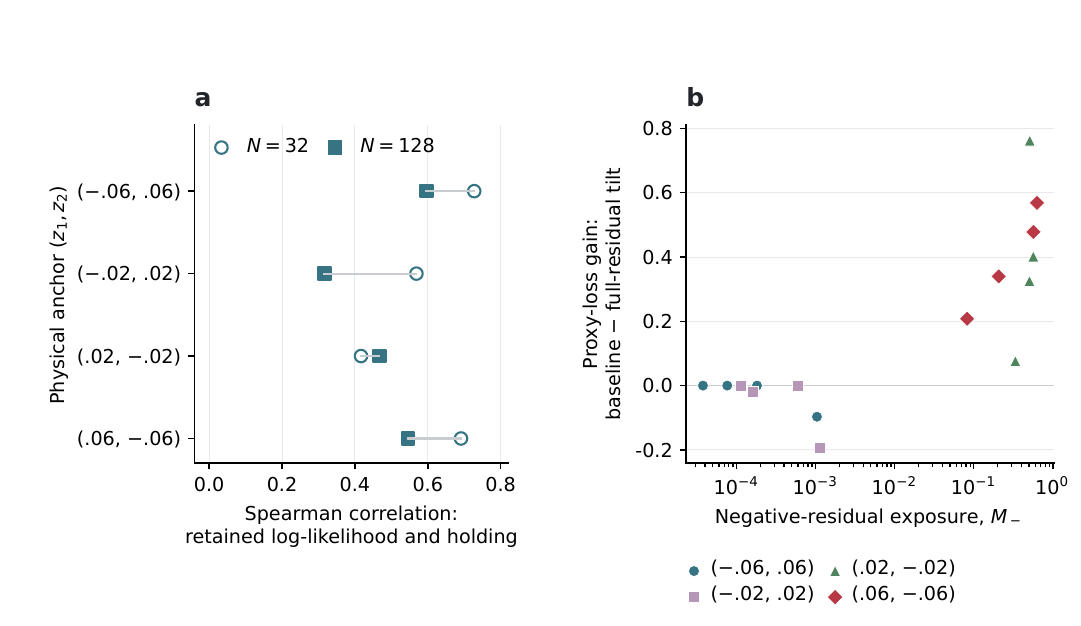}
\caption{Predator--prey diagnostics. (a) Retained log-likelihood and
holding-length correlations at \(N=32\) and \(N=128\).
(b) Negative-residual exposure and the baseline-to-full-residual-tilt
gain in proxy-centered trajectory squared loss. The four parameter
anchors are distinguished. Each point averages four trajectories
from one retained state.}
\label{fig:cross-system-mechanism}
\end{figure}

The particle-count comparison and the stratified recovery experiment
examine different features of the retained state. The former records
persistence as estimator noise changes; the latter shows how event
membership and residual location enter a short-horizon score.

\clearpage
\spacingset{1}
\bibliographystyle{jasa_pmcmc}
\bibliography{references}